\newtheorem{theorem}{Theorem}
\newtheorem{proposition}[theorem]{Proposition}
\newtheorem{lemma}[theorem]{Lemma}
\newtheorem{corollary}[theorem]{Corollary}
\newtheorem{definition}[theorem]{Definition}
\def\ZZ{\mathbbm{Z}}
\def\RR{\mathbbm{R}}
\def\CC{\mathbbm{C}}
\def\R{\mathbbm{R}}
\def\EE{\mathbbm{E}}
\def\AA{\mathcal{A}}
\def\ee{\epsilon}
\def\II{\mathcal{I}}
\def\MM{\mathcal{M}}
\def\op{\mathrm{op}}
\def\Id{\mathbbm{1}}
\def\RR{\mathcal{R}}
\def\PP{\mathcal{P}}
\def\II{\mathcal{I}}
\DeclareMathOperator{\Sym}{Sym}
\DeclareMathOperator{\tr}{tr}
\DeclareMathOperator{\Tr}{Tr}
\g@addto@macro{\endabstract}{\@setabstract}
\newcommand{\authorfootnotes}{\renewcommand\thefootnote{\@fnsymbol\c@footnote}}
\begin{document}

\begin{center}
  \LARGE 
  Improved Recovery Guarantees for Phase Retrieval from Coded Diffraction Patterns \par \bigskip

\normalsize
\authorfootnotes
D.\ Gross\textsuperscript{1,2}, 
F.\ Krahmer\textsuperscript{3},
R.\ Kueng\footnote{Corresponding author: richard.kueng@physik.uni-freiburg.de}\textsuperscript{2,4} \par \bigskip

\small
\textsuperscript{1}Institute for Theoretical Physics, University of
Cologne\par
\textsuperscript{2}Institute for Physics \& FDM, University of Freiburg\par
\textsuperscript{3}Institute for Numerical and Applied Mathematics, 
University of G{\"o}ttingen\par
\textsuperscript{4}ARC Centre for Engineered Quantum Systems, School
of Physics, The University of Sydney\par
 \bigskip
  \today
\end{center}

\begin{abstract} 
In this work we analyze the problem of phase retrieval from Fourier
measurements with random diffraction patterns.
To this end, we consider the recently introduced PhaseLift algorithm,
which expresses the problem in the language of convex optimization.
We provide recovery guarantees which
require $\mathcal{O}(\log^2 d)$ different diffraction patterns, thus
improving on recent results by Cand\`es et
al.~\cite{candes_masked_2013}, which require $\mathcal{O}(\log^4 d)$
different patterns.
\end{abstract}

\section{Introduction}

\subsection{The problem of phase retrieval}	\label{sub:general_intro}

In this work we are interested in the problem of \emph{phase retrieval} which is of considerable importance in many different areas of science, where capturing phase information is hard or even infeasible.
Problems of this kind occur, for example, in X-ray crystallography, diffraction imaging, and astronomy. 

More formally, \emph{phase retrieval} is the problem of recovering an
unknown complex vector $x \in \CC^d$ from \emph{amplitude}
measurements
\begin{equation}
y_i = | \langle a_i, x \rangle |^2 \quad i = 1, \ldots, m,	\label{eq:amplitude_measurements}
\end{equation}
for a given set of measurement vectors $a_1,\ldots, a_m \in \CC^d$. 
The observations $y$ are insensitive to a global phase change
$x \mapsto e^{i\phi} x$ 
-- hence  in the following, notions 
like ``recovery'' or ``injectivity'' are always implied to
mean ``up to a global phase''.
Clearly, the most fundamental question is: Which families of
measurement vectors $\{a_i\}$ allow for a recovery of $x$ in
principle? I.e., 
for which measurements is the map $x \mapsto y$ defined by
(\ref{eq:amplitude_measurements}) injective?

Approaches based on algebraic geometry 
(for example \cite{balan_signal_2006,heinosaari_quantum_2013}) have
established
that for determining $x$, $4d+{o}(1)$ \emph{generic} measurements are sufficient 
and $4d - \mathcal{O}(\log d)$  such observations are necessary.
Here, ``generic'' means that the measurement ensembles for which the
property fails to hold lie on a low-dimensional subvariety of the algebraic
variety of all tight measurement frames.

This notion of generic success, however, is mainly of theoretical
interest. Namely, injectivity alone neither gives an indication on
how to recover the unique solution, nor is there any chance to
directly generalize the results to the case of noisy measurements. It
should be noted, however, that recently the notion of injectivity has
been refined to capture aspects of stability with respect to noise
\cite{EM12}.
 
Paralleling these advances, there have been various attempts to find tractable recovery algorithms that yield recovery guarantees. Many of these approaches are based on a linear reformulation in matrix space, which is well-known in convex programming. 
The crucial underlying observation is that the quadratic constraints (\ref{eq:amplitude_measurements}) on $x$ are linear in the outer product $X=x x^*$:
\begin{equation*}
y_i = | \langle a_i, x \rangle |^2 = \tr \left( (a_i a_i^*) X \right). \label{eq:lift}
\end{equation*}
Balan et al.~\cite{balan_painless_2009} observed that for the right
choice of $d^2$ measurement vectors $a_i$, this linear system in the
entries of $X$ admits for a unique solution, so the problem can be
explicitly solved using linear algebra techniques. This approach,
however, does not make use of the low-rank structure of $X$, which is why the required number of measurements is so much larger than what is required for injectivity.

The \emph{PhaseLift} algorithm proposed by Cand\`es et
al.~\cite{candes_phase_2013,candes_phaselift_2012,candes_solving_2012}
uses in addition the property that $X$ is of rank one, so even when
the number of measurements is smaller than $d^2$ and there is an
entire affine space of matrices satisfying
\eqref{eq:lift}, $X$ is the solution of smallest rank.
While finding the smallest rank solution of a linear system is, in
general, NP hard, there are a number of algorithms known to recover
the smallest rank solution provided the system satisfies some
regularity conditions. The first such results were based on convex
relaxation (see, for example, 
\cite{recht_guaranteed_2010, candes_power_2010, gross_recovering_2011}). PhaseLift is also based on
this strategy. For measurement vectors drawn independently at random
from a Gaussian distribution, the number of measurements required to
guarantee recovery with high probability was shown to be of optimal
order, scaling linearly in the dimension
\cite{candes_phaselift_2012,candes_solving_2012} -- see also \cite{kueng_low_2014} for a comparable statement valid for recovering matrices of arbitrary rank.
A generalized version of this result---valid for projective measurements onto random subspaces rather than random vectors---was established in \cite{bachoc_signal_2012}.
Moreover, Ref.~\cite{pohl_phase_2013} even identifies a deterministic, explicitly engineered 
set of $4d-4$ measurement vectors and proves that PhaseLift will
successfully recover generic signals from the associated measurements.
Conversely, any complex vector is uniquely determined by $4d-4$ generic phaseless measurements \cite{conca_algebraic_2014}.

Since these first recovery guarantees for the phase retrieval problem, recovery guarantees have been proved for a number of more efficient algorithms closer to the heuristic approaches typically used in practice. For example, in \cite{alexeev_phase_2012}, an approach based on polarization is analyzed and in \cite{NJS13}, the authors study an alternating minimization algorithm. In both works, recovery guarantees  are again proved for Gaussian measurements.
Further numerical approaches have been proposed and studied in \cite{ehler_quasi_2014}.

To relate all these results to practice, the structure of applications
needs to be incorporated into the setup, which corresponds to reducing
randomness and considering structured measurements. For PhaseLift, the
first partial derandomization has been provided by the authors of this
paper, considering measurements sampled from spherical designs, that
is, polynomial-size sets 
which generalize the notion of a tight frame to higher-order
tensors
\cite{gross_partial_2013}. Recently, this result has been considerably improved in \cite{kueng_low_2014}. Arguably, these derandomized measurement setups are still mainly of theoretical interest. 

A structured measurement setup closer to applications is that of coded diffraction patterns. These correspond to the composition of diagonal matrices and the Fourier transform and model the modified application setup where diffraction masks are placed between the object and the screen as originally proposed in \cite{fannjiang2012phase}. The first recovery guarantees from masked Fourier measurements were provided for polarization based recovery \cite{bandeira_phase_2013}, where the design of the masks is very specific and intimately connected to the recovery algorithm. The required number of masks is $\mathcal{O}(\log d)$, which corresponds to $\mathcal{O}(d\log d)$ measurements.

For the PhaseLift algorithm, recovery guarantees from masked Fourier
measurements were first provided in \cite{candes_masked_2013}. The
results require  ${\mathcal{O}}(d\log^4 d)$ measurements and hold with
high probability when the masks are chosen at random, which is in line
with the observation from \cite{fannjiang2012phase} that random
diffraction patterns are particularly suitable. 

In this paper, we consider the same measurement setup as
\cite{candes_masked_2013}, but improve the bound on the required
number of measurements to ${\mathcal{O}}(d\log^2 d)$.

\section{Problem Setup and Main Results}

\subsection{Coded diffraction patterns}
\label{sec:coded}

As in \cite{candes_masked_2013}, we will work with the following
setup:

In every step, we collect the magnitudes of the discrete Fourier
transform (DFT) of a random  modulation of the unknown signal $x$.
Each such modulation pattern is modeled by a random diagonal matrix.
More formally, for $\omega:= \exp \left(\frac{2 \pi i}{d} \right)$ a
$d$-th root of unity and $\{e_1, \ldots, e_d\}$ the standard basis of
$\CC^d$, denote by 
\begin{equation}\label{eqn:1dft}
	f_k = \sum_{j=1}^d \omega^{jk} e_j
\end{equation}
the 
$k$-th discrete Fourier vector, normalized so that each entry has unit modulus.
Furthermore, consider the diagonal matrix
\begin{equation}
D_l = \sum_{i=1}^d \ee_{l,i} e_i e_i^* 	\label{eq:D}
\end{equation}
where the $\ee_{l,i}$'s are independent copies of a real-valued\footnote{
	Ref.~\cite{candes_masked_2013} also included a strongly related
	model where $\epsilon$ is a complex random variable. We have
	opted to keep $\epsilon$ real,
	which implies that the $D_l$ are hermitian. This, in turn, has
	allowed us to slightly
	simplify notation throughout. 
} random variable $\ee$ which obeys
\begin{eqnarray}
\EE [ \ee ] &=& \EE [ \ee^3 ] = 0, 	\nonumber \\ 
| \ee | & \leq & b
\quad \textrm{almost surely for some $b > 0$},	\label{eq:apriori_constraint} \\
\EE [ \ee^4 ] &=& 2 \; \EE [ \ee^2 ]^2 
\quad \textrm{and we define} \quad \nu:= \EE \left[ \ee^2 \right] .	\label{eq:moment_constraint}
\end{eqnarray}
Then
 the measurements are given by
\begin{equation}
y_{k,l} = | \langle f_k, D_l x \rangle |^2 \quad 1 \leq k \leq d, \quad 1 \leq l \leq L.	\label{eq:measurement_process}
\end{equation}
It turns out (Lemma~\ref{lem:near_isotropic} below) that 
condition \eqref{eq:moment_constraint} on $\ee$ ensures that the
measurement ensemble
forms a spherical $2$-design, which draws a connection to \cite{balan_painless_2009} and \cite{gross_partial_2013}.

As an example, the criteria above 
include the  model
\begin{equation} \label{eq:mask}
	\ee \sim
	\begin{cases}
	\sqrt{2} & \textrm{with prob. } 1/4,	\\
	0 & \textrm{with prob. } 1/2,	\\
	-\sqrt{2} & \textrm{with prob. } 1/4.
	\end{cases}
\end{equation}
which has been discussed in \cite{candes_masked_2013}.
In this case, each modulation is given by a Ra\-de\-mach\-er vector
with random erasures.

\subsection{Convex Relaxation}

Following \cite{balan_painless_2009}, we rewrite the measurement constraints as the inner product of two rank $1$ matrices, one representing the signal, the other one the measurement coefficients. In the coded diffraction setup, we obtain, as in \cite{candes_masked_2013}, that the inner product of 
 (\ref{eq:measurement_process}) can be translated into matrix form by applying the following ``lifts'':
\begin{equation*}
X :=  x x^*	
\quad \textrm{and} \quad
F_{k,l} := D_l f_k f_k^* D_l.
\end{equation*}
Occasionally, we will make use of the representation with respect to
the standard basis, which reads
\begin{equation}
F_{k,l} 
= \sum_{i,j=1}^d \ee_{l,i} \ee_{l,j} \omega^{k(i-j)} e_i e_j^*.	\label{eq:Fkl}
\end{equation}

With these definitions, the $dL$ individual linear measurements assume
the following form
\begin{eqnarray*}
y_{k,l} &=& 
\Tr \left(F_{k,l} X \right) \quad k=1,\ldots,d,\;1 \leq l \leq L.
\end{eqnarray*}
and the phase retrieval problem thus becomes the problem of finding rank 1
solutions $X=x x^*$ compatible with these affine constraints.
Rank-minimization over affine spaces is NP-hard in general. However,
it is now well-appreciated 
\cite{recht_guaranteed_2010, candes_power_2010,
gross_recovering_2011,candes_phaselift_2012}
that nuclear-norm based convex relaxations solve this problems
efficiently in many relevant instances.
Applied to phase retrieval, the relaxation becomes
\begin{eqnarray}
\textrm{argmin}_{X'} & &  \| X' \|_1	\label{eq:convex_program}\\
\textrm{subject to} & & \tr \left( F_{k,l} X' \right) = y_{k,l} \quad k = 1,\ldots n, \; 1 \leq l \leq L,	\nonumber \\
& & X' = \left(X' \right)^*	\nonumber	\\ 
& & X' \geq 0, \nonumber
\end{eqnarray}
which has been dubbed \emph{Phaselift} by its inventors
\cite{candes_phase_2013,candes_phaselift_2012,candes_solving_2012}. 
For this convex relaxation, recovery guarantees are known for
measurement vectors drawn i.i.d.\ at random from a Gaussian
distribution \cite{candes_phaselift_2012,candes_solving_2012},
$t$-designs \cite{gross_partial_2013, kueng_low_2014}, or in the masked Fourier setting \cite{candes_masked_2013}. 

We want to point out that access to additional information can considerably simplify Phaselift.
In particular, knowledge of the  signal's \emph{intensity} $y_0 = \| x \|_{\ell_2}^2$ results in an additional trace constraint which together with $X' \geq 0$ 
implies $\| X' \|_1 = y_0$ for any feasible $X'$. 
Consequently, minimizing the nuclear norm becomes redundant and \eqref{eq:convex_program} can be replaced by the feasibility problem
\begin{eqnarray}
\textrm{find} & & X' \label{eq:feasibility_program} \\
\textrm{subject to} & & \tr \left( F_{k,l} X' \right) = y_{k,l} \quad k = 1,\ldots n, \; 1 \leq l \leq L,	\nonumber \\
& & X' = \left(X' \right)^*	\nonumber	\\ 
& & \tr (X') = y_0,	\nonumber \\
& & X' \geq 0. \nonumber
\end{eqnarray}

\subsection{Our contribution}

In this paper, we adopt the setup from \cite{candes_masked_2013}.  Our
main message is that recovery of $x$ can be guaranteed already for
\begin{equation*}
  L \geq C \log^2 d
\end{equation*}
random diffraction patterns,
provided that the signal's intensity $y_0 = \| x \|_{\ell_2}^2$ is known\footnote{
This can, for instance, be achieved by starting the measurement
process with a trivial modulation pattern---i.e. $D_0$ corresponds to
the identity matrix---and summing up the $d$ corresponding
measurements \eqref{eq:measurement_process}.}.
This improves the
bound given in
\cite{candes_masked_2013}
by a factor of $\mathcal{O}(\log^2 d)$. 
It is significant, as it indicates
that the provably achievable rates are  approaching the
ultimate limit. 
Indeed, for the Rademacher masks
with random erasures introduced above, a lower bound for the number
of diffraction patterns required to allow for recovery with any algorithm is
given by $\mathcal{O}(\log d)$. This follows from a standard coupon collector's argument similar to the ones provided in
\cite{candes_power_2010,
gross_recovering_2011}. For completeness, the lower bound is precisely formulated and proved in Lemma~\ref{lem:lower} in the appendix.

Thus there cannot be a recovery algorithm requiring fewer than
$O(\log d)$ masks and there is only a single $\log$-factor separating
our results from an asymptotically tight solution.

More precisely, our version of \cite[Theorem 1.1]{candes_masked_2013}
reads:

\begin{theorem}[Main Theorem]	\label{thm:main_theorem}
	Let $x \in \CC^d$ be an unknown signal with $\|x\|_{\ell_2}=1$
	and let $d \geq 3$ be an odd
	number. 
	Suppose that $L$ complete Fourier measurements using independent
	random diffraction patterns (as defined in Section~\ref{sec:coded})
	are performed.

	Then with probability at least $(1-
	\mathrm{e}^{-\omega})$ Phaselift (the convex optimization problem
	\eqref{eq:convex_program} endowed with the additional constraint $\tr (X') = 1$, or the feasibility problem \eqref{eq:feasibility_program}) recovers $x$ up to a global phase, provided
	that 
	\begin{equation*} 
		L \geq C \omega \log^2 d.  
	\end{equation*} 
	Here, $\omega \geq 1$ is an arbitrary parameter and $C$ a
	dimension-independent
	constant that can be explicitly bounded.
\end{theorem}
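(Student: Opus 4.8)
The plan is to exploit the knowledge of the intensity $\|x\|_{\ell_2}^2=1$: it then suffices to prove that $X=xx^*$ is the \emph{only} matrix satisfying the constraints of the feasibility program \eqref{eq:feasibility_program} (equivalently, the only minimiser of the trace-constrained version of \eqref{eq:convex_program}), since Phaselift necessarily returns it. Following the by-now standard theory of low-rank matrix recovery by convex optimisation \cite{gross_recovering_2011,candes_phaselift_2012,gross_partial_2013}, I would reduce this uniqueness statement to two ingredients concerning the tangent space $T:=\{xz^*+zx^*:z\in\CC^d\}$ at $X$ and the rescaled sampling operator $\mathcal{A}\colon H\mapsto (\tr(F_{k,l}H))_{k,l}$: \textbf{(I)} a \emph{local isometry} bound, stating that $\mathcal{A}$ restricted to $T$ is bounded above and below (e.g.\ $\|\mathcal{P}_T\mathcal{A}^*\mathcal{A}\mathcal{P}_T-\mathcal{P}_T\|\le \tfrac12$ after normalisation), and \textbf{(II)} an \emph{approximate dual certificate}, i.e.\ a matrix $Y$ in the range of $\mathcal{A}^*$ with $\|\mathcal{P}_TY-xx^*\|_2$ small and $\|\mathcal{P}_{T^\perp}Y\|_{\op}<1$ with a fixed margin. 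A short convexity argument then forces any feasible perturbation $H=X'-X$ to vanish.

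Both ingredients hold \emph{in expectation} for an essentially trivial reason: after the natural normalisation, condition \eqref{eq:moment_constraint} guarantees (Lemma~\ref{lem:near_isotropic}) that $\{F_{k,l}\}$ is a complex projective $2$-design, so $\EE[\mathcal{A}^*\mathcal{A}]$ is the explicitly known ``ideal'' operator; in particular its restriction to $T$ is, up to normalisation, essentially $\mathcal{P}_T$, and the golfing recursion used to build $Y$ in \textbf{(II)} contracts on $T$ and stays bounded on $T^\perp$ in expectation. What remains is concentration, and here the structure of the problem dictates how the argument must be organised: the $dL$ measurements are \emph{not} independent --- within a fixed mask $l$ the Fourier measurements $\{F_{k,l}\}_{k=1}^d$ are strongly correlated --- so every estimate has to be run at the level of the $L$ independent ``mask blocks'' $\mathcal{A}_l^*\mathcal{A}_l=\sum_{k=1}^d\langle F_{k,l},\,\cdot\,\rangle F_{k,l}$. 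For \textbf{(I)} I would split $\{1,\dots,L\}$ into a few batches and apply the operator Bernstein inequality per batch, controlling the operator norm and variance of a single block restricted to $T$ via the Fourier identity $\sum_k f_kf_k^*=d\,\Id$ and the almost-sure bound $|\ee|\le b$; this needs only $\mathcal{O}(\omega\log d)$ masks per batch. For \textbf{(II)} I would run the golfing scheme with $\mathcal{O}(\log d)$ rounds, each round consuming a fresh batch of $\mathcal{O}(\omega\log d)$ independent masks to reduce $\|\mathcal{P}_T(xx^*-Y_j)\|_2$ by a constant factor while adding only a geometrically small increment to $\|\mathcal{P}_{T^\perp}Y_j\|_{\op}$. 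Multiplying the $\mathcal{O}(\log d)$ rounds by the $\mathcal{O}(\omega\log d)$ masks per round yields the claimed threshold $L\ge C\omega\log^2 d$.

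The crux --- and the place where the improvement over the $\mathcal{O}(\log^4 d)$ bound of \cite{candes_masked_2013} is won --- is the per-batch concentration, in particular the control of $\|\mathcal{P}_{T^\perp}\mathcal{A}_I^*\mathcal{A}_I(W)\|_{\op}$ within each golfing round for a small residual $W$ living near $T$. The naive operator-norm bound on $F_{k,l}=D_lf_kf_k^*D_l$ is far too crude, since $\|D_lf_k\|_{\ell_2}^2$ is of order $d$; the point is that the anisotropy of $F_{k,l}$ is governed by the translation structure of the DFT --- the dominant direction of the $k$-th block is essentially the $k$-th Fourier mode --- so that once one sums over $k$, averages over a batch of independent masks, and projects off $T$, these contributions cancel down to size $\mathcal{O}(\sqrt{\log d / |I|})$ instead of leaving behind extra logarithmic factors. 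Making this precise --- via matrix Bernstein combined with the moment hypotheses \eqref{eq:apriori_constraint}--\eqref{eq:moment_constraint}, and, where tails are not already bounded, a truncation step --- is the main technical obstacle; the remaining reductions to \textbf{(I)}--\textbf{(II)} and the golfing bookkeeping are routine.
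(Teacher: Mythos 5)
Your proposal follows essentially the same route as the paper: reduce to uniqueness of the feasible point via a tangent-space injectivity estimate plus an approximate dual certificate, derive near-isotropicity from the moment conditions \eqref{eq:moment_constraint} (the $2$-design property), concentrate per independent mask-block using matrix Bernstein inequalities with a truncation step, and build the certificate by golfing over $\mathcal{O}(\log d)$ rounds of $\mathcal{O}(\omega\log d)$ fresh masks each, yielding $L=\mathcal{O}(\omega\log^2 d)$.

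One methodological point worth flagging concerns ingredient (I). You sketch a \emph{two-sided} bound $\|\PP_T\AA^*\AA\PP_T-\PP_T\|\le\tfrac12$, but the paper deliberately proves only a one-sided statement, namely $\lambda_{\min}\bigl(\PP_T(\RR-\EE[\RR])\PP_T\bigr)\ge-3/4$, via the smallest-eigenvalue Bernstein inequality (Theorem~\ref{thm:smallest_eigenvalue_bernstein}). This matters: the summands $\PP_T\MM_l\PP_T$ are positive semidefinite, so the a-priori \emph{lower} bound comes free from positivity and costs only $\mathcal{O}(1/(\nu^2 L))$, whereas the worst-case \emph{upper} operator norm $\|\PP_T\MM_l\PP_T\|_\op$ scales like $db^4/(\nu^2 L)$ and would require its own truncation step before any two-sided Bernstein could give $L=\mathcal{O}(\log^2 d)$. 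The paper instead pairs the sharp one-sided lower injectivity bound with the crude deterministic upper bound $\tfrac1{dL}\|\AA(Z)\|_{\ell_2}^2\le b^4 d\|Z\|_2^2$ (Lemma~\ref{prop:robust_injectivity2}), absorbing the loose $\sqrt{d}$ into the correspondingly tighter dual-certificate requirement $\|Y_T-X\|_2\le\nu/(4b^2\sqrt{d})$. Your plan would also work if the truncation you mention in passing is applied here too, but the one-sided route is cleaner and is one of the stated sources of the improvement over \cite{candes_masked_2013}.
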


The number $C$ is of the form $C=\tilde C \frac{b^8}{\nu^4}
\log_2^2 \left( b^2/\nu\right)$, where $b$ and $\nu$ were defined in \eqref{eq:apriori_constraint} and \eqref{eq:moment_constraint}, respectively. 
Also, $\tilde C$ an absolute constant for which
an explicit estimate can be extracted from our proof.

For the benefit of the technically-minded reader, we briefly sketch
the relation between the proof techniques used here, as compared to
References~\cite{candes_masked_2013} and \cite{gross_partial_2013}.
\begin{itemize}
	\item
	The general structure of this document closely mimics
	\cite{gross_partial_2013} (which bears remarkable similarity to
	\cite{candes_masked_2013}, even though the papers were written
	completely independently and with different aims in mind).

	\item
	From \cite{candes_masked_2013} we borrow the use of Hoeffding's
	inequality to bound the probability of ``the inner product
	between the measurement vectors and the signal becoming too large''.
	This is Lemma~\ref{lem:undesired_events} below. Our previous work
	also bounded the probability of such events
	\cite[Lemma~13]{gross_partial_2013}---however in a weaker way
	(relying only on certain $t$th moments as opposed to a Hoeffding
	bound).

	\item
	Both \cite{gross_partial_2013,candes_masked_2013} as well as the
	present paper estimate the condition number of the measurement operator
	restricted to the tangent space at $x x^*$ (``robust injectivity'').
	Our Proposition~\ref{prop:robust_injectivity} improves over
	\cite[Section 3.3]{candes_masked_2013} by using an operator
	Bernstein inequality instead of a weaker operator Hoeffding bound. 
	\item
	Finally, we use a slightly refined version of the golfing scheme to
	construct an approximate dual certificate (following
	\cite[Section~III.B]{gross_recovering_2011}).
\end{itemize}

\subsection{More general bases and outlook}
\label{sec:outlook}

The result allows for a fairly general distribution of the masks
$D_l$, but refers specifically to the Fourier basis.  An obvious
question is how sensitively the statements depend on the properties of
this basis.

We begin by pointing out that Theorem~\ref{thm:main_theorem}
immediately implies a corollary for higher-dimensional Fourier
transforms. In diffraction imaging applications, for example, one
would naturally employ a 2-D Fourier basis
\begin{equation}\label{eqn:2dft}
	f_{k,l} = \sum_{i=1}^{d_x} \sum_{j=1}^{d_y} \omega_{d_x}^{ik}
	\omega_{d_y}^{jl} e_{i,j},
\end{equation}
with
$d_x$ and $d_y$ the horizontal and vertical resolution respectively, 
$\omega_d := \exp\left(\frac{2\pi i}{d}\right)$, and $e_{i,j}$ the
position space basis vector representing a signal located at
coordinates $(i,j)$. 
Superficially,
(\ref{eqn:2dft}) looks quite different
from the one-dimensional case (\ref{eqn:1dft}). However, a basic
application of the Chinese Remainder Theorem shows 
that if $d_x$ and $d_y$ are co-prime, then the 2-D transform reduces to the
1-D one for dimension $d_x d_y$ (in the sense that the respective
bases agree up to relabeling) 
\cite{good_interaction_1958}. 
An analogous result holds for
higher-dimensional transforms
\cite{good_interaction_1958}, proving the following corollary.

\begin{corollary}
	Assume
	$d=\prod_{i=1}^k d_i$ is the product of mutually co-prime odd
	numbers greater than $3$. Then Theorem~\ref{thm:main_theorem}
	remains valid for the $k$-dimensional Fourier transform
	over $d_1, \dots, d_k$.
\end{corollary}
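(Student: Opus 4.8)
\medskip

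\noindent The plan is to show that, after a fixed relabeling of the $d$ position-space coordinates, the $k$-dimensional coded-diffraction setup over $d_1,\dots,d_k$ is \emph{literally} an instance of the one-dimensional setup of Section~\ref{sec:coded} in dimension $d=\prod_{i=1}^k d_i$, so that Theorem~\ref{thm:main_theorem} applies verbatim.

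First I would make the Good--Thomas index identification precise. Put $N_i := d/d_i$; mutual co-primality of the $d_i$ forces $\gcd(N_i,d_i)=1$, so there is $\bar{N}_i$ with $N_i \bar{N}_i \equiv 1 \pmod{d_i}$ (the usual CRT idempotents). Using $\omega_d^{N_i}=\omega_{d_i}$, a short computation yields the character identity
\begin{equation*}
	\omega_d^{mn}\;=\;\prod_{i=1}^k \omega_{d_i}^{\,\alpha_i(n)\,\beta_i(m)},\qquad
	\alpha_i(n) = n \bmod d_i,\quad \beta_i(m)= \bar{N}_i m \bmod d_i ,
\end{equation*}
where $\alpha,\beta\colon \ZZ_d \to \ZZ_{d_1}\times\cdots\times\ZZ_{d_k}$ are bijections --- this is exactly Good's prime-factor decomposition of the DFT~\cite{good_interaction_1958}. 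Reading $\alpha$ as a permutation matrix $P$ on $\CC^d$ and $\beta$ as a reindexing of the frequency variable, the identity states precisely that $P f^{(k\mathrm{D})}_{\beta(m)} = f^{(1\mathrm{D})}_m$ for every $m$; in particular the set of $k$-dimensional Fourier vectors \eqref{eqn:2dft} equals, after the coordinate permutation $P$, the set of $1$-dimensional Fourier vectors \eqref{eqn:1dft}, and ``complete'' measurements match on both sides.

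Next I would verify that the random-mask ensemble is covariant under this relabeling. A diagonal matrix stays diagonal under conjugation by a permutation matrix, and if $D_l$ has entries that are independent copies of $\ee$, then so does $\tilde D_l := P D_l P^{*}$, since $P$ merely reindexes the i.i.d.\ family. Hence, with $\tilde x := P x$,
\begin{align*}
	\big|\langle f^{(k\mathrm{D})}_{\beta(m)},\, D_l x\rangle\big|^2
	&= \big|\langle P f^{(k\mathrm{D})}_{\beta(m)},\, (P D_l P^{*})(P x)\rangle\big|^2 \\
	&= \big|\langle f^{(1\mathrm{D})}_m,\, \tilde D_l\,\tilde x\rangle\big|^2 ,
\end{align*}
so the $k$-dimensional data is exactly $1$-dimensional coded-diffraction data for the signal $\tilde x$ with $L$ independent admissible masks $\tilde D_l$. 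Since $P$ is unitary, $\|\tilde x\|_{\ell_2}=\|x\|_{\ell_2}=1$ and the intensity $y_0$ is unchanged; since each $d_i$ is odd and $>3$, the dimension $d=\prod_i d_i$ is odd and $\geq 3$. Moreover the Phaselift program \eqref{eq:convex_program} (and likewise \eqref{eq:feasibility_program}) is covariant under $X'\mapsto P X' P^{*}$: this bijection sends the lifted $k$-dimensional constraints $F^{(k\mathrm{D})}_{\beta(m),l}$ to the $1$-dimensional ones built from $\tilde D_l$, leaves the data $y_{k,l}$, the trace constraint, positivity and hermiticity invariant, and preserves $\|\cdot\|_1$. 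Thus the $k$-dimensional Phaselift recovers $x x^{*}$ if and only if the $1$-dimensional Phaselift recovers $\tilde x\tilde x^{*}=P(xx^{*})P^{*}$, which Theorem~\ref{thm:main_theorem} guarantees with probability at least $1-\mathrm{e}^{-\omega}$ as soon as $L\geq C\omega\log^2 d$.

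The one genuinely non-routine ingredient is the character identity above, i.e.\ the Good--Thomas factorization of the multidimensional DFT over co-prime lengths, which is classical~\cite{good_interaction_1958}; the main care needed is simply in writing the CRT index maps $\alpha,\beta$ down correctly. The remaining steps --- permutation-covariance of the mask distribution and of the convex program, together with the bookkeeping on $\|x\|_{\ell_2}$ and on the parity and size of $d$ --- are immediate, so I expect no further obstacle.
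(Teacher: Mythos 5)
Your proposal uses exactly the same Good--Thomas/CRT reduction that the paper invokes to reduce the $k$-dimensional DFT over co-prime lengths to the $1$-dimensional DFT of dimension $d=\prod_i d_i$ up to a coordinate permutation. You add some useful detail that the paper leaves implicit (the permutation-covariance of the i.i.d.\ mask distribution and of the PhaseLift program), but the argument is the one the paper intends.
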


More generally speaking, our argument employs the particular
properties of Fourier bases in two places:
Lemma~\ref{lem:near_isotropic} and Lemma~\ref{lem:aux1}. 

The former lemma shows that the measurements are drawn from an  
\emph{isotropic ensemble} (or \emph{tight frame}) in the relevant
space of hermitian matrices. A similar condition is frequently used in
works on phase retrieval, low-rank matrix completion, and compressed
sensing
(e.g.\ \cite{
gross_partial_2013, 
candes_masked_2013,
rudelson_reconstruction_2013,
kueng_ripless_2013,
gross_recovering_2011}). 
Properties of the Fourier basis are used in the proof of 
Lemma~\ref{lem:near_isotropic} only for concreteness. Using relatively
straight-forward representation theory, one can give a far more
abstract version of the result which is valid for any basis satisfying
two explicit polynomial relations (cf.\ the remark below the lemma).
The combinatorial structure of Fourier transforms is immaterial at
this point.

This contrasts with Lemma~\ref{lem:aux1} which currently prevents us
from generalizing the main result to a broader class of bases. Its
proof uses explicit coordinate expressions of the Fourier basis to
facilitate a series of simplifications.  Identifying the abstract gist
of the manipulations is the main open problem which we hope to address
in future work.

We make use of the condition that $d$ be odd only for
Lemma~\ref{lem:near_isotropic}. While that particular Lemma fails to
hold for even dimensions, we find it plausible that the result as a
whole remains essentially true for even dimensions.

It would also be interesting to use the techniques of the present
paper to re-visit the problem of quantum state tomography
\cite{gross_quantum_2010, flammia_quantum_2012,
schwemmer_efficient_2014, gross_focus_2013} (which was the initial
motivation for one of the authors to become interested in low-rank
recovery methods).  Indeed, the original work on quantum state
tomography and low-rank recovery \cite{gross_quantum_2010} was based
on a model where the expectation value of a Pauli matrix is the
elementary unity of information exctractable from a quantum
experiment. While this correctly describes some experiments, it is
arguably more common that the statistics of the eigenbasis of an
observable are the objects that can be physically directly accessed. For this
practically more relevant case, no recovery guarantees seem to be
currently known and the methods used here could be used to amend that
situation.

\section{Technical Background and Notation}

\subsection{Vectors, Matrices, and matrix valued Operators}\label{sec:VMO}

The signals $x$ are assumed to live in  $\CC^d$ equipped with the
usual inner product $\langle \cdot, \cdot \rangle$. We denote the
induced norm by
\begin{equation*}
\| z \|_{\ell_2} = \sqrt{ \langle z, z \rangle } \quad \forall z \in \CC^d.
\end{equation*}
Vectors in $\CC^d$ will be denoted by lower case Latin characters. For
$z \in \CC^d$ we define the absolute value $|z|\in\R_+^d$
component-wise $|z|_i=|z_i|$.  

On the level of matrices we will exclusively encounter $d \times d$ hermitian matrices and denote them by capital Latin characters. 
Endowed with the Hilbert-Schmidt (or Frobenius) scalar product
\begin{equation}
( Z, Y ) = \tr (ZY)	\label{eq:hilbert_schmitt}
\end{equation}
the space $H^d$ of all $d \times d$ hermitian matrices becomes a
Hilbert space itself. In addition to that, we will require three different
operator norms
\begin{eqnarray}
	\| Z \|_1 &=& \tr (|Z|)\quad \textrm{(trace or nuclear norm)}, 	\nonumber \\
	\|Z \|_2 &=& \sqrt{\tr (Z^2)}  \quad \textrm{(Frobenius norm)},	\nonumber \\
	\| Z \|_\infty &=& 
	= \sup_{y \in \CC^d} \frac{| \langle y, Z y \rangle |}{\|y\|_{\ell_2}^2}	\label{eq:operator_norm}
	\quad \textrm{(operator norm)}.
\end{eqnarray}
In the definition of the trace norm, $|Z|$ denotes the unique positive semidefinite matrix obeying $|Z|^2 = Z^2$ (or equivalently $|Z| = \sqrt{ Z^2}$ which is unique).
For arbitrary matrices $Z$ of rank at most $r$, the norms above are
related via the inequalities
\begin{equation*}
\| Z \|_2 \leq \| Z \|_1 \leq \sqrt{r} \| Z \|_2
\quad \textrm{and} \quad
\| Z \|_\infty \leq \| Z \|_2 \leq \sqrt{r} \| Z \|_\infty. 
\end{equation*}
Recall that a hermitian matrix $Z$ is positive semidefinite if one has
$\langle y, Z y \rangle \geq 0$ for all $y \in \CC^d$.
We write $Y \geq Z$ iff $Y-Z$ is positive semidefinite.

In this work,  hermitian rank-1 projectors are of particular importance.
They are of the form $Z = z z^*$ with $z\in\CC^d$. 
The vector $z$ can then be recovered from $Z$ up to a global
phase factor via the singular value decomposition.
In this work, the most prominent rank-$1$ projectors are $X=xx^*$ and $F_{k,l}= D_l f_k (D_l f_k)^*$.

Finally, we will also encounter \emph{matrix-valued operators} acting
on the matrix space $H^d$. 
Here, we will restrict ourselves to operators that are hermitian with respect to the Hilbert-Schmitt inner product.
We label such objects with calligraphic letters. The operator norm
becomes
\begin{equation}
	\| \MM \|_{\op} 
	= \sup_{Z \in H^d} \frac{| \tr (Z \MM Z ) |}{\| Z \|_2^2}.
	\label{eq:self_dual_Frobenius}
\end{equation}
It turns out that only two classes of such
operators will appear in our work, namely the identity map
\begin{eqnarray*}
\II : H^d & \to & H^d 		\\
Z & \mapsto & Z \quad \forall Z \in H^d
\end{eqnarray*}
and (scalar multiples of) projectors onto some matrix $Y \in H^d$ as given by
\begin{eqnarray*}
\Pi_Y: H^d & \to & H^d 		\\
Z & \mapsto& Y (Y,Z) = Y \tr(Y Z) \quad \forall Z \in H^d.
\end{eqnarray*}
An important example of the latter class is 
\begin{equation*}
\Pi_{\Id}: \; Z \mapsto \Id \tr (\Id Z ) = \tr (Z) \Id \quad \forall Z \in H^d.
\end{equation*}
Note that 
the normalization is such that
$\frac{1}{d} \Pi_{\Id}$ 
is idempotent, i.e.\ a properly normalized projection.
Indeed, for $Z \in H^d$ arbitrary it holds that
\begin{equation}
(d^{-1} \Pi_{\Id} )^2 Z = d^{-2} \Id \tr (\Id \Pi_{\Id} Z ) = d^{-2} \tr (\Id) \tr (Z) \Id = d^{-1} \Pi_{\Id} Z.	\label{eq:P_Idaux}
\end{equation}

The notion of positive-semidefiniteness directly translates to matrix
valued operators.  It is easy to check that all the operators
introduced so far are positive semidefinite. From (\ref{eq:P_Idaux})
we obtain the ordering
\begin{equation}
	0 \leq \Pi_{\Id} \leq d \II. 	\label{eq:P_Id_bound}
\end{equation}

\subsection{Tools from Probability Theory}

In this section, we recall some concentration inequalities which will
prove useful for our argument.  Our first  tool is a slight extension  of
Hoeffding's inequality \cite{ho63}. 

\begin{theorem}	\label{thm:hoeffding}
Let $z = (z_1,\ldots, z_d) \in \CC^d$ be an arbitrary vector and let $\ee_i$, $i=1,\ldots d$, be independent copies of a real-valued, centered random variable $\ee$ which is almost surely bounded in modulus by $b > 0$.
Then
\begin{equation}
\Pr \left[ \left| \sum_{i=1}^d \ee_i z_i \right| \geq t \| z\|_{\ell_2} \right] \leq 4 \exp \left( - t^2 /(4b^2) \right) 	\label{eq:hoeffding}.
\end{equation}
\end{theorem}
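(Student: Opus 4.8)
The plan is to reduce the complex-valued statement to the classical, real-valued Hoeffding inequality by treating the real and imaginary parts of $z$ separately. First, observe that both the event in \eqref{eq:hoeffding} and its right-hand side are invariant under rescaling $z \mapsto c z$ by a positive real constant $c$, so we may assume $\|z\|_{\ell_2} = 1$ (the degenerate case $z = 0$ being vacuous). Write $z_j = a_j + \mathrm{i}\, b_j$ with $a_j, b_j \in \R$; then $\|a\|_{\ell_2}^2 + \|b\|_{\ell_2}^2 = 1$ and
\[
	\left| \sum_{j=1}^d \ee_j z_j \right|^2 = \Bigl( \sum_{j=1}^d \ee_j a_j \Bigr)^2 + \Bigl( \sum_{j=1}^d \ee_j b_j \Bigr)^2 .
\]
Consequently, if $\bigl|\sum_j \ee_j z_j\bigr| \geq t$, then at least one of the two real sums has modulus at least $t/\sqrt{2}$, and a union bound reduces the problem to showing that each of $\Pr\bigl[|\sum_j \ee_j a_j| \geq t/\sqrt2\bigr]$ and $\Pr\bigl[|\sum_j \ee_j b_j| \geq t/\sqrt2\bigr]$ is at most $2\exp(-t^2/(4b^2))$.

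Next, I would invoke the classical Hoeffding inequality on the real part (the imaginary part being identical with $a$ replaced by $b$). The summands $X_j := \ee_j a_j$ are independent, centered since $\EE[\ee]=0$, and take values in the interval $[-b|a_j|,\, b|a_j|]$ of length $2b|a_j|$ because $|\ee_j| \leq b$ almost surely. Hoeffding's bound for bounded independent summands, in its two-sided form, therefore yields
\[
	\Pr\left[ \Bigl| \sum_{j=1}^d \ee_j a_j \Bigr| \geq \frac{t}{\sqrt 2} \right]
	\leq 2 \exp\left( - \frac{2\,(t/\sqrt 2)^2}{\sum_{j=1}^d (2b|a_j|)^2} \right)
	= 2 \exp\left( - \frac{t^2}{4 b^2 \|a\|_{\ell_2}^2} \right)
	\leq 2 \exp\left( - \frac{t^2}{4 b^2} \right),
\]
using $\|a\|_{\ell_2} \leq 1$ in the last step; the same estimate holds verbatim for $\sum_j \ee_j b_j$. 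Combining the two bounds through the union bound from the first step gives the claimed inequality with prefactor $4$. If a self-contained argument is preferred, one can instead apply Hoeffding's lemma $\EE[\mathrm{e}^{\lambda X_j}] \leq \exp(\lambda^2 b^2 a_j^2 / 2)$ to each $X_j$, multiply over $j$, and optimize the resulting Markov bound over $\lambda > 0$; this reproduces the displayed real-part estimate without citing the inequality as a black box.

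There is no genuine obstacle in this proof. The only point deserving attention is that the real and imaginary sums $\sum_j \ee_j a_j$ and $\sum_j \ee_j b_j$ are \emph{not} independent -- they are built from the same randomness $\ee_1, \dots, \ee_d$ -- which is precisely why the argument proceeds through a union bound over the two ``bad'' events rather than by multiplying their probabilities. Beyond that, one merely has to track the constants carefully: the factor $1/\sqrt 2$ from splitting the modulus, the factor $2$ from the two-sided form of Hoeffding's inequality, and a further factor $2$ from the union bound, which together produce the stated prefactor $4$ and exponent $-t^2/(4b^2)$.
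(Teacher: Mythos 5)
Your proof is correct and follows essentially the same route as the paper's: split $z$ into real and imaginary parts, apply the classical real-valued Hoeffding bound to each, and combine via a union bound to obtain the prefactor $4$ and exponent $-t^2/(4b^2)$. The only cosmetic difference is that you use the Pythagorean identity $|\sum_j \ee_j z_j|^2 = (\sum_j \ee_j a_j)^2 + (\sum_j \ee_j b_j)^2$ to deduce that one of the real sums exceeds $t/\sqrt{2}$, whereas the paper invokes the triangle inequality together with $\|z\|_{\ell_2} \geq (\|a\|_{\ell_2}+\|b\|_{\ell_2})/\sqrt{2}$; both yield the same constants.
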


One way to prove this statement, is to split up $z$ into $x + i y$ with $x,y \in \mathbb{R}^d$ and noting that $\| z \|_{\ell_2} \geq \left( \| x \|_{\ell_2} + \| y \|_{\ell_2} \right)/\sqrt{2}$ holds. Splitting up the sum into real and imaginary parts, applying the triangle inequality and bounding $\Pr \left[ \left| \sum_{i=1}^d \epsilon_i x_i \right| \geq t \| x \|_{\ell_2}/ \sqrt{2} \right]$ and $\Pr \left[ \left| \sum_{i=1}^d \epsilon_i y_i \right| \geq t \| y \|_{\ell_2} / \sqrt{2} \right]$ individually by means of Hoeffding's inequality (or a slightly generalized version of \cite[Corllary 7.21]{foucart_mathematical_2013})
then establishes \eqref{eq:hoeffding} via the union bound.

Secondly, we will require two matrix versions of Bernstein's
inequality. Such matrix valued large deviation bounds have been
established first in the field of quantum information by Ahlswede and
Winter \cite{ahlswede_strong_2002} and introduced to 
sparse and low-rank recovery in \cite{gross_quantum_2010,
gross_recovering_2011}.
We make use of refined versions from
\cite{tropp_user_2012,tropp_introduction_2012}, see also \cite[Chapter 8.5]{foucart_mathematical_2013} for the former. 
Note that as $H^d$ is a finite dimensional vector space, the results
also apply to matrix valued operators as introduced in
section~\ref{sec:VMO}.

\begin{theorem}[Uniform Operator Bernstein inequality, \cite{tropp_user_2012,gross_recovering_2011}] \label{thm:bernstein}
Consider a finite sequence $\left\{M_k \right\}$ of independent random self-adjoint  matrices.
Assume that each $M_k$ satisfies $\EE \left[ M_k \right] = 0$ and $\|M_k \|_\infty \leq \overline{R}$ (for some finite constant $\overline{R}$) almost surely.
Then with the variance parameter $ \sigma^2 := \| \sum_k \EE \left[ M_k^2 \right] \|_\infty$, the following chain of inequalities holds for all $t\geq 0$.
\begin{equation}
\Pr \left[ \Big\| \sum_k M_k \Big\|_\infty \geq t \right] \leq d \; \exp \left( - \frac{t^2/2}{\sigma^2 + \overline{R}t/3} \right) 
\leq 
\begin{cases}
 d \, \exp ( - 3t^2 / 8 \sigma^2 ) & t \leq \sigma^2/\overline{R}	\\ 
d \, \exp (-3t/8\overline{R}) & t \geq \sigma^2/\overline{R} .
\end{cases}
\label{eq:matrix_bernstein}
\end{equation}
\end{theorem}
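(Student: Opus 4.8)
The plan is to run the matrix Laplace transform (Ahlswede--Winter) method in Tropp's refined form. Write $S := \sum_k M_k$; since each $M_k$ is self-adjoint, so is $S$, and $\| S \|_\infty = \max\{ \lambda_{\max}(S),\, \lambda_{\max}(-S)\}$, so it suffices to control $\Pr[\lambda_{\max}(S) \geq t]$ and then union-bound over $S$ and $-S$ (the hypotheses are invariant under $M_k \mapsto -M_k$). First I would apply Markov's inequality to the increasing map $s \mapsto \mathrm{e}^{\theta s}$, using $\mathrm{e}^{\theta\lambda_{\max}(S)} = \lambda_{\max}(\mathrm{e}^{\theta S}) \leq \tr\, \mathrm{e}^{\theta S}$, to get for every $\theta > 0$
\begin{equation*}
\Pr[\lambda_{\max}(S) \geq t] \;\leq\; \mathrm{e}^{-\theta t}\, \EE\, \tr\, \mathrm{e}^{\theta S}.
\end{equation*}

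Next I would bound the matrix moment generating function. In the scalar case independence yields a product formula, but the $M_k$ need not commute; the substitute is the subadditivity of matrix cumulant generating functions,
\begin{equation*}
\EE\, \tr\, \exp\!\Big( \theta \sum_k M_k \Big) \;\leq\; \tr\, \exp\!\Big( \sum_k \log \EE\, \mathrm{e}^{\theta M_k} \Big),
\end{equation*}
which follows from Lieb's concavity theorem for $A \mapsto \tr \exp(H + \log A)$ by the usual argument. Then I would bound each single-summand term: the scalar function $g(u) = (\mathrm{e}^u - 1 - u)/u^2$ (with $g(0) = \tfrac12$) is monotonically increasing, so $\|M_k\|_\infty \leq \overline{R}$ and the transfer rule for functions of a self-adjoint matrix give $\mathrm{e}^{\theta M_k} \leq \Id + \theta M_k + g(\theta\overline{R})\,\theta^2 M_k^2$; taking expectations, using $\EE[M_k] = 0$ and $\Id + B \leq \mathrm{e}^{B}$, yields $\EE\, \mathrm{e}^{\theta M_k} \leq \exp\!\big( g(\theta\overline{R})\,\theta^2\, \EE[M_k^2] \big)$, hence $\log \EE\, \mathrm{e}^{\theta M_k} \leq g(\theta\overline{R})\,\theta^2\, \EE[M_k^2]$.

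Substituting back, using monotonicity of $\tr\exp$ under the semidefinite order and $\tr\, \mathrm{e}^{Y} \leq d\, \mathrm{e}^{\lambda_{\max}(Y)}$, I would arrive at $\Pr[\lambda_{\max}(S) \geq t] \leq d\, \exp( -\theta t + g(\theta\overline{R})\,\theta^2 \sigma^2 )$. From the series $g(u) = \sum_{j\geq 0} u^j/(j+2)! \leq \tfrac12(1-u/3)^{-1}$ for $0 \leq u < 3$, the choice $\theta = t/(\sigma^2 + \overline{R}t/3)$ (for which $\theta\overline{R} < 3$) makes the exponent equal to $-\tfrac{t^2/2}{\sigma^2 + \overline{R}t/3}$; a union bound over $S$ and $-S$ converts this into the two-sided bound (the prefactor $d$ versus $2d$ being immaterial here). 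The two-case simplification is elementary: for $t \leq \sigma^2/\overline{R}$ one has $\sigma^2 + \overline{R}t/3 \leq \tfrac43\sigma^2$, giving exponent $\leq -3t^2/(8\sigma^2)$, and for $t \geq \sigma^2/\overline{R}$ one has $\sigma^2 + \overline{R}t/3 \leq \tfrac43\overline{R}t$, giving exponent $\leq -3t/(8\overline{R})$. The hard part will be the matrix mgf subadditivity: unlike the rest of the argument, which is scalar calculus transferred to matrices via eigenvalue monotonicity, this step rests on Lieb's concavity theorem (and an accompanying Golden--Thompson-type manipulation), the one genuinely deep ingredient.
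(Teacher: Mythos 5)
The paper does not prove this theorem; it simply cites it from the literature (Tropp's ``User-friendly tail bounds for sums of random matrices'' and the Foucart--Rauhut book), as it is a standard off-the-shelf tool. So there is no in-paper proof to compare against. That said, your reconstruction is a faithful and correct rendering of exactly the argument in the cited source: the matrix Laplace transform reduction via $\mathrm{e}^{\theta\lambda_{\max}(S)} = \lambda_{\max}(\mathrm{e}^{\theta S}) \leq \tr\,\mathrm{e}^{\theta S}$, the subadditivity of matrix cumulant generating functions (which, as you rightly flag, is the one genuinely nontrivial ingredient and rests on Lieb's concavity theorem), the transfer rule applied to $g(u) = (\mathrm{e}^u - 1 - u)/u^2$, and the optimized choice $\theta = t/(\sigma^2 + \overline{R}t/3)$ together with the series estimate $g(u) \leq \tfrac12(1-u/3)^{-1}$. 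The algebra checks out, including the two-case simplification. One small point worth being precise about: Tropp's Theorem 1.4 is a one-sided bound on $\lambda_{\max}(\sum M_k)$ with prefactor $d$; passing to $\|\sum M_k\|_\infty$ by the union bound over $\pm S$, as you do, yields prefactor $2d$. The theorem as stated in the paper writes $d$, which is a common looseness in the literature (immaterial since constants get absorbed), but if you want to match the stated bound exactly you should either carry the $2d$ or note that the application only ever needs the one-sided version. This is a cosmetic issue, not a gap in your argument.
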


\begin{theorem}[Smallest Eigenvalue Bernstein Inequality, \cite{tropp_introduction_2012}]	\label{thm:smallest_eigenvalue_bernstein}
Let $S = \sum_k M_k$ be a sum of i.i.d. random matrices $M_k$ which obey $\EE \left[ M_K \right] = 0$ and $\lambda_{\textrm{min}}(M_k) \geq - \underline{R}$ almost surely for some fixed $\underline{R}$.
With the variance parameter
$\sigma^2 (S) = \| \sum_k \EE \left[ M_k^2 \right] \|_\infty $
the following chain of inequalities holds for all $t \geq 0$.
\begin{equation*}
\Pr \left[ \lambda_{\min} (S) \leq - t \right] \leq d \exp \left( - \frac{t^2/2}{\sigma^2 + \underline{R}t/3} \right) 
\leq
\begin{cases}
 d \, \exp ( - 3t^2 / 8 \sigma^2 ) & t \leq \sigma^2/\underline{R}\\ 
d \, \exp (-3t/8 \underline{R}) & t \geq \sigma^2/\underline{R}. 
\end{cases}
\end{equation*}
\end{theorem}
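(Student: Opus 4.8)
The plan is to run the matrix Laplace transform method of Ahlswede--Winter and Tropp; since this is exactly how the companion upper-tail bound (Theorem~\ref{thm:bernstein}) is established, I would only flag the changes needed for the smallest eigenvalue. First observe that $\lambda_{\min}(S) \leq -t$ is the same event as $\lambda_{\max}(-S) \geq t$, so Markov's inequality applied to the matrix exponential gives, for every $\theta > 0$,
\begin{equation*}
\Pr \left[ \lambda_{\min}(S) \leq -t \right] \leq \mathrm{e}^{-\theta t} \, \EE \, \Tr \exp \Big( -\theta \sum_k M_k \Big),
\end{equation*}
where I used $\mathrm{e}^{\theta \lambda_{\max}(-S)} = \lambda_{\max}(\exp(-\theta S)) \leq \Tr \exp(-\theta S)$. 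By Lieb's concavity theorem and independence of the $M_k$, the matrix cumulant generating function is subadditive, i.e.\ $\EE \, \Tr \exp(-\theta \sum_k M_k) \leq \Tr \exp\big( \sum_k \log \EE \, \mathrm{e}^{-\theta M_k} \big)$.

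Next I would bound each factor. Since $\EE[M_k] = 0$ and $M_k \geq -\underline{R}\,\I$, the matrix $Y := -\theta M_k$ satisfies $Y \leq \theta \underline{R}\,\I$; using that the scalar function $x \mapsto (\mathrm{e}^x - x - 1)/x^2$ is monotone increasing, the spectral inequality $\mathrm{e}^Y \leq \I + Y + \tfrac{\mathrm{e}^{\theta \underline{R}} - \theta \underline{R} - 1}{\underline{R}^2}\,\theta^{-2}Y^2$ holds in the semidefinite order, whence (taking expectations and then $\log$, using $\log(\I + A) \leq A$ for $A \geq 0$)
\begin{equation*}
\log \EE \, \mathrm{e}^{-\theta M_k} \leq \frac{\mathrm{e}^{\theta \underline{R}} - \theta \underline{R} - 1}{\underline{R}^2}\, \EE \big[ M_k^2 \big].
\end{equation*}
Summing these operator inequalities, passing to the largest eigenvalue, invoking $\|\sum_k \EE[M_k^2]\|_\infty = \sigma^2$, and using $\Tr \exp(\,\cdot\,) \leq d\,\mathrm{e}^{\lambda_{\max}(\,\cdot\,)}$ yields $\Pr[\lambda_{\min}(S) \leq -t] \leq d \exp\big( -\theta t + \tfrac{\mathrm{e}^{\theta \underline{R}} - \theta \underline{R} - 1}{\underline{R}^2}\,\sigma^2 \big)$.

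It remains to optimize and then coarsen. Choosing $\theta = \underline{R}^{-1}\log(1 + \underline{R}t/\sigma^2)$ gives the Bennett-type bound $d \exp\big( -\tfrac{\sigma^2}{\underline{R}^2} h(\underline{R}t/\sigma^2) \big)$ with $h(u) = (1+u)\log(1+u) - u$; the elementary inequality $h(u) \geq \tfrac{u^2/2}{1 + u/3}$ then produces the first displayed estimate $d \exp\big( -\tfrac{t^2/2}{\sigma^2 + \underline{R}t/3} \big)$, and splitting at the threshold $t = \sigma^2/\underline{R}$ — where $\sigma^2 + \underline{R}t/3$ is at most $\tfrac43 \sigma^2$ in one regime and at most $\tfrac43 \underline{R}t$ in the other — gives the final two-case form. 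The only genuinely substantive ingredients are Lieb's concavity theorem behind the subadditivity step and the scalar monotonicity giving the semidefinite moment-generating-function bound; everything else is bookkeeping, so in the write-up I would simply cite \cite{tropp_user_2012,tropp_introduction_2012} and carry out only the case analysis explicitly.
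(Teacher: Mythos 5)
The paper does not prove this theorem at all; it is stated as a background tool and attributed to Tropp's introductory notes \cite{tropp_introduction_2012}, exactly as Theorem~\ref{thm:bernstein} is attributed to \cite{tropp_user_2012,gross_recovering_2011}. Your sketch correctly reproduces Tropp's own argument: the passage from $\lambda_{\min}(S)\leq -t$ to $\Tr\exp(-\theta S)$ via Markov, Lieb's concavity to get subadditivity of the matrix cumulant generating function, the Bernstein moment bound $\log\EE\,\mathrm{e}^{-\theta M_k}\leq \underline{R}^{-2}(\mathrm{e}^{\theta\underline{R}}-\theta\underline{R}-1)\,\EE[M_k^2]$ using the lower eigenvalue bound in the form $-\theta M_k\leq\theta\underline{R}\,\I$, the optimization $\theta=\underline{R}^{-1}\log(1+\underline{R}t/\sigma^2)$ giving the Bennett form $h(u)=(1+u)\log(1+u)-u$, the coarsening $h(u)\geq \tfrac{u^2/2}{1+u/3}$, and the split at $t=\sigma^2/\underline{R}$ — all match the source. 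So the proposal is a correct proof, but there is nothing in the paper to compare it against beyond the citation; the paper's authors deliberately did not re-derive this standard concentration inequality, and doing so here, while harmless, is supplying a proof the paper intentionally omits.
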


Finally, we are also going to require a type of vector Bernstein inequality.
Note that, since $H^d$ is a $d^2$-dimensional real vector space, the statement remains valid for a sum of random hermitian matrices.

\begin{theorem}[Vector Bernstein inequality] \label{thm:vector_bernstein}
Consider a finite sequence $\{M_k \}$ of independent random vectors. 
Assume that each $M_k$ satisfies $\EE \left[ M_k \right] = 0$ and $\| M_k \|_2 \leq B$ (for some finite constant $B$) almost surely. 
Then with the variance parameter $\sigma^2 := \sum_{k} \EE \left[ \| M_k \|_2^2 \right]$,
\begin{equation*}
\Pr \left[ \left\| \sum_k M_k \right\|_2 \geq t \right] \leq \exp \left( - \frac{t^2}{4 \sigma^2} + \frac{1}{4} \right)
\end{equation*}
holds for any $t \leq \sigma^2 / B$. 
\end{theorem}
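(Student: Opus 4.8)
\emph{Proof proposal for Theorem~\ref{thm:vector_bernstein}.}
The plan is to bound the moment generating function of the nonnegative random variable $R := \bigl\| \sum_k M_k \bigr\|_2$ and then apply Markov's inequality with an optimised exponent. Since the asserted bound carries no dimensional prefactor, the usual shortcut of passing to a self-adjoint dilation and invoking Theorem~\ref{thm:bernstein} is not available --- that would introduce a factor equal to (essentially) the ambient dimension, which is $d^2$ in the application we have in mind. Instead I would keep the vector structure intact and combine a symmetrisation step with elementary scalar concentration estimates; the one genuine idea is that symmetrisation decouples the ``sign'' randomness (which produces sub-Gaussian fluctuations) from the ``magnitude'' randomness (which sets the variance).

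First I would pin down the typical size of $R$. Writing $S = \sum_k M_k$ and using $\EE[M_k]=0$ together with independence, all off-diagonal terms in $\EE\|S\|_2^2 = \sum_{j,k}\EE\langle M_j, M_k\rangle$ vanish, so $\EE\|S\|_2^2 = \sum_k \EE\|M_k\|_2^2 = \sigma^2$, and Jensen's inequality gives $\EE R \le \sigma$.

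Next, for $\theta > 0$ I would estimate $\EE\exp(\theta R)$. A standard symmetrisation --- replace $S$ by $S - S'$ for an independent copy $S'$, use convexity of $\exp$, insert Rademacher signs $\xi_k$, then split the symmetrised sum by the triangle inequality and Cauchy--Schwarz --- reduces matters to bounding $\EE\exp\bigl(2\theta \bigl\|\sum_k \xi_k M_k\bigr\|_2\bigr)$. I would then condition on the realisations of the $M_k$: flipping the $j$th sign changes $\xi \mapsto \bigl\|\sum_k \xi_k M_k\bigr\|_2$ by at most $2\|M_j\|_2$, so a bounded-difference (Hoeffding) estimate yields a conditional sub-Gaussian bound with variance proxy $V := \sum_k \|M_k\|_2^2$, while $\EE_\xi \bigl\|\sum_k \xi_k M_k\bigr\|_2 \le V^{1/2}$ by the same orthogonality computation as above. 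Integrating out the $M_k$ then only requires the moment generating function of $V$, a sum of independent random variables valued in $[0,B^2]$ with mean $\sigma^2$, for which one has the sub-Poisson bound $\EE\exp(sV) \le \exp\bigl((e^{sB^2}-1)\,\sigma^2/B^2\bigr)$; the stray term $2\theta V^{1/2}$ is absorbed beforehand by a tangent-line bound $V^{1/2}\le \sigma/2 + V/(2\sigma)$. Assembling the pieces (after a judicious choice of this auxiliary parameter) produces a Bernstein-type estimate $\EE\exp(\theta R) \le \exp\bigl(c_1 \theta \sigma + c_2 \theta^2 \sigma^2\bigr)$, valid as long as $\theta B$ stays below an absolute constant.

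Finally I would use $\Pr[R \ge t] \le e^{-\theta t}\,\EE e^{\theta R}$ and optimise $\theta$: taking $\theta$ of order $t/\sigma^2$ produces the Gaussian-shaped exponent $-t^2/(4\sigma^2)$, the constraint $\theta B \lesssim 1$ becomes precisely the hypothesis $t \le \sigma^2/B$ (past which only the sub-exponential branch would survive, exactly as in Theorems~\ref{thm:bernstein}--\ref{thm:smallest_eigenvalue_bernstein}; note also that the bound is non-trivial only when $\sigma \gtrsim B$), and the additive $1/4$ is the residue of the $\EE R \le \sigma$ shift and the tangent-line step. The only real obstacle is bookkeeping: symmetrisation and the linearisation of $V^{1/2}$ each cost constant factors, so the parameters must be chosen with some care to land on the clean constants $1/4$ and $1/(4\sigma^2)$. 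A technically different but equivalent route, which avoids symmetrisation altogether, is to run a direct exponential-supermartingale argument on the smoothed norm $v \mapsto (\delta^2 + \|v\|_2^2)^{1/2}$ and let $\delta \downarrow 0$ (the Yurinskii--Pinelis method), trading the combinatorial steps above for second-order Taylor estimates.
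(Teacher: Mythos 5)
The paper does not actually prove Theorem~\ref{thm:vector_bernstein}: it is stated as an imported result, with the reader directed to Ledoux--Talagrand (Ch.~6.3, eq.~(6.12)) and to the direct proof in \cite{gross_recovering_2011}. That proof is essentially the second route you mention in passing at the end --- a Yurinskii/Pinelis-style Laplace-transform argument run on the (smoothed) norm, controlling second-order increments and avoiding symmetrisation entirely. So what you propose as your primary route is a genuinely different approach from the one the paper is leaning on, and your ``alternative'' is closer to the actual source.

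Your primary plan (symmetrisation, then Rademacher conditioning via bounded differences, then a sub-Poisson bound on $V=\sum_k\|M_k\|_2^2$) is structurally sound and would yield a Bernstein-type tail of the right shape, with the constraint $t\le\sigma^2/B$ emerging exactly as you describe from the requirement $\theta B\lesssim 1$. But I do not think it lands on the stated constants, and the place it loses is identifiable: the step ``split by the triangle inequality and Cauchy--Schwarz'' forces you to evaluate the conditional Rademacher MGF at argument $2\theta$ rather than $\theta$, so the quadratic piece of the exponent becomes $(2\theta)^2 V/2=2\theta^2 V$ instead of $\theta^2 V/2$. After integrating out $V$ (with $\EE V=\sigma^2$) and optimising $\theta$, this produces a Gaussian exponent of order $-t^2/(8\sigma^2)$, a factor of two short of $-t^2/(4\sigma^2)$. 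A cleaner variant of your own idea avoids this: do not split the symmetrised sum at all, but keep the differences $W_k:=M_k-M_k'$, which are already symmetric (so the Rademacher signs come for free), satisfy $\|W_k\|_2\le 2B$, and have $\EE\|W_k\|_2^2=2\EE\|M_k\|_2^2$; the bounded-difference step then applies at argument $\theta$ with variance proxy $V'=\sum_k\|W_k\|_2^2$ of mean $2\sigma^2$, giving a quadratic term $\theta^2 V'/2$ whose expectation is $\theta^2\sigma^2$ and hence the exponent $-t^2/(4\sigma^2)$. Even with that fix, the additive $+1/4$ does not fall out of the ``shift by $\EE\|S\|\le\sigma$'' heuristic as cleanly as you suggest --- writing $\Pr[\|S\|\ge\sigma+s]\le\exp(-s^2/(4\sigma^2))$ and substituting $s=t-\sigma$ yields $\exp(-(t-\sigma)^2/(4\sigma^2))$, which is \emph{weaker} than $\exp(-t^2/(4\sigma^2)+1/4)$ once $t>\sigma$ --- so the bookkeeping you flag as ``the only real obstacle'' is not merely tedious; this particular route does not obviously reach the constants in the statement, and the Yurinskii-type argument you relegate to an aside is the one that actually does.
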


This particular vector-valued Bernstein inequality is based on the exposition in \cite[Chapter 6.3, equation (6.12)]{ledoux_probability_1991} and a direct proof can be found in \cite{gross_recovering_2011}.

\section{Proof Ingredients}

\subsection{Near-isotropicity}

In this section we study the \emph{measurement operator}\footnote{
We are going to use the notations $\MM (Z)$ and $\MM Z$ equivalently.
} 
\begin{eqnarray}
\RR : H^d & \to & H^d ,		
\quad
\RR := \sum_{l=1}^L \MM_l \quad \textrm{with} \label{eq:RR}\\
\MM_l Z 
&:=& \frac{1}{\nu^2dL} \sum_{k=1}^d \Pi_{F_{k,l}} Z = \frac{1}{\nu^2dL} \sum_{k=1}^d F_{k,l} \tr (F_{k,l} Z ),	\label{eq:MM}	
\end{eqnarray}
which just corresponds to $\RR = \frac{1}{\nu^2dL} \AA^* \AA$, where $\nu$ was defined in (\ref{eq:moment_constraint}).

The following result shows that this operator is \emph{near-isotropic}
in the sense of \cite{gross_partial_2013, candes_phase_2013}.

\begin{lemma}[$\RR$ is \emph{near-isotropic}] \label{lem:near_isotropic}
The operator $\RR$ defined in (\ref{eq:RR}) is \emph{near-isotropic} in the sense that
\begin{equation}
\EE [ \RR ] 
= L \EE \left[ \MM_l \right]
= \II + \Pi_{\Id}
\quad \textrm{or } \quad
\EE \left[ \RR (Z) \right] = Z + \tr (Z) \Id \quad \forall Z \in H^d. 		\label{eq:near_isotropic}
\end{equation}
\end{lemma}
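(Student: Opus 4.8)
The plan is to exploit the i.i.d.\ structure in the mask index $l$ and thereby reduce the claim to a single fourth--moment calculation for the scalar random variable $\ee$. Since $D_1,\dots,D_L$ are independent and identically distributed, so are the operators $\MM_l$ of \eqref{eq:MM}, whence $\EE[\RR]=\sum_{l=1}^L\EE[\MM_l]=L\,\EE[\MM_1]$; it therefore suffices to show $\EE[\MM_1 Z]=\tfrac1L\bigl(Z+\tr(Z)\Id\bigr)$ for every $Z\in H^d$.

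First I would expand in the standard basis. From \eqref{eq:Fkl} we have $\tr(F_{k,l}Z)=\sum_{i,j}\ee_{l,i}\ee_{l,j}\,\omega^{k(i-j)}Z_{ji}$, hence
\begin{equation*}
F_{k,l}\,\tr(F_{k,l}Z)=\sum_{i,j,p,q}\ee_{l,i}\ee_{l,j}\ee_{l,p}\ee_{l,q}\;\omega^{k(i-j)}\,\omega^{k(p-q)}\,Z_{qp}\,e_ie_j^* .
\end{equation*}
Taking the expectation over the entries of $D_l$, every term in which some index occurs exactly once or three times dies because $\EE[\ee]=\EE[\ee^3]=0$; only the ``fully paired'' part survives. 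Writing $\mathrm{diag}(Z):=\sum_iZ_{ii}e_ie_i^*$, a short computation shows that the diagonal index pattern $i=j=p=q$ produces $\EE[\ee^4]\,\mathrm{diag}(Z)$, the two--pair pattern $\{i=j,\ p=q\}$ produces $\nu^2\bigl((\tr Z)\Id-\mathrm{diag}(Z)\bigr)$, the pattern $\{i=q,\ j=p\}$ produces $\nu^2\bigl(Z-\mathrm{diag}(Z)\bigr)$ — these three being $k$--independent — while the pattern $\{i=p,\ j=q\}$ produces the only $k$--dependent contribution, $\nu^2\sum_{i\neq j}\omega^{2k(i-j)}Z_{ji}\,e_ie_j^*$.

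The decisive step is the summation over $k=1,\dots,d$. The three $k$--independent pieces simply pick up a factor $d$, and collecting the diagonal corrections leaves the coefficient $\bigl(\EE[\ee^4]-2\nu^2\bigr)$ in front of $\mathrm{diag}(Z)$; by the moment relation \eqref{eq:moment_constraint} this coefficient vanishes — this is the precise role of that hypothesis. The $k$--dependent piece becomes $\nu^2\sum_{i\neq j}\bigl(\sum_{k=1}^d\omega^{2k(i-j)}\bigr)Z_{ji}\,e_ie_j^*$, and since $d$ is odd, $2$ is invertible modulo $d$, so $2(i-j)\not\equiv0\pmod d$ whenever $i\neq j$ in $\{1,\dots,d\}$ and the inner Gauss--type sum vanishes. (This is the only place where oddness of $d$ enters; for even $d$ the value $i-j=d/2$ would survive and the statement genuinely fails.) What remains is $\sum_{k=1}^d\EE[F_{k,l}\tr(F_{k,l}Z)]=d\nu^2\bigl(Z+\tr(Z)\Id\bigr)$; dividing by $\nu^2dL$ and multiplying by $L$ gives $\EE[\RR Z]=Z+\tr(Z)\Id=(\II+\Pi_{\Id})Z$, which is \eqref{eq:near_isotropic}.

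I expect the only genuine difficulty to be the combinatorial bookkeeping: correctly enumerating the surviving fourth--moment index patterns (in particular keeping the diagonal term, with its weight $\EE[\ee^4]$, separate from the off--diagonal two--pair contributions) and tracking the several $\mathrm{diag}(Z)$ corrections so that they cancel against the weight $2\nu^2$ supplied by \eqref{eq:moment_constraint}. Everything else is a routine Fourier sum. For the abstract variant alluded to in the remark following the lemma, one would instead argue that the averaged operator commutes with the natural symmetry group acting on $H^d$ and must therefore be a linear combination of $\II$ and $\Pi_{\Id}$, with two explicit polynomial relations on the basis pinning down the two coefficients.
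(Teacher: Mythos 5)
Your proof is correct. You compute $\EE[\MM_1 Z]$ directly in the standard basis, expand $\EE[\ee_i\ee_j\ee_p\ee_q]$ into the all-equal pattern (weight $\EE[\ee^4]$) and the three two-pair patterns (weight $\nu^2$), observe that the moment condition \eqref{eq:moment_constraint} makes the coefficient $\EE[\ee^4]-2\nu^2$ of $\mathrm{diag}(Z)$ vanish, and use oddness of $d$ so that $\sum_{k=1}^d\omega^{2k(i-j)}=0$ for $i\neq j$, killing the single $k$-dependent pairing. I verified each of the four contributions and the final bookkeeping; the conclusion $\sum_k\EE[F_{k,l}\tr(F_{k,l}Z)]=d\nu^2(Z+\tr(Z)\Id)$ follows.

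This is a genuinely different route from the paper's, though it rests on the same fourth-moment combinatorics. The paper does not directly evaluate $\EE[\MM_l Z]$; instead it establishes the 2-design identity $\frac1{\nu^2 d}\sum_{k}\EE[F_{k,l}^{\otimes 2}]=2P_{\Sym^2}$ by acting on the tensor basis $e_i\otimes e_j$ (treating $i=j$ and $i\neq j$ separately) and then invokes an external result of Appleby et al.\ to convert the 2-design property into near-isotropicity. In the paper's version, oddness of $d$ enters through the invertibility of $2$ modulo $d$ in the diagonal case (so that $\delta_{2a,2i}$ forces $a=i$), which is the same algebraic fact you use to annihilate $\sum_k\omega^{2k(i-j)}$ — just deployed at a different point in the computation. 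Your version buys self-containedness (no citation to the Appleby lemma, no detour through $P_{\Sym^2}$); the paper's version buys a conceptual dividend, namely an explicit statement that the $\{D_l f_k\}$ form a projective 2-design, which the authors use to connect this work to their earlier spherical-design paper and to motivate the representation-theoretic generalization they sketch after the lemma. Your closing remark on how the abstract variant would go (commutant argument plus two polynomial constraints) is exactly the idea the paper gestures at.

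One small presentational caveat: be a bit more careful to state the two-pair patterns with the disjointness condition attached (e.g.\ ``$i=j,\ p=q,\ i\neq p$''), since without it the all-equal configuration would be triple-counted; you do handle this correctly in your $\mathrm{diag}(Z)$ bookkeeping, but the enumeration of ``pairings'' should make the exclusions explicit.
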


A proof of Lemma \ref{lem:near_isotropic} can be found in
\cite{candes_masked_2013}.  However, we still present a proof -- which
is of a slightly different spirit -- in the appendix for the sake of
being self-contained. 

Two remarks are in order with regard to the previous lemma.

First, it is worthwhile to point out that \emph{near-isotropicity} of
$\RR$ is equivalent to stating that the set of all possible
realizations of $D_l f_k$ form  a 2-design.
This has been made explicit recently in
\cite[Lemma 1]{appleby_group_2013}. The notion of higher-order
spherical designs is the basic mathematical object of our previous
work \cite{gross_partial_2013} on phase retrieval.

Second, our proof of Lemma~\ref{lem:near_isotropic} uses the explicit
representation of the measurement vectors with  respect to the
standard basis. As alluded to in Section~\ref{sec:outlook}, a more
abstract proof can be given. 
We sketch the basic idea here and refer the reader to an upcoming work
for details \cite{majenz_preparation_2014}.
Consider the case where $\epsilon$ is a symmetric random variable (i.e.,
where $\epsilon$ 
has the same distribution as $-\epsilon$). 
In that case, the distribution of the $D_l$ is plainly invariant under
permutations of the main diagonal elements and under element-wise sign
changes. These are the symmetries of the $d$-cube. They constitute the
group $\ZZ_2^d \rtimes S_d$, sometimes refered to as the
\emph{hyperoctahedral group}. Using a standard technique
\cite{gross_evenly_2007, kueng_stabilizer_2013}, conditions for
near-isotropicity can be phrased in terms of the representation theory
of the hyperoctahedral group acting on $\operatorname{Sym}^2(\CC^d)$.
This action decomposes into three explicitely identifiable irreducible
components, from which one can deduce that near-isotropicity holds for
any basis that fulfillls two 4th order polynomial equations 
\cite{majenz_preparation_2014}. 

Let now $x \in \CC^d$ be the signal we aim to recover. Since the
intensity of $x$ (i.e., its $\ell_2$-norm)  is known by assumption, we can w.l.o.g. assume that $\| x \|_{\ell_2} = 1$. As in \cite{candes_phaselift_2012,gross_partial_2013,candes_masked_2013} 
we consider the space
\begin{equation}
T := \left\{ xz^* + z x^*: \; z \in \CC^d \right\} \subset H^d 	\label{eq:T}
\end{equation}
which is the tangent space of the manifold of all rank-1 hermitian matrices at the point $X = x x^*$.
The orthogonal projection onto this space can be given explicitly:
\begin{eqnarray}
\PP_T: H^d & \to & H^d 		\nonumber	\\
Z & \mapsto & XZ +ZX - XZX 	\label{eq:PP_T1} \\
&=& XZ +ZX - \tr (XZ) X. 	\label{eq:PP_T2}
\end{eqnarray}
The Frobenius inner product allows us to define an ortho-complement $T^\perp$ of $T$ in $H^d$. We denote the projection onto $T^\perp$ by $\PP_T^\perp$
and decompose any matrix $Z \in H^d$ as
\begin{equation*}
Z = \PP_T Z + \PP_T^\perp Z =: Z_T + Z_T^\perp.
\end{equation*} 
We point out that, in particular,
\begin{equation}
\PP_T \Pi_{\Id} \PP_T = \Pi_X 	
\quad \textrm{and} \quad \| \PP_T Z \|_\infty \leq 2 \| Z \|_\infty \label{eq:aux11}
\end{equation}
holds for any $Z \in H^d$. 
The first fact follows by direct calculation, while the second one comes from
\begin{equation*}
\| Z_T \|_\infty = \| Z - Z_T^\perp \|_\infty \leq \| Z \|_\infty + \| Z_T^\perp \|_\infty \leq 2 \| Z \|_\infty
\end{equation*}
where the last estimate used the pinching inequality
\cite{bhatia_matrix_1997} (Problem II.5.4).

\subsection{Well-posedness/Injectivity}

In this section, we follow
\cite{candes_phaselift_2012,gross_recovering_2011,
candes_masked_2013} in order to establish a certain injectivity
property of the measurement operator $\AA$.  

Our Proposition~\ref{prop:robust_injectivity} is the analogue of
Lemma~3.7 in \cite{candes_masked_2013}. The latter contained a
factor of $\mathcal{O}(\log^2d)$ in the exponent of the failure
probability, which does not appear here.  The reason is that we employ
a single-sided Bernstein inequality, instead of a symmetric Hoeffding
inequality. 

\begin{proposition}[Robust injectivity, lower bound]		\label{prop:robust_injectivity}
With probability of failure smaller than $d^2 \exp \left( - \frac{ \nu^4 L}{C_1 b^8} \right)$ the inequality
\begin{equation}
\frac{1}{\nu^2 dL} \| \AA (Z) \|^2_{\ell_2} > \frac{1}{4} \| Z \|^2_2 \label{eq:robust_injectivity}
\end{equation}
is valid for all matrices $Z \in T$ simultaneously. Here $b$ and $\nu$ are as in (\ref{eq:apriori_constraint}, \ref{eq:moment_constraint}) and $C_1$ is an absolute constant.
\end{proposition}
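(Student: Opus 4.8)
The plan is to prove the lower bound \eqref{eq:robust_injectivity} by first establishing an ``on average'' version of the claim and then upgrading it to hold uniformly over the tangent space $T$ via a spectral argument. Concretely, for $Z \in T$ we have $\tfrac{1}{\nu^2 dL}\|\AA(Z)\|_{\ell_2}^2 = \tr\bigl(Z\,\RR(Z)\bigr) = \tr\bigl(Z\,\PP_T\RR\PP_T(Z)\bigr)$, so it suffices to lower-bound the smallest eigenvalue of the operator $\PP_T\RR\PP_T$ acting on $T$. By Lemma~\ref{lem:near_isotropic}, $\EE[\PP_T\RR\PP_T] = \PP_T(\II + \Pi_{\Id})\PP_T = \PP_T + \Pi_X \geq \PP_T$ as operators on $T$, so in expectation the quadratic form is at least $\|Z\|_2^2$ for $Z\in T$. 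The task is therefore to control the deviation $\PP_T\RR\PP_T - \EE[\PP_T\RR\PP_T]$ in operator norm and show it is smaller than $\tfrac34$ with the stated probability.

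The second step is to apply the Smallest Eigenvalue Bernstein inequality (Theorem~\ref{thm:smallest_eigenvalue_bernstein}), or equivalently the uniform operator Bernstein inequality (Theorem~\ref{thm:bernstein}), to the sum $\PP_T\RR\PP_T = \sum_{l=1}^L \PP_T\MM_l\PP_T$, whose summands are i.i.d.\ and centered after subtracting the mean. For this I need two ingredients: (i) an almost-sure bound $\|\PP_T\MM_l\PP_T - \EE[\PP_T\MM_l\PP_T]\|_{\op} \leq \overline{R}$, and (ii) a bound on the variance parameter $\sigma^2 = \bigl\|\sum_l \EE[(\PP_T\MM_l\PP_T - \EE[\cdots])^2]\bigr\|_{\op}$. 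The operator norm bound follows from the structure $\MM_l Z = \tfrac{1}{\nu^2 dL}\sum_k F_{k,l}\tr(F_{k,l}Z)$ together with $\|F_{k,l}\|_1 = \|D_l f_k\|_{\ell_2}^2 \leq b^2 d$ (since $|\ee_{l,i}|\leq b$) and $\|\PP_T\,\cdot\,\|_\infty \leq 2\|\cdot\|_\infty$ from \eqref{eq:aux11}; this yields $\overline{R} = \mathcal{O}(b^4/(\nu^2 L))$ up to constants. For the variance, one expands $\EE[(\PP_T\MM_l\PP_T)^2]$ and uses near-isotropicity again plus the same norm bounds to obtain $\sigma^2 = \mathcal{O}(b^4/(\nu^2 L))$ as well (the factor $1/L^2$ per summand times $L$ summands gives $1/L$; I expect the scaling $\sigma^2 \lesssim \overline{R}$ here). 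Feeding $t = 1/4$, $\overline{R},\sigma^2 = \mathcal{O}(b^4/(\nu^2 L))$ into the Bernstein bound and noting that the ambient dimension of $T$ is $\mathcal{O}(d)$ — but bounding crudely by $d^2$ — produces a failure probability of the form $d^2\exp(-\nu^4 L/(C_1 b^8))$, matching the statement. The strict inequality in \eqref{eq:robust_injectivity} rather than $\geq$ is obtained because the event $\lambda_{\min}(\PP_T\RR\PP_T) > 3/4$ is what Bernstein controls, and $3/4 > 1/4$ with room to spare (indeed one even has the margin coming from the $\Pi_X \geq 0$ term).

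I would organize the write-up as: (1) reduce to bounding $\lambda_{\min}$ of $\PP_T\RR\PP_T$ via the identity $\tfrac{1}{\nu^2 dL}\|\AA(Z)\|_{\ell_2}^2 = \tr(Z\,\PP_T\RR\PP_T\,Z)$ for $Z\in T$; (2) compute the mean using Lemma~\ref{lem:near_isotropic}; (3) establish the uniform bound $\|\PP_T\MM_l\PP_T\|_{\op} \lesssim b^4/(\nu^2 L)$ almost surely, hence a centering bound for $\overline{R}$; (4) estimate the variance $\sigma^2$; (5) invoke Theorem~\ref{thm:smallest_eigenvalue_bernstein} and collect the constants. The main obstacle I anticipate is Step~(4): computing or cleanly upper-bounding the variance parameter $\sigma^2 = \|\sum_l \EE[\MM_l^{\,\prime\,2}]\|_{\op}$ (where $\MM_l' = \PP_T\MM_l\PP_T$ minus its mean) requires expanding a fourth-order expression in the $\ee_{l,i}$'s, and one must use the moment condition \eqref{eq:moment_constraint} — i.e.\ that the ensemble is a $2$-design — to see the cross terms collapse; a naive bound that ignores this structure would lose the sharp scaling and reintroduce spurious logarithmic factors, which is precisely the improvement over \cite[Lemma~3.7]{candes_masked_2013}. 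A secondary technical point is verifying that the almost-sure norm bound on individual summands is genuinely $\mathcal{O}(1/L)$ — this is what makes the ``Bernstein regime'' $t \geq \sigma^2/\overline{R}$ or $t \leq \sigma^2/\overline{R}$ land correctly so that the exponent scales linearly in $L$.
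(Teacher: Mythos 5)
There is a genuine gap in your proposal, and it is precisely at the point you considered secondary — the almost-sure norm bound on the individual summands. You claim $\|\PP_T\MM_l\PP_T\|_{\op} \lesssim b^4/(\nu^2 L)$ almost surely, but this is off by a factor of $d^2$. Recall $\MM_l = \frac{1}{\nu^2 dL}\sum_{k=1}^d \Pi_{F_{k,l}}$ and each $F_{k,l} = D_l f_k (D_l f_k)^*$ is rank one with $\|F_{k,l}\|_1 = \|F_{k,l}\|_\infty = \|D_l f_k\|_{\ell_2}^2 \leq b^2 d$, so $\|\Pi_{F_{k,l}}\|_{\op} = \|F_{k,l}\|_2^2 \leq b^4 d^2$; summing the $d$ terms and dividing by $\nu^2 d L$ gives $\|\MM_l\|_{\op} \leq b^4 d^2/(\nu^2 L)$, and that bound is essentially saturated when a single $D_l f_k$ aligns with the top eigenvector of $Z$. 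If you feed $\overline{R} = \mathcal{O}(b^4 d^2/(\nu^2 L))$ into the two-sided operator Bernstein bound (Theorem~\ref{thm:bernstein}) with $t \asymp 1$, the $\overline{R}t$ term dominates the denominator and you land in the regime where the exponent is $\asymp -\nu^2 L/(b^4 d^2)$, which would require $L \gtrsim d^2$ masks — far worse than the claimed $\log^2 d$.

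The paper's proof avoids this by observing that Theorem~\ref{thm:bernstein} and Theorem~\ref{thm:smallest_eigenvalue_bernstein} are \emph{not} interchangeable here. Because each $\widetilde{\MM}_l = \PP_T\MM_l\PP_T$ is positive semidefinite (it is a nonnegative combination of rank-one projections), the centered summand satisfies $\widetilde{\MM}_l - \EE[\widetilde{\MM}_l] \geq -\EE[\widetilde{\MM}_l]$, and the right-hand side is deterministic with operator norm $\mathcal{O}(1/(\nu^2 L))$ by near-isotropicity — a bound that is dimension-free precisely because it only controls the negative part. Theorem~\ref{thm:smallest_eigenvalue_bernstein} only needs $\lambda_{\min}(M_l) \geq -\underline{R}$, so one can take $\underline{R} = 2/(\nu^2 L)$, and the quadratic-in-$t$ branch of the Bernstein bound then gives the exponent $\asymp -\nu^4 L/b^8$ via the variance estimate from Lemma~\ref{lem:aux1}. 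This one-sided structure is the central mechanism that lets the paper improve over the Hoeffding-based argument of Cand\`es et al.\ (it is flagged explicitly in the remark preceding the proposition), and it is the missing ingredient in your proposal. A secondary point: your predicted variance scaling $\sigma^2 = \mathcal{O}(b^4/(\nu^2 L))$ should be $\mathcal{O}(b^8/(\nu^4 L))$ — this is what Lemma~\ref{lem:aux1} actually delivers, and the extra powers of $b^2/\nu$ are what produce the constant $C_1 b^8/\nu^4$ in the final exponent.
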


We require bounds on certain variances for the
proof of this statement. The technical Lemma \ref{lem:aux1} serves this
purpose.

\begin{lemma} \label{lem:aux1}
Let $Z \in T$ be an arbitrary matrix and let $\MM_l$ be as in (\ref{eq:MM}).
Then it holds that
\begin{equation}
\left\| \EE \left[ \MM_l(Z)^2 \right] \right\|_\infty \leq \frac{30 b^8}{\nu^4 L^2} \| Z \|_2^2, 	\label{eq:aux1}
\end{equation}
and
\begin{equation}
\left\| \EE \left[  (\PP_T \MM_l (Z))^2 \right] \right\|_\infty 
\leq \tr \left( \EE \left[ \left( \PP_T \MM_l (Z) \right)^2 \right] \right)
\leq \frac{60 b^8}{\nu^4 L^2} \| Z \|_2^2.	\label{eq:aux2}
\end{equation}
\end{lemma}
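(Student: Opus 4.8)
The plan is to compute $\EE[\MM_l(Z)^2]$ explicitly in coordinates, exploiting the fact that $Z \in T$ has rank at most $2$ and that $\MM_l(Z) = \frac{1}{\nu^2 dL}\sum_k F_{k,l}\tr(F_{k,l}Z)$ is a Hermitian matrix built out of the diagonal-modulated Fourier projectors. First I would write $Z = xz^* + zx^*$ with $x$ the (unit-norm) signal and $z \in \CC^d$, and expand $\tr(F_{k,l}Z) = \langle D_l f_k, Z\, D_l f_k\rangle = 2\,\mathrm{Re}\,\overline{\langle f_k, D_l x\rangle}\langle f_k, D_l z\rangle$ using \eqref{eq:Fkl}. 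Plugging in $f_k = \sum_j \omega^{jk}e_j$, the key observation is that summing $F_{k,l}\tr(F_{k,l}Z)$ over $k$ from $1$ to $d$ enforces an index-matching (the geometric sum $\sum_k \omega^{k(\cdot)} = d$ when the exponent vanishes mod $d$, else $0$). This is exactly where the combinatorial structure of the Fourier basis enters, as flagged in Section~\ref{sec:outlook}. Carrying out this reduction turns $\sum_k F_{k,l}\tr(F_{k,l}Z)$ into a sum over a small number of diagonal/off-diagonal patterns in the $\ee_{l,i}$'s.

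Next I would take the expectation over the mask variables $\ee_{l,i}$. Because the $\ee_{l,i}$ are independent, centered, have vanishing third moment, and satisfy the fourth-moment relation \eqref{eq:moment_constraint} together with $|\ee|\le b$ a.s., only products of the $\ee$'s where every index appears an even number of times survive, and each such factor is bounded by $b$ raised to the appropriate power (with $\EE[\ee^2]=\nu$, $\EE[\ee^4]=2\nu^2 \le 2b^2\nu$). After squaring $\MM_l(Z)$ and taking expectations we obtain a Hermitian matrix that is a combination of terms like (multiples of) $X = xx^*$, $zz^*$, $xz^*+zx^*$, and $\Id$, with scalar coefficients that are themselves bounded by small integer multiples of $b^8/(\nu^4 L^2)$ after accounting for the $1/(\nu^2 dL)^2$ prefactor, the two internal $k$-sums each contributing a factor $d$, and the moment bounds. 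Bounding $\|\cdot\|_\infty$ of this combination by the triangle inequality, and using $\|x\|_{\ell_2}=1$ together with $\|z\|_{\ell_2}^2 \le \|Z\|_2^2$ (since $\|Z\|_2^2 = 2(\|x\|^2\|z\|^2 + |\langle x,z\rangle|^2) \ge 2\|z\|_{\ell_2}^2$, hence $\|z\|_{\ell_2}^2\le \tfrac12\|Z\|_2^2$), yields \eqref{eq:aux1} with the stated constant $30$; keeping track of the combinatorics is what fixes that particular number.

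For the second inequality \eqref{eq:aux2}, the operator-norm bound follows from the trace bound since $\EE[(\PP_T\MM_l(Z))^2]$ is positive semidefinite and supported (as an operator image) on the at-most-two-dimensional space $T$-compatible range, so $\|\cdot\|_\infty \le \tr(\cdot)$. To bound the trace, I would write $\tr\big(\EE[(\PP_T\MM_l(Z))^2]\big) = \EE\big[\|\PP_T\MM_l(Z)\|_2^2\big]$ and expand $\|\PP_T\MM_l(Z)\|_2^2 = \tr(\MM_l(Z)\,\PP_T\MM_l(Z))$ using the explicit formula \eqref{eq:PP_T2} for $\PP_T$. This reduces to evaluating $\langle x, \MM_l(Z) x\rangle$ and $\tr(\MM_l(Z)^2)$-type quantities, which can be read off from the same coordinate expansion already performed for \eqref{eq:aux1}; the factor-of-two loss (constant $60$ versus $30$) is exactly the price of the three-term formula for $\PP_T$ and the pinching-type estimate $\|\PP_T W\|_2 \le \sqrt 2\,\|W\|_2$ on the rank-$\le 2$ range. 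The main obstacle is purely bookkeeping: organizing the quadruple sum $\sum_{k,k'}\sum_{i,j,i',j'}$ after Fourier index-matching and after taking mask expectations, so that the surviving terms are correctly identified and their coefficients correctly bounded — there is no conceptual difficulty, but the computation is delicate and the constants $30$ and $60$ depend on doing it carefully. I would structure the write-up by first stating the reduced form of $\sum_k F_{k,l}\tr(F_{k,l}Z)$ as an intermediate claim, then feeding it into both \eqref{eq:aux1} and \eqref{eq:aux2}.
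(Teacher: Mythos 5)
Your high-level strategy---expand $\MM_l(Z)$ in coordinates, use the Fourier sum $\tfrac1d\sum_k\omega^{k(\cdot)}$ to collapse to an index constraint, take mask expectations and observe that only pairings (matchings) survive, then bound each matching via $|\ee|\le b$ and Cauchy--Schwarz---is the same idea the paper uses. But there are two substantive problems with the route you chose. First, you parametrize $Z = xz^* + zx^*$ with $x$ the fixed signal and $z\in\CC^d$, then estimate $\|z\|_{\ell_2}^2 \le \tfrac12\|Z\|_2^2$ based on the identity $\|Z\|_2^2 = 2(\|x\|^2\|z\|^2 + |\langle x,z\rangle|^2)$. That identity is wrong: the correct computation gives $\|Z\|_2^2 = 2\|z\|^2 + 2\,\mathrm{Re}\bigl(\langle z,x\rangle^2\bigr)$ (with $\|x\|=1$), and the $\mathrm{Re}$ term can be negative. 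In particular $z = i\,x$ gives $Z=0$ while $\|z\|=1$, so the map $z \mapsto xz^*+zx^*$ is not injective and no bound of $\|z\|$ by $\|Z\|_2$ is possible. The paper sidesteps this entirely by taking the eigendecomposition $Z = \lambda_1 yy^* + \lambda_2 zz^*$ with $y\perp z$ orthonormal, then bounding $\|\EE[\MM_l(yy^*)\MM_l(zz^*)]\|_\infty$ for \emph{unit} vectors and combining linearly via $\|Z\|_1^2\le 2\|Z\|_2^2$. This reduction also keeps the two $\MM_l$ factors in the product rank-one, which makes the index bookkeeping considerably cleaner than expanding $\tr(F_{k,l}Z)$ as $2\,\mathrm{Re}\,\overline{\langle f_k,D_lx\rangle}\langle f_k,D_lz\rangle$ (whose square introduces extra cross terms). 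Second, your claim that $\EE[\MM_l(Z)^2]$ is a finite combination of $xx^*$, $zz^*$, $xz^*+zx^*$, $\Id$ with scalar coefficients is not established and appears to be false as stated: after the matching reduction, some terms are diagonal in the standard basis with entries $|v_a|$-type weights, not multiples of $\Id$. The paper never claims such a decomposition; it bounds $\|\EE[\MM_l(yy^*)\MM_l(zz^*)]v\|_{\ell_2}$ directly via Cauchy--Schwarz, matching by matching (there are $15$), which is what pins down the constant $30$.

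For \eqref{eq:aux2} your first step ($\|\cdot\|_\infty\le\tr(\cdot)$ for PSD) is fine---though the rank justification is unnecessary; the inequality is true for any PSD matrix. But your explanation of the factor $2$ via ``the pinching-type estimate $\|\PP_T W\|_2 \le \sqrt2\,\|W\|_2$'' is off: $\PP_T$ is an orthogonal projection in the Frobenius inner product, so $\|\PP_T W\|_2\le\|W\|_2$ with constant $1$, and in any case that is not where the factor comes from. The paper obtains $60=2\cdot30$ by expanding $\PP_T$ via \eqref{eq:PP_T1}, cancelling with $X^2=X$, dropping the nonnegative term $\tr\bigl(\EE[(X\MM_l(Z))(\MM_l(Z)X)]\bigr)$, and then using $\tr\bigl(X\,\EE[\MM_l(Z)^2]\bigr)=\langle x,\EE[\MM_l(Z)^2]x\rangle\le\|\EE[\MM_l(Z)^2]\|_\infty$, so that \eqref{eq:aux2} follows directly from \eqref{eq:aux1}. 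Finally, since neither the $15$-matching count nor the constants $30,60$ are actually verified in your sketch, the proposal in its current form does not close the lemma.
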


In the following proof we will use that for $a,b \in \ZZ_d = \left\{0,\ldots,d-1 \right\}$ one has
\begin{equation}
\frac{1}{d} \sum_{k=1}^d \omega^{k (a \ominus b)} = \delta_{a,b} =
\begin{cases}
1 & \textrm{if } a = b,	\\
0 & \textrm{else}.
\end{cases}		\label{eq:kronecker}
\end{equation}
The symbols $\oplus$ and $\ominus$ denote addition and subtraction modulo $d$. 

\begin{proof}[Proof of Lemma \ref{lem:aux1}]
	Let $y,z,v \in \CC^d$ be vectors of unit length. 
	Compute:
	\begin{eqnarray}
		&&
		\nu^4 L^2
		\EE\left[
			\MM_l(yy^*) \MM_l(z z^*)
		\right] v \nonumber \\
		&=&
		\frac1{d^2}
		\sum_{k,j=1}^d 
		\EE \left[ 
		\left( \sum_{i_3,i_4=1}^d \ee_{i_3} \ee_{i_4}
		\omega^{k(i_3-i_4)} \bar y_{i_3} y_{i_4} \right)
		\left( \sum_{i_5,i_6=1}^d \ee_{i_5} \ee_{i_6} \omega^{j(i_5-i_6)}
		\bar z_{i_5} z_{i_6} \right)		\right.		
		\label{eqn:7def} 
		\\
		&\times& \left.  
		\sum_{i_1,i_2,i_7,i_8=1}^d 
		\ee_{i_1} \ee_{i_2} \omega^{k(i_2-i_1)}  
		\ee_{i_7} \ee_{i_8} \omega^{j(i_8-i_7)} e_{i_2} \delta_{i_1, i_8} v_{i_7}  \right]
		\nonumber \\ 
		&=& 
		\sum_{i_1,\ldots, i_7} 
		\EE \left[ \ee_{i_1}^2 \ee_{i_2} \cdots \ee_{i_7} \right] 
		\left( \frac{1}{d} \sum_{k} \omega^{k(i_2 +i_3 - i_1 - i_4)} \right)
		\left( \frac{1}{d} \sum_{j} \omega^{j (i_5 + i_1 - i_6 - i_7)}
		\right) \nonumber	\\
		& \times & 
		\bar y_{i_3} y_{i_4}
		\bar z_{i_5} z_{i_6}
		v_{i_7}
		\,
		e_{i_2} \nonumber \\
		&=& 
		\sum_{i_1,\ldots,i_7} \EE \left[ \ee_{i_1}^2 \ee_{i_2} \cdots \ee_{i_7} \right] 
		\delta_{i_1,(i_2 \oplus i_3 \ominus i_4)} \delta_{i_1,(i_6 \oplus i_7 \ominus i_5)}
		\bar y_{i_3} y_{i_4}
		\bar z_{i_5} z_{i_6}
		v_{i_7}
		\,
		e_{i_2} \label{eqn:7ft} \\
		&=& 
		\sum_{i_2,\ldots,i_7}
		\EE \left[ \ee_{i_2 \oplus i_3 \ominus i_4}^2 \ee_{i_2} \cdots \ee_{i_7} \right] 
		\delta_{i_2,(i_4 \oplus i_6 \oplus i_7 \ominus i_3  \ominus i_5)} 
		\bar y_{i_3} y_{i_4}
		\bar z_{i_5} z_{i_6}
		v_{i_7}
		\,
		e_{i_2}, \label{eqn:7label} 
	\end{eqnarray}
	where in (\ref{eqn:7def}) we have inserted the definition of $\MM_l$,
	in (\ref{eqn:7ft}) have made use of  
	(\ref{eq:kronecker}), and in (\ref{eqn:7label}) we have 
	eliminated $i_1$.
	We now make the crucial observation that the expectation
	\begin{equation}\label{eqn:Eeps}
		\EE \left[ \ee_{i_2 \oplus i_3 \ominus i_4}^2 \ee_{i_2} \cdots \ee_{i_7} \right] 
	\end{equation}
	vanishes unless every number in $i_2, \dots, i_7$ appears at least
	twice. More formally, the expectation is zero unless the set $\{2,
	\dots, 7\}$ can be partitioned into a disjoint union of pairs
	$
		\{2, \dots, 7\} = \bigcup_{\{k,l\} \in E} \{k,l\}
	$
	such that $i_k = i_l$ for every $\{k,l\}\in E$ (in graph theory,
	$E$ would be a set of edges constituting a
	\emph{matching}).
	Indeed, assume to the contrary that there is some $j$
	such that $i_j$ is unmatched (i.e., \ $i_j \neq i_k$ for all $k\neq
	j$). We distinguish two cases: If $i_j \neq i_2\oplus i_3 \ominus i_4$,
	then $\epsilon_j$ appears only once in the product in
	(\ref{eqn:Eeps}) and the expectation vanishes because
	$\EE[\epsilon_j]=0$ by assumption. If
	$i_j= i_2\oplus i_3 \ominus i_4$, then the same conclusion holds
	because we have also assumed that $\EE[\epsilon_j^3]=0$ (this is the
	only point in the argument where we need third moments of
	$\epsilon$ to
	vanish).

	With this insight, we can proceed to put a tight bound on the
	$\ell_2$-norm of the initial expression.
	\begin{eqnarray}
		&&
		\|
			\nu^4 L^2
			\EE\left[
				\MM(yy^*) \MM(z z^*)
			\right] v 
		\|_{\ell_2}
		\nonumber \\
		&=&
		\Big\|
			\sum_{i_2,\ldots,i_7=1}^d
		\EE \left[ \ee_{i_2 \oplus i_3 \ominus i_4}^2 \ee_{i_2} \cdots \ee_{i_7} \right] 
		\delta_{i_2,(i_4 \oplus i_6 \oplus i_7 \ominus i_3  \ominus i_5)} 
		\bar y_{i_3} y_{i_4}
		\bar z_{i_5} z_{i_6}
		v_{i_7}
		\,
		e_{i_2}
		\Big\|_{\ell_2} \nonumber \\
		&\leq&
		\Big\|
			\sum_{i_2,\ldots,i_7=1}^d
		\EE \left[ \ee_{i_2 \oplus i_3 \ominus i_4}^2 \ee_{i_2} \cdots \ee_{i_7} \right] 
		\bar y_{i_3} y_{i_4}
		\bar z_{i_5} z_{i_6}
		v_{i_7}
		\,
		e_{i_2}
		\Big\|_{\ell_2} \nonumber \\ 
		&\leq&
		\sum_{\text{matchings } E}
		\Big\|
			\sum_{
				\substack{
				i_2,\ldots,i_7\\
				i_k = i_l \text{ for } \{k,l\} \in E
				}
			} 
			\left|
			\EE \left[ \ee_{i_2 \oplus i_3 \ominus i_4}^2 \ee_{i_2} \cdots \ee_{i_7} \right] 
		\bar y_{i_3} y_{i_4}
		\bar z_{i_5} z_{i_6}
		v_{i_7}
			\right|
			\,
			e_{i_2}
		\Big\|_{\ell_2} \nonumber \\ 
		&\leq&
		b^8
		\sum_{\text{matchings } E}
		\Big\|
			\sum_{
				\substack{
				i_2,\ldots,i_7\\
				i_k = i_l \text{ for } \{k,l\} \in E
				}
			} 
			\left|
		\bar y_{i_3} y_{i_4}
		\bar z_{i_5} z_{i_6}
		v_{i_7}
			\right|
			\,
			e_{i_2}
		\Big\|_{\ell_2},  \label{eqn:7cs}
	\end{eqnarray}	
	where the three inequalities follow, in that order, by realizing
	that making individual coefficients of $e_{i_2}$ larger will
	increase the norm; restricting to non-zero expectation values as per
	the discussion above; and using the assumed bound $|\ee | \leq b$.

	Now fix a matching $E$. Let $x^{(1)}$ be the vector in $\{v, \bar y, y,
	\bar z, z\}$ whose index in (\ref{eqn:7cs}) is paired with $i_2$.
	Label the remaining four vectors in that set by $x^{(2)}, \dots,
	x^{(5)}$, in such a way that $x^{(2)}$ and $x^{(3)}$ are paired and
	the same is true for
	$x^{(4)}$ and $x^{(5)}$. Then the summand corresponding to that matching becomes
    \begin{eqnarray*}
        &&
        \|
            \sum_{a,b,c=1}^d
            \left|
                x^{(1)}_{a}
                x^{(2)}_{b}
                x^{(3)}_{b}
                x^{(4)}_{c}
                x^{(5)}_{c}
            \right|
            \,
            e_{a}
        \|_{\ell_2} \\
        &=&
        \left(
            \sum_{b=1}^d |x^{(2)}_b x^{(3)}_b|
        \right)
        \left(
        \sum_{c=1}^d |x^{(4)}_c x^{(4)}_c|
        \right)
        \left\|
            \sum_{a=1}^d
                |x^{(1)}_a| e_a
        \right\|_{\ell_2} \leq 1,
    \end{eqnarray*}   
    by the Cauchy-Schwarz inequality 
    and the fact that all the $x^{(i)}$ are of length one. As there are $15$ possible matchings of $6$ indices, we
	arrive at
	\begin{equation*}
		\left\|
			\EE\left[
				\MM(yy^*) \MM(z z^*)
			\right] v 
		\right\|_{\ell_2}
		\leq
		\frac{15b^8}{\nu^4 L^2}.
	\end{equation*}

	Finally, let $Z\in T$. As $Z$ has rank at most two, we can choose normalized vectors
	$y, z \in \CC^d$ such that
	$Z=\lambda_1 y y^* + \lambda_2 z z^*$. Then
	\begin{equation*}
		\left\|	\EE[\MM(Z)^2] \right\|_\infty
		\leq
		\sum_{i,j=1}^2 | \lambda_i | \; |\lambda_j | \, \frac{15 b^8}{ \nu^4 L^2}
		=
		\|Z\|_1^2 \frac{15b^8}{\nu^4L^2}
		\leq
		\|Z\|_2^2 \frac{30b^8}{\nu^4 L^2}.
	\end{equation*}

For (\ref{eq:aux2}) we start by noting positive-semidefiniteness of $\EE \left[ \left( \PP_T \MM_l (Z) \right)^2 \right]$ implies the first inequality. 
In order to bound the trace-term, we insert (\ref{eq:PP_T1}) for $\PP_T$, expand the product, cancel terms using $X^2 = X = x x^*$ and use cyclicity of the trace to arrive at
\begin{eqnarray*}
\tr \left( \EE \left[ \left( \PP_T \MM_l (Z) \right)^2 \right] \right) 
&=& 2 \tr \left( \EE \left[ X \MM_l (Z)^2 \right] \right) 
- \tr \left( \EE \left[ (X \MM_l (Z) )(\MM_l (Z) X ) \right] \right) \\
& \leq & 2 \tr \left( X \;\EE \left[ \MM_l (Z)^2 \right] \right)
= 2 \langle x, \EE \left[ \MM_l (Z)^2 \right] x \rangle \\
& \leq & 2 \| \EE \left[ \MM_l (Z)^2 \right] \|_\infty.
\end{eqnarray*}
The upper bound in \eqref{eq:aux2} is thus implied by \eqref{eq:aux1}. 
\end{proof}

With Lemma \ref{lem:aux1} at hand, we can proceed to the lower bound on robust injectivity.

\begin{proof}[Proof of Proposition \ref{prop:robust_injectivity}]
We strongly follow the ideas presented in \cite[Proposition 9]{gross_partial_2013} and aim to show the more general statement 
\begin{equation}
\Pr \left[ (\nu^2 dL)^{-1} \| \AA (Z) \|_{\ell_2}^2 \leq (1-\delta) \| Z \|_2^2 \quad \forall Z \in T \right] \leq d^2 \exp \left( - \frac{ \nu^4 \delta^2  L}{\tilde{C}_1 b^8} \right) \label{eq:general_robust_inj}
\end{equation}
for any $\delta \in (0,1)$, where $\tilde{C}_1$ is a numerical constant.

Pick $Z \in T$ arbitrary and use \emph{near isotropicity} (\ref{eq:near_isotropic}) of $\RR$ in order to write
\begin{eqnarray}
& & (\nu^2 dL)^{-1} \| \AA (Z) \|_{\ell_2}^2 	\nonumber \\
&=& (\nu^2 dL)^{-1} \sum_{l=1}^L \sum_{k=1}^d \left( \tr (F_{k,l} Z ) \right)^2 	
= \tr \Big( Z \frac{1}{\nu^2 dL} \sum_{l=1}^L \sum_{k=1}^d F_{k,l} \tr (F_{k,l} Z ) \Big) 	\nonumber	\\
&=& \tr (Z \RR Z ) 
= \tr \left( Z ( \RR- \EE[\RR] ) Z \right) + \tr \left( Z (\II+\Pi_{\Id} ) Z \right)	\nonumber \\
&=& \tr \left( Z \PP_T ( \RR - \EE [ \RR ] ) \PP_T Z \right) + \tr (Z^2) + \tr (Z)^2	\nonumber \\
& \geq & \tr \left(Z \PP_T(\RR-\EE[\RR]) \PP_T Z \right)+ \tr (Z^2)	 	\nonumber	\\
& \geq& (1 + \lambda_{\min} \left( \PP_T (\RR - \EE [ \RR]) \PP_T \right) ) \| Z \|_2^2, 	\label{eq:robust_inj_aux1} 
\end{eqnarray}
where we have used the fact that $\MM \geq \lambda_{\min} (\MM) \II$ for any matrix valued operator $\MM$ as well as $\PP_T Z = Z$.
Therefore it suffices to to bound the smallest eigenvalue of $\PP_T (\RR - \EE [\RR] ) \PP_T $ from below.
To this end we aim to use the Operator Bernstein inequality -- Theorem \ref{thm:smallest_eigenvalue_bernstein} -- and decompose
\begin{equation*}
\PP_T (\RR - \EE [ \RR ] ) \PP_T = \sum_{l=1}^L \left( \widetilde{\MM}_l - \EE [ \widetilde{\MM}_l ] \right)
\quad \textrm{with} \quad
\widetilde{\MM}_l = \PP_T \MM_l \PP_T,
\end{equation*}
where $\MM_l$ was defined in (\ref{eq:MM}).
Note that these summands have mean zero by construction. Furthermore (\ref{eq:aux11}) implies
\begin{eqnarray*}
- \frac{1}{\nu^2 L} \II - \frac{1}{\nu^2 L} \Pi_{X} 
&\leq& - \frac{1}{\nu^2 L} \PP_T \II \PP_T - \frac{1}{\nu^2 L} \PP_T \Pi_{\Id} \PP_T 	
= - \frac{1}{ L} \PP_T \EE [ \RR ] \PP_T 	\\
&=& - \PP_T \EE[ \MM_l ] \PP_T 
\leq \widetilde{\MM}_l  - \EE [\widetilde{\MM}_l  ],
\end{eqnarray*}
where the last inequality follows from $\widetilde{\MM}_l  \geq 0$. This yields an a priori bound
\begin{equation*}
\lambda_{\min} (\widetilde{\MM}_l - \EE[ \widetilde{\MM}_l ] ) \geq -2/(\nu^2 L) =: - \underline{R}.
\end{equation*}
For the variance we use the standard identity
\begin{equation*}
0 \leq \EE \left[ (\widetilde{\MM}_l - \EE [ \widetilde{\MM}_l ] )^2 \right] = \EE \left[ \widetilde{\MM}_l ^2 \right] - \EE \left[ \widetilde{\MM}_l  \right]^2 \leq \EE \left[ \widetilde{\MM}_l ^2 \right]
\end{equation*}
and focus on the last expression. For obtaining a bound on the total variance we are going to apply (\ref{eq:self_dual_Frobenius}) to $\| \EE [ \widetilde{\MM}_l^2 ] \|_\op$. 
To this end, fix $Z \in T$ arbitrary -- this restriction is valid, due to the particular structure of $\widetilde{\MM}_l$ --  and observe
\begin{eqnarray*}
| \tr \left( Z \; \EE \left[ \widetilde{\MM}_l ^2 \right] Z \right) | 
&=& |   \EE \left[ \tr \left( \MM_l (Z) \PP_T \MM_l (Z) \right] \right)| 		
= | \tr \left( \EE \left[ (\PP_T \MM_l (Z))^2 \right] \right)| 	\\
& \leq &  2 \| \EE \left[ (\PP_T \MM_l  (Z))^2  \right] \|_\infty 
\leq \frac{120 b^8}{\nu^4 L^2} \| Z \|_2^2.
\end{eqnarray*}
The first equality follows from inserting the definition (\ref{eq:MM}) of $\MM_l$ and rewriting the expression of interest. For the second equality, we have used the fact that $\tr (A B_T) = \tr (A_T B_T)$ for any matrix pair $A,B \in H^d$ ($\PP_T$ is an orthogonal projection with respect to the Frobenius inner product)  and the last estimate is due to (\ref{eq:aux2}) in Lemma \ref{lem:aux1}. 
Since $Z \in T$ was arbitrary, we have obtained a  bound on $\| \EE [ \widetilde{\MM}_l^2 ] \|_\op$
which in turn allows us to set $\sigma^2 := \frac{120 b^8}{\nu^4 L}$ for the variance. 
Now we are ready to apply Theorem \ref{thm:smallest_eigenvalue_bernstein} which implies
\begin{equation*}
\Pr \left[ \lambda_{\min} \left( \PP_T \left( \RR - \EE[\RR] \right) \PP_T \right) \leq - \delta \right] 
\leq d^2 \exp \left( - \frac{ \nu^4 \delta^2 L}{\tilde{C}_1 b^8}\right)
\end{equation*}
 for any $0 \leq \delta \leq 1 < 60 b^8/ \nu^2 = \sigma^2 / \underline{R}$ and $\widetilde{C}_1$ is an absolute constant. 
This gives a suitable bound on the probability of the undesired event
\begin{equation*}
\left\{ \lambda_{\min} \left( \PP_T ( \RR - \EE [ \RR] ) \PP_T \right) \leq - \delta \right\}.
\end{equation*}
If this is not the case,  (\ref{eq:robust_inj_aux1}) implies
\begin{equation*}
(dL)^{-1} \| \AA (Z) \|_{\ell_2}^2 > (1-\delta) \| Z \|_2^2 
\end{equation*}
for all matrices $Z \in T$ simultaneously. This proves (\ref{eq:general_robust_inj}) and setting $\delta = 3/4$ yields Proposition \ref{prop:robust_injectivity}
(with $C_1 = \frac{16}{9} \widetilde{C}_1 $).
\end{proof}

For our proof we will also require a uniform bound on $\| \AA (Z) \|_{\ell_2}$. 

\begin{lemma}[Robust injectivity, upper bound] \label{prop:robust_injectivity2}
Let $\AA$ be as above. Then the statement
\begin{equation}
\frac{1}{ dL} \| \AA (Z) \|_{\ell_2}^2 \leq b^4 d \| Z \|_2^2
\end{equation}
holds with probability 1 for all matrices $Z \in H^d$ simultaneously. 
\end{lemma}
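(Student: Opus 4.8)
The plan is to prove the bound \emph{deterministically}, invoking only the almost-sure constraint $|\ee_{l,i}| \le b$ from \eqref{eq:apriori_constraint}; since the resulting estimate then holds for every realization of the diffraction patterns, it holds with probability $1$ and, being an inequality in $Z$, for all $Z \in H^d$ simultaneously. First I would expand the left-hand side over the individual measurements and use $F_{k,l} = D_l f_k f_k^* D_l$ together with cyclicity of the trace:
\[
	\| \AA(Z) \|_{\ell_2}^2 = \sum_{l=1}^L \sum_{k=1}^d \bigl( \tr(F_{k,l} Z) \bigr)^2 = \sum_{l=1}^L \sum_{k=1}^d \bigl( \tr(f_k f_k^* W_l) \bigr)^2, \qquad W_l := D_l Z D_l \in H^d,
\]
where the summands are real because $F_{k,l}$ and $Z$ are hermitian. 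Thus it suffices to bound $\sum_{k=1}^d (\tr(f_k f_k^* W))^2$ by a multiple of $\| W \|_2^2$ for a fixed hermitian $W$, and then to observe $\| W_l \|_2 = \| D_l Z D_l \|_2 \le \| D_l \|_\infty^2 \| Z \|_2 \le b^2 \| Z \|_2$ almost surely.

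The crucial point is that the rank-one matrices $f_k f_k^*$ are mutually \emph{orthogonal} with respect to the Hilbert--Schmidt inner product: since $\langle f_k, f_{k'} \rangle = d\,\delta_{k,k'}$ by the Plancherel identity \eqref{eq:kronecker}, we get $(f_k f_k^*, f_{k'} f_{k'}^*) = |\langle f_k, f_{k'} \rangle|^2 = d^2 \delta_{k,k'}$. Hence each $\tfrac{1}{d^2}\Pi_{f_k f_k^*}$ is the orthogonal projection onto $\Span\{f_k f_k^*\}$, and for distinct $k$ these projections have orthogonal ranges, so $\sum_{k=1}^d \Pi_{f_k f_k^*}$ equals $d^2$ times a single orthogonal projection and therefore has operator norm $d^2$. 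Consequently
\[
	\sum_{k=1}^d \bigl( \tr(f_k f_k^* W) \bigr)^2 = \sum_{k=1}^d ( W, f_k f_k^* )^2 = \Bigl( W, \Bigl( \sum_{k=1}^d \Pi_{f_k f_k^*} \Bigr) W \Bigr) \le \Bigl\| \sum_{k=1}^d \Pi_{f_k f_k^*} \Bigr\|_{\op} \| W \|_2^2 = d^2 \| W \|_2^2 .
\]
Combining this with the bound on $\| W_l \|_2$ gives $\sum_{k=1}^d (\tr(F_{k,l}Z))^2 \le d^2 b^4 \| Z \|_2^2$ for each $l$; summing over the $L$ patterns and dividing by $dL$ yields $\tfrac{1}{dL}\| \AA(Z) \|_{\ell_2}^2 \le d\, b^4 \| Z \|_2^2$, as claimed. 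Equivalently, one can avoid the operator language and apply Plancherel by hand: writing $\tr(f_k f_k^* W) = \sum_{m=0}^{d-1} \omega^{km} s_m$ with $s_m := \sum_i W_{i, i\oplus m}$ the sum over the $m$-th cyclic diagonal of $W$, one gets $\sum_k (\tr(f_k f_k^* W))^2 = d \sum_m |s_m|^2 \le d \cdot d \sum_{i,j} |W_{ij}|^2 = d^2 \|W\|_2^2$ by Cauchy--Schwarz.

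The only substantive step is this orthogonality (Plancherel) observation: the naive termwise estimate $\|\Pi_{f_k f_k^*}\|_{\op} = \|f_k f_k^*\|_2^2 = d^2$ would lose a factor of $d$ and only give $d^2 b^4 \|Z\|_2^2$ after normalization, which is too weak. Everything else --- cyclicity of the trace, submultiplicativity $\|D_l Z D_l\|_2 \le \|D_l\|_\infty^2 \|Z\|_2$, and the reduction to a statement valid for every fixed realization of the masks --- is entirely routine.
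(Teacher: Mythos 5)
Your argument is correct and takes essentially the same route as the paper: write $\tr(F_{k,l}Z)=\tr(f_kf_k^*\,D_lZD_l)$, exploit the mutual Hilbert--Schmidt orthogonality of the $f_kf_k^*$ (Bessel/Plancherel) to get $\sum_k(\tr(f_kf_k^*W))^2\le d^2\|W\|_2^2$, and then use $\|D_lZD_l\|_2\le\|D_l\|_\infty^2\|Z\|_2\le b^2\|Z\|_2$. The paper states the first inequality more tersely as $\max_k\|f_kf_k^*\|_2^2\cdot\|W\|_2^2$ citing orthogonality, while you spell out both the operator-norm view of $\sum_k\Pi_{f_kf_k^*}$ and the coordinate Plancherel computation, but the substance is identical.
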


\begin{proof}
	Estimate
	\begin{eqnarray*}
		\frac1{dL} 
		\|\AA(Z)\|_{\ell_2}^2 
		&=& 
		\frac1{dL} 
		\sum_{k,l} \left(\tr (f_k f_k^* D_l Z D_l)\right)^2
		\leq
		\max_{1 \leq k \leq d}\|f_k f_k^*\|_2^2\frac{1}{dL} 
		\sum_{l} \| D_l Z D_l\|_2^2 	\\
		&\leq&
		d \| D_l \|_\infty^4 \| Z \|_2^2
		\leq
		d b^4 \|Z\|_2^2,
	\end{eqnarray*}
	where the first inequality holds because
	the $f_k f_k^*$'s are mutually
	orthogonal. The second inequality follows from the fact that the Frobenius norm (and more generally: any unitarily invariant norm) is symmetric \cite[Proposition IV.2.4]{bhatia_matrix_1997} -- i.e., 
$\| A B C \|_2 \leq \| A \|_\infty \|B \|_2 \| C \|_\infty$ for any $A,B,C \in H^d$ --  and the last one is due to the a-priori bound $\| D_l \|_\infty \leq b$.
\end{proof}

\section{Proof of the Main Theorem / Convex Geometry}	\label{sec:convex_geometry}

In this section, we will 
 prove that the convex program (\ref{eq:convex_program}) indeed recovers the signal $x$ with high probability. 
A common approach to prove recovery is to show the existence of an \emph{approximate dual certificate}, which in our problem setup can be formalized by the following definition.

\begin{definition}[Approximate dual certificate] \label{def:dual_certificate}
Assume that the sampling process corresponds to (\ref{eq:measurement_process}). Then we call $Y \in H^d$ an \emph{approximate dual certificate}
if $Y \in \mathrm{range} \AA^*$ 
and
\begin{equation}
\| Y_T - X \|_2 \leq \frac{\nu}{4b^2 \sqrt{d}} 
\quad \textrm{as well as} \quad
\| Y_T^\perp \|_\infty \leq \frac{1}{2}.	\label{eq:dual_certificate}
\end{equation}
\end{definition}

The following proposition, showing that the existence of such a dual certificate indeed guarantees recovery, is just a slight variation of Proposition~12 in \cite{gross_partial_2013}. For completeness, we have nevertheless included a proof in the appendix.
\begin{proposition}	\label{prop:convex_geometry}
Suppose that the measurement gives us access to $\| x\|_{\ell_2}^2$ and $y_{k,l} = | \langle f_k, D_l x \rangle |^2$ for $1 \leq k \leq n$ and $1 \leq l \leq L$. 
Then the convex optimization (\ref{eq:convex_program}) recovers the unknown $x$ (up to a global phase),
provided that (\ref{eq:robust_injectivity}) holds and an approximate dual certificate $Y$ exists.
\end{proposition}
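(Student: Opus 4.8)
I would prove Proposition~\ref{prop:convex_geometry} by the standard approximate-dual-certificate argument for nuclear-norm minimization, specialized to the PSD setting with known trace. Let $X = xx^*$ be the true solution and let $X' = X + \Delta$ be any feasible point of \eqref{eq:convex_program} (with the trace constraint $\tr(X') = 1$ appended). Feasibility forces $\tr(X) = \tr(X') = 1$ and $\tr(F_{k,l}\Delta) = 0$ for all $k,l$, i.e.\ $\Delta \in \ker \AA$; moreover $X' \geq 0$ and $X \geq 0$. Since $x$ is a unit vector, $X$ is a rank-one projector and $\PP_T^\perp X = 0$, so $\tr(\Delta) = 0$ together with $X' \geq 0$ gives $\tr(\PP_T^\perp \Delta) = \tr(\Delta_T^\perp) \geq 0$ — in fact $\Delta_T^\perp \geq 0$ because on the orthocomplement of the range of $X$ the matrix $X'$ restricted there equals $\Delta$ restricted there, which must be PSD. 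The goal is to show $\Delta = 0$.

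\textbf{Main chain of inequalities.} I would first derive the key lower bound on $\|X + \Delta\|_1$. Using that $\|X + \Delta\|_1 \geq \tr\bigl((X + \Delta)(X + \Delta_T^\perp/\|\Delta_T^\perp\|_\infty)^{\,\cdot\,}\bigr)$-type duality — more concretely, testing against $X + P$ where $P$ is the projector onto the range of $\Delta_T^\perp$ (which is orthogonal to the range of $X$) — one gets
\begin{equation*}
\|X + \Delta\|_1 \geq \tr(X) + \|\Delta_T^\perp\|_1 + \tr(X\Delta) = 1 + \|\Delta_T^\perp\|_1 + \tr(X\Delta_T),
\end{equation*}
using $\tr(X\Delta) = \tr(X\Delta_T)$ since $X \in T$. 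Because $X'$ is the minimizer and $X$ is feasible, $\|X + \Delta\|_1 \leq \|X\|_1 = 1$, so we obtain
\begin{equation*}
\|\Delta_T^\perp\|_1 \leq -\tr(X\Delta_T) \leq \|\Delta_T\|_2\,\|X\|_2 = \|\Delta_T\|_2.
\end{equation*}
Now I bring in the dual certificate $Y = \AA^* \lambda \in \range \AA^*$. Since $\Delta \in \ker \AA$, we have $\tr(Y\Delta) = 0$, hence $\tr(Y_T \Delta_T) = -\tr(Y_T^\perp \Delta_T^\perp)$. Writing $Y_T = X + (Y_T - X)$,
\begin{equation*}
|\tr(X\Delta_T)| \leq |\tr((Y_T - X)\Delta_T)| + |\tr(Y_T^\perp \Delta_T^\perp)| \leq \|Y_T - X\|_2\,\|\Delta_T\|_2 + \|Y_T^\perp\|_\infty\,\|\Delta_T^\perp\|_1,
\end{equation*}
and plugging in \eqref{eq:dual_certificate} gives $\|\Delta_T^\perp\|_1 \leq \tfrac{\nu}{4b^2\sqrt d}\|\Delta_T\|_2 + \tfrac12\|\Delta_T^\perp\|_1$, so $\|\Delta_T^\perp\|_1 \leq \tfrac{\nu}{2b^2\sqrt d}\|\Delta_T\|_2$, and in particular $\|\Delta_T^\perp\|_2 \leq \|\Delta_T^\perp\|_1 \leq \tfrac{\nu}{2b^2\sqrt d}\|\Delta_T\|_2$.

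\textbf{Closing the argument with robust injectivity.} Finally I invoke \eqref{eq:robust_injectivity}: since $\Delta \in \ker\AA$, $0 = \tfrac{1}{\nu^2 dL}\|\AA(\Delta)\|_{\ell_2}^2 = \tfrac{1}{\nu^2 dL}\|\AA(\Delta_T) + \AA(\Delta_T^\perp)\|_{\ell_2}^2$, so $\|\AA(\Delta_T)\|_{\ell_2} = \|\AA(\Delta_T^\perp)\|_{\ell_2}$. Applying the lower bound \eqref{eq:robust_injectivity} to $\Delta_T \in T$ and the universal upper bound of Lemma~\ref{prop:robust_injectivity2} to $\Delta_T^\perp$,
\begin{equation*}
\tfrac14\|\Delta_T\|_2^2 < \tfrac{1}{\nu^2 dL}\|\AA(\Delta_T)\|_{\ell_2}^2 = \tfrac{1}{\nu^2 dL}\|\AA(\Delta_T^\perp)\|_{\ell_2}^2 \leq \tfrac{b^4 d}{\nu^2}\|\Delta_T^\perp\|_2^2 \leq \tfrac{b^4 d}{\nu^2}\cdot\tfrac{\nu^2}{4b^4 d}\|\Delta_T\|_2^2 = \tfrac14\|\Delta_T\|_2^2,
\end{equation*}
a strict contradiction unless $\Delta_T = 0$, which then forces $\Delta_T^\perp = 0$ and hence $\Delta = 0$. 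Thus $X$ is the unique minimizer, and recovering $x$ up to global phase follows from the rank-one spectral decomposition of $X$. The analogous argument for the feasibility problem \eqref{eq:feasibility_program} is even shorter: any feasible $X'$ has $\|X'\|_1 = \tr(X') = 1 = \|X\|_1$, so it is automatically a minimizer of \eqref{eq:convex_program}, and the same contradiction applies. The one step requiring genuine care — and the most likely source of a subtle gap — is the first lower bound on $\|X+\Delta\|_1$: I must be careful that the range of $\Delta_T^\perp$ is genuinely orthogonal to that of $X$ (which holds because $\PP_T^\perp$ annihilates everything with range in $\Span\{x\}$ applied on either side), so that the "$1 + \|\Delta_T^\perp\|_1$" decoupling of the trace norm is exact rather than merely an inequality in the wrong direction.
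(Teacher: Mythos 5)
Your proposal is correct and uses essentially the same ingredients as the paper's proof: the pinching-type lower bound on $\|X+\Delta\|_1$ (you derive it by explicit trace-norm duality with the test matrix $X+P$, exploiting $\Delta_T^\perp\geq 0$, while the paper invokes the pinching inequality directly — equivalent up to repackaging, since $\tr(X\Delta_T)=\tr(\Delta_T)$ when $X^2=X$), the defining bounds \eqref{eq:dual_certificate} on the dual certificate together with $(Y,\Delta)=0$, and the two-sided robust injectivity (Proposition~\ref{prop:robust_injectivity} and Lemma~\ref{prop:robust_injectivity2}) applied to $\|\AA(\Delta_T)\|_{\ell_2}=\|\AA(\Delta_T^\perp)\|_{\ell_2}$. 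The only organizational difference is that the paper verifies the optimality criterion $\tr(\Delta_T)+\|\Delta_T^\perp\|_1>0$ directly for any feasible displacement, whereas you run a proof by contradiction starting from the assumption that $X'=X+\Delta$ is a minimizer; both are logically equivalent and both correctly yield $\Delta=0$.
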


Proposition \ref{prop:convex_geometry} proves the Main Theorem of this paper, provided that an approximate dual certificate exists. 
A first approach to construct an approximate dual certificate is to set
\begin{equation}
Y = \RR (X) - \tr (X) \Id. 	\label{eq:ansatz}
\end{equation}
Note that any such $Y$ is indeed in the range of our measurement process and, in expectation, yields an exact dual certificate, $\EE [ Y] = X$.
One can then show using an operator Bernstein or Hoeffding inequality that $Y$ is close to its expectation, but the number of measurements required is too large to make the result meaningful.
This obstacle can be overcome using the golfing scheme, a refined construction procedure originally introduced in \cite{gross_recovering_2011}. 

A main difference between our approach and the approach in \cite{candes_masked_2013} is that the authors of that paper use Hoeffding's inequality in the golfing scheme, while we employ Bernstein's inequality. The resulting bounds are sharper, but require to estimate an additional variance parameter.

An issue that remains is that such bounds heavily depend on the worst-case operator norm of the individual summands. In this framework these are proportional to  
$ | \langle f_k, D_l x \rangle |^2$, which a priori can reach $b^2 d$ (recall that $\| f_k\|_2^2 = d$). To deal with this issue, we follow the approach from \cite{gross_partial_2013,candes_masked_2013} to condition on the event that their maximal value is not too large.

\begin{lemma}	\label{lem:undesired_events}
For $Z \in T$ abitrary and a parameter $\gamma \geq 1$ we introduce the event
\begin{equation}
U_{k,l} := \left\{ | \tr ( F_{k,l} Z ) | \leq 2^{3/2} b^2 \gamma \log d \| Z \|_2 \right\}, 	\label{eq:U}
\end{equation}
 If $D_l$ is chosen according to (\ref{eq:D}) it holds that
\begin{equation*}
\max_{1 \leq k \leq d} \Pr \left[ U_{k,l}^c \right] \leq 4  d^{-\gamma}.
\end{equation*}
\end{lemma}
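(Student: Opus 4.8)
The plan is to express $\tr(F_{k,l} Z)$ as a random sum to which the extended Hoeffding inequality (Theorem~\ref{thm:hoeffding}) applies, and then carefully handle the fact that the quadratic form in the mask coefficients is not linear in the $\ee_{l,i}$. Writing $Z = \sum_{a,b} Z_{ab} e_a e_b^*$ and using \eqref{eq:Fkl}, one has
\begin{equation*}
\tr(F_{k,l} Z) = \sum_{i,j=1}^d \ee_{l,i}\ee_{l,j}\,\omega^{k(i-j)} Z_{ji} = \langle D_l f_k, Z D_l f_k\rangle,
\end{equation*}
which is a quadratic (rather than linear) expression in the independent variables $\ee_{l,1},\dots,\ee_{l,d}$. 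The first step is therefore to split this into its ``diagonal'' part $\sum_i \ee_{l,i}^2 Z_{ii}$ and its ``off-diagonal'' part $\sum_{i\neq j}\ee_{l,i}\ee_{l,j}\omega^{k(i-j)}Z_{ji}$, and to treat the two pieces separately.

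For the diagonal part, I would use the a priori bound $|\ee|\le b$ together with $\sum_i Z_{ii} = \tr(Z)$; since $Z\in T$ one has $\tr(Z) = \tr(XZ) = \langle x, Zx\rangle$ and in fact for $Z = xz^* + zx^*$ the trace is $2\,\mathrm{Re}\langle x,z\rangle$, which is bounded by a constant multiple of $\|Z\|_2$ (using $\|x\|_{\ell_2}=1$). So the diagonal contribution is deterministically $O(b^2\|Z\|_2)$, which is dominated by the claimed bound $2^{3/2}b^2\gamma\log d\,\|Z\|_2$ for $\gamma\log d\ge 1$. For the off-diagonal part, one conditions on the signs/values: the key trick (used in \cite{candes_masked_2013}) is to condition on one coordinate and apply Hoeffding in the remaining variable, or alternatively to use the decoupling/comparison principle for quadratic forms. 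Concretely, I would freeze the vector $w := D_l x$ up to phases and observe that $\tr(F_{k,l} Z)$, for $Z$ supported on $x$ and $z$, becomes a bilinear expression $\langle f_k\circ w,\, \overline{\text{stuff}}\rangle$ linear in a fresh set of signs after conditioning; then Theorem~\ref{thm:hoeffding} gives subgaussian tails with parameter proportional to $b^2\|Z\|_2$, yielding a bound of the form $4\exp(-t^2/(4b^2\cdot b^2\|Z\|_2^2))$ — wait, more precisely after conditioning the relevant $\ell_2$-norm of the coefficient vector is at most $b\|Z\|_2$ (one factor of $b$ from the frozen coordinates), so choosing $t = 2^{1/2}b\gamma\log d\,\|Z\|_2$ gives failure probability $\le 4\exp(-\gamma^2\log^2 d/4)\le 4 d^{-\gamma}$ for $d$ large, and the conditioning contributes nothing extra to the probability since the bound is uniform over the conditioning. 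Taking a union over the (at most constant number of) pieces and absorbing constants yields $\max_k \Pr[U_{k,l}^c]\le 4d^{-\gamma}$.

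The main obstacle is the bilinearity: $\tr(F_{k,l}Z)$ is genuinely a degree-two polynomial in the $\ee_{l,i}$, so a single application of Hoeffding's inequality does not directly apply, and one must either decouple the two copies of $\ee_{l,\cdot}$ or condition appropriately so that what remains is linear. The cleanest route, given the structure $Z\in T$ so that $Z = xz^*+zx^*$ has rank at most two, is to write $\tr(F_{k,l}Z) = 2\,\mathrm{Re}\big(\overline{\langle f_k, D_l x\rangle}\,\langle f_k, D_l z\rangle\big)$ and then condition on $D_l$ acting on $x$ (i.e. freeze $\langle f_k, D_l x\rangle$) so that the remaining randomness enters only through the factor $\langle f_k, D_l z\rangle = \sum_i \ee_{l,i}\omega^{ki}\bar z_i$, which \emph{is} linear in the $\ee_{l,i}$; but this requires $x$ and $z$ to have disjoint supports, which is false in general, so one actually has to handle the diagonal overlap term $\sum_i \ee_{l,i}^2 \omega^{ki}\bar x_i z_i$ separately (deterministically bounded by $b^2\|x\|_{\ell_2}\|z\|_{\ell_2}$) and apply Hoeffding only to the off-diagonal remainder. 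Getting the bookkeeping of constants and the $\|Z\|_2$ versus $\|z\|_{\ell_2}$ normalization right — recalling $\|Z\|_2^2 = 2(\|x\|_{\ell_2}^2\|z\|_{\ell_2}^2 + |\langle x,z\rangle|^2)$ so that $\|z\|_{\ell_2}\le\|Z\|_2$ — is the fiddly part, but poses no conceptual difficulty; the factor $2^{3/2}$ and the $\log d$ (rather than $\log^2 d$, or $\sqrt{\log d}$) in the statement are exactly calibrated so that $\exp(-t^2/(4b^2\cdot(b\|Z\|_2)^2))$ with $t\propto b^2\gamma\log d\|Z\|_2$ produces the $d^{-\gamma}$ decay.
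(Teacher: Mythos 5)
Your approach runs into the difficulty you yourself identify — $\tr(F_{k,l}Z)$ is a genuine degree-two polynomial in the $\ee_{l,i}$, and the proposed conditioning ``freeze $\langle f_k, D_l x\rangle$, then Hoeffding in $\langle f_k, D_l z\rangle$'' fails because the same $\ee_{l,i}$'s appear in both factors whenever $x$ and $z$ share support — and your fallback (split off the diagonal $\sum_i \ee_i^2 x_i\bar z_i$ piece, decouple the off-diagonal piece) is heading toward a Hanson-Wright-type argument that the paper never needs. The missing idea is to use the \emph{spectral} decomposition of $Z$ rather than the parametrization $Z = xz^*+zx^*$: since $Z\in T\subset H^d$ is hermitian of rank at most two, one can write $Z = \lambda_1 yy^* + \lambda_2 zz^*$ with orthonormal eigenvectors $y,z$. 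Then
\[
\tr(F_{k,l}Z) = \lambda_1\,\tr(D_l f_k f_k^* D_l\, yy^*) + \lambda_2\,\tr(D_l f_k f_k^* D_l\, zz^*) = \lambda_1 |\langle f_k, D_l y\rangle|^2 + \lambda_2 |\langle f_k, D_l z\rangle|^2,
\]
and each factor $\langle f_k, D_l u\rangle = \sum_i \ee_{l,i}\,\overline{\omega^{ki}}\,u_i$ is \emph{linear} in the $\ee_{l,i}$'s. The quadratic nature of $\tr(F_{k,l}Z)$ is thus completely tamed: one applies Theorem~\ref{thm:hoeffding} to the linear random sum $\sum_i \ee_{l,i}\tilde u_i$ (with $\tilde u_i = \overline{\omega^{ki}}u_i$, $\|\tilde u\|_{\ell_2}=\|u\|_{\ell_2}=1$), obtaining $\Pr[|\langle f_k, D_l u\rangle|\geq b\sqrt{2\gamma\log d}]\leq 2d^{-\gamma}$, squares the event, and union-bounds over the two eigenvectors. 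The factor $2^{3/2}$ comes from $\|Z\|_1 = |\lambda_1|+|\lambda_2|\leq\sqrt{2}\|Z\|_2$ since $Z$ has rank at most two, giving
\[
|\tr(F_{k,l}Z)| \leq (|\lambda_1|+|\lambda_2|)\cdot 2b^2\gamma\log d = 2b^2\gamma\log d\,\|Z\|_1 \leq 2^{3/2}b^2\gamma\log d\,\|Z\|_2
\]
on the intersection of the two good events, whose complement has probability at most $4d^{-\gamma}$.

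So, to be concrete: the gap is that you never diagonalize $Z$. With the eigendecomposition in hand there is no overlap problem, no diagonal-versus-off-diagonal split, and no decoupling — the argument reduces to a direct, two-fold application of the scalar Hoeffding bound to linear sums of the $\ee_{l,i}$'s, which is exactly what Theorem~\ref{thm:hoeffding} supplies. Your decoupling route could in principle be made to work, but it would require a version of the Hanson--Wright inequality and would produce worse constants (and possibly an extra $\log$ factor) than the clean rank-two spectral argument.
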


In the following, we refer to $ \gamma$ as the \emph{truncation rate} (cf. \cite{gross_partial_2013}). Here, we fix
\begin{equation}
\gamma = 8 +  \log_2 \left( b^2 / \nu \right) , 	\label{eq:gamma}
\end{equation}
for reasons that shall become clear in the proofs of Propositions \ref{prop:dual_aux1} and \ref{prop:dual_aux2}. Here $b$ and $\nu$ are as in (\ref{eq:apriori_constraint}) and (\ref{eq:moment_constraint}). 

\begin{proof}[Proof of Lemma \ref{lem:undesired_events}]
Fix $Z \in T$ arbitrary and apply an eigenvalue decomposition
\begin{equation*}
Z = \lambda_1 yy^* + \lambda_2 zz^*
\end{equation*}
with normalized eigenvectors $u,v \in \CC^d$. Then one has for $1 \leq k \leq d$:
\begin{eqnarray*}
\Pr \left[ U_{k,l}^c \right]
& \leq & \Pr \left[ | \tr (F_{k,l} Z ) | \geq 2 b^2 \gamma \log d \| Z \|_1 \right] 		\\
&\leq& \Pr \left[ | \lambda_1 | | \langle f_k, D_l, y \rangle |^2 + | \lambda_2| | \langle f_k, D_l, z \rangle |^2 \geq (|\lambda_1| + | \lambda_2 | ) 2 b^2 \gamma \log d \right] 	\\
& \leq& \Pr \left[ | \langle f_k, D_l y \rangle | \geq \sqrt{2 b^2 \gamma \log d} \right] + \Pr \left[ | \langle f_k, D_l z \rangle | \geq \sqrt{2 b^2 \gamma \log d} \right],
\end{eqnarray*}
where the last inequality uses a union bound. The desired statement thus follows from 
\begin{equation*}
\Pr \left[ | \langle f_k, D_l u \rangle | \geq b \sqrt{2  \gamma \log d} \|u \|_{\ell_2} \right]  \leq 	2 d^{-\gamma} \quad \forall u \in \CC^d \; \forall 1 \leq k \leq d,
\end{equation*}
which we now aim to show. Fix $1 \leq k \leq d$ and $z = (z_1,\ldots, z_d) \in \CC^d$ arbitrary and insert the definitions of $f_k$ and $D_l$ to obtain
\begin{equation*}
| \langle f_k, D_l u \rangle | 
= | \sum_{i=1}^d \ee_i \left( \omega^{ki} u_i \right) |
= | \sum_{i=1}^d \ee_i \tilde{u}_i |.
\end{equation*}
Here we have defined $\tilde{u} = \left( \omega^k u_1, \ldots, \omega^{k(d-1)} u_{d-1}, u_d \right)$. 
Note that $\| \tilde{u} \|_{\ell_2} = \| u \|_{\ell_2} =1 $ holds and applying Theorem \ref{thm:hoeffding} therefore yields
\begin{eqnarray*}
\Pr \left[ \left| \sum_{i=1}^d \ee_i \tilde{u}_i \right| \geq b \sqrt{2 \gamma \log d}\right]	
&=& \Pr \left[ \left| \sum_{i=1}^d \ee_i \tilde{u}_i \right| \geq b\sqrt{2 \gamma \log d}\| \tilde{u} \|_2 \right] 		\\
& \leq & 2 \exp \left( - \gamma \log d \right) = 2 d^{-\gamma}.
\end{eqnarray*}
\end{proof}

This result will be an important tool to bound the probability of extreme operator norms.

\begin{definition} \label{def:truncated_R}
For $Z \in T$ arbitrary and the corresponding $U_{k,l}$ introduced in (\ref{eq:U}) we define the truncated measurement operator
\begin{equation}
\RR_Z := \sum_{l=1}^L \MM^Z_l 
\quad \textrm{with}\quad
\MM^Z_l := \frac{1}{\nu^2 dL}  \sum_{k=1}^d 1_{U_{k,l}} \Pi_{F_{k,l}}, 	\label{eq:RRz}
\end{equation}
where $1_{U_{k,l}}$ denotes the indicator function associated with the event $U_{k,l}$.
\end{definition}

We now show that in expectation, this truncated operator is close to the original one.

\begin{lemma}		\label{lem:RRz}
Fix $Z \in T$ arbitrary and let $\RR_Z$ and $\MM_l^Z$ be as in (\ref{eq:RRz}). Then
\begin{eqnarray*}
\| \EE [ \RR - \RR_Z ] \|_\op 
&\leq&
 \frac{4 b^4}{\nu^2} d^{2-\gamma},	\\
\| \EE [ \left( \MM_l (W) \right)^2 -  (\MM^Z_l (W)) ^2  ] \|_\infty 	
&\leq& \frac{8 b^8}{\nu^4 L^2} d^{4- \gamma} \| W \|_\infty^2, \\
\EE \left[ \left\| \MM_l - \MM_l^Z \right\|_{\op}^2 \right]
& \leq & \frac{4 b^8}{\nu^4 L^2} d^{4-\gamma}.
\end{eqnarray*}
for any $W \in H^d$.
\end{lemma}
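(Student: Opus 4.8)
The plan is to estimate each of the three quantities by isolating the contribution of the ``bad'' events $U_{k,l}^c$, whose probability is controlled by Lemma~\ref{lem:undesired_events}, and then bounding the worst-case operator norm of a single summand $\Pi_{F_{k,l}}$ restricted to such events. The key preliminary observations are: first, $\| \Pi_{F_{k,l}} \|_{\op} = \| F_{k,l} \|_2^2 = \| D_l f_k \|_{\ell_2}^4 \leq b^4 d^2$ (since $\| f_k \|_{\ell_2}^2 = d$ and $\| D_l \|_\infty \leq b$), so each term $\tfrac{1}{\nu^2 d L}\Pi_{F_{k,l}}$ has operator norm at most $\tfrac{b^4 d}{\nu^2 L}$; second, $\RR - \RR_Z = \sum_{l=1}^L \tfrac{1}{\nu^2 d L}\sum_{k=1}^d \mathbbm{1}_{U_{k,l}^c}\,\Pi_{F_{k,l}}$ is a sum of positive-semidefinite operators, so its operator norm is subadditive over the summands and we never need to worry about cancellation.

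For the first inequality, I would take expectations termwise, use $\mathbbm{1}_{U_{k,l}^c}\,\Pi_{F_{k,l}} \geq 0$ to pull the norm inside the sum, bound $\| \EE[\mathbbm{1}_{U_{k,l}^c}\,\Pi_{F_{k,l}}] \|_{\op} \leq \Pr[U_{k,l}^c]\cdot \sup \|\Pi_{F_{k,l}}\|_{\op} \leq 4 d^{-\gamma} \cdot b^4 d^2$ via Lemma~\ref{lem:undesired_events}, and then sum: the $L$-fold sum over $l$ and the $d$-fold sum over $k$ combine with the $\tfrac{1}{\nu^2 d L}$ prefactor to give $\tfrac{4 b^4}{\nu^2} d^{2-\gamma}$, exactly as claimed. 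For the third inequality, $\MM_l - \MM_l^Z = \tfrac{1}{\nu^2 dL}\sum_{k}\mathbbm{1}_{U_{k,l}^c}\Pi_{F_{k,l}}$ is again a nonnegative sum; I would bound $\|\MM_l - \MM_l^Z\|_{\op} \leq \tfrac{1}{\nu^2 dL}\sum_k \mathbbm{1}_{U_{k,l}^c} b^4 d^2$, square it, and take the expectation. Expanding the square produces diagonal terms $\mathbbm{1}_{U_{k,l}^c}^2 = \mathbbm{1}_{U_{k,l}^c}$ and cross terms $\mathbbm{1}_{U_{k,l}^c}\mathbbm{1}_{U_{k',l}^c}$; bounding each indicator-product's expectation crudely by $\Pr[U_{k,l}^c]\leq 4d^{-\gamma}$ gives at most $d^2$ terms each of size $\leq 4d^{-\gamma}(b^4 d^2 /(\nu^2 dL))^2$, i.e.\ $\leq \tfrac{16 b^8}{\nu^4 L^2} d^{4-\gamma}$ up to the precise numerical constant (the stated $4$ presumably comes from a slightly tighter accounting, using that only the $k=k'$ terms carry the single indicator while the genuinely cross terms can be handled by Cauchy--Schwarz or just absorbed).

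For the second inequality I would write $\MM_l(W)^2 - \MM_l^Z(W)^2 = (\MM_l(W) - \MM_l^Z(W))\MM_l(W) + \MM_l^Z(W)(\MM_l(W) - \MM_l^Z(W))$ to telescope, so that in operator norm $\|\EE[\,\cdot\,]\|_\infty \leq 2\,\EE[\|\MM_l(W)-\MM_l^Z(W)\|_\infty \cdot \max(\|\MM_l(W)\|_\infty, \|\MM_l^Z(W)\|_\infty)]$; then bound $\|\MM_l(W)\|_\infty \leq \tfrac{1}{\nu^2 dL}\sum_k |\tr(F_{k,l}W)| \cdot \|F_{k,l}\|_\infty / \text{(appropriate factor)}$ — more directly, $\|\MM_l(W)\|_\infty \leq \tfrac{1}{\nu^2 dL}\sum_k \|F_{k,l}\|_\infty |\tr(F_{k,l}W)| \leq \tfrac{1}{\nu^2 dL}\cdot d \cdot (b^2 d)(b^2 d \|W\|_\infty) = \tfrac{b^4 d^2}{\nu^2 L}\|W\|_\infty$, using $\|F_{k,l}\|_\infty = \|D_l f_k\|_{\ell_2}^2 \leq b^2 d$ and $|\tr(F_{k,l}W)| \leq \|F_{k,l}\|_1 \|W\|_\infty = \|F_{k,l}\|_\infty\|W\|_\infty \leq b^2 d \|W\|_\infty$ (since $F_{k,l}$ is rank one). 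The truncated difference $\|\MM_l(W) - \MM_l^Z(W)\|_\infty$ is controlled as in the third inequality, picking up a factor $d^{-\gamma}$ from the indicators; multiplying the two bounds and the factor $2$ yields the claimed $\tfrac{8 b^8}{\nu^4 L^2} d^{4-\gamma}\|W\|_\infty^2$.

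The main obstacle I anticipate is bookkeeping the numerical constants so that the $\gamma$ chosen in \eqref{eq:gamma} makes each of these bounds negligible (roughly $d^{-\varepsilon}$ after the $d^{4-\gamma}$ is beaten down), and in particular being careful that the crude estimate $\Pr[U_{k',l}^c] \leq 4d^{-\gamma}$ for the cross terms in the third inequality is not wasteful — one should check whether independence or a conditioning argument is needed, but since $U_{k,l}^c$ and $U_{k',l}^c$ both depend on the same mask $D_l$ they are \emph{not} independent, so the safe route is the union-type bound $\EE[\mathbbm{1}_{U_{k,l}^c}\mathbbm{1}_{U_{k',l}^c}] \leq \min(\Pr[U_{k,l}^c],\Pr[U_{k',l}^c]) \leq 4 d^{-\gamma}$, which is exactly what the stated constants tolerate. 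Everything else is a routine application of the rank-one structure of $F_{k,l}$, the a~priori bound $\|D_l\|_\infty \leq b$, and Lemma~\ref{lem:undesired_events}.
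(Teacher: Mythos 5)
Your proposal is essentially sound and uses the same basic ingredients as the paper's proof: the a priori bounds $\| \Pi_{F_{k,l}} \|_\op \leq b^4 d^2$, $\| F_{k,l} \|_\infty \leq b^2 d$, $|\tr(F_{k,l}W)| \leq b^2 d \|W\|_\infty$, and Lemma~\ref{lem:undesired_events} to beat down the indicator $\mathbbm{1}_{U_{k,l}^c}$. For the first and third inequalities your route coincides with the paper's, modulo presentation: the paper bounds $\| \EE[\MM_l - \MM_l^Z]\|_\op$ via Jensen and then triangle inequality over the $k$-sum, whereas you pull the probability factor out of each $\| \EE[\mathbbm{1}_{U_{k,l}^c} \Pi_{F_{k,l}}] \|_\op$ directly --- same content. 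For the third inequality, the paper also squares the triangle-inequality bound, expands into $\sum_{k,j} \EE[\mathbbm{1}_{U_{k,l}^c}\mathbbm{1}_{U_{j,l}^c}]$, and invokes $\EE[\mathbbm{1}_A\mathbbm{1}_B]\leq\Pr[A]\leq 4d^{-\gamma}$, exactly as you do; your observation that $U_{k,l}^c$ and $U_{k',l}^c$ share the mask $D_l$ and hence are not independent is correct and is indeed why the min/union bound is the right tool here. Your stated constant $16$ is a small arithmetic slip: tracking your own bound, $d^2$ pairs $\times\, 4d^{-\gamma}\times (b^4 d^2/(\nu^2 dL))^2 = 4b^8 d^{4-\gamma}/(\nu^4 L^2)$, which is precisely the constant in the lemma --- no ``tighter accounting'' is needed.

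The second inequality is where you take a genuinely different route. The paper expands $\MM_l(W)^2 - \MM_l^Z(W)^2$ into a double sum over $(k,j)$, extracts the indicator $1 - \mathbbm{1}_{U_{k,l}}\mathbbm{1}_{U_{j,l}} = \mathbbm{1}_{U_{k,l}^c\cup U_{j,l}^c}$, and bounds termwise using $\max_{k,j}(\Pr[U_{k,l}^c]+\Pr[U_{j,l}^c]) \leq 8d^{-\gamma}$. You instead telescope $A^2-B^2 = (A-B)A + B(A-B)$, then bound the random factor $\|A\|_\infty, \|B\|_\infty \leq b^4 d^2 \|W\|_\infty /(\nu^2 L)$ almost surely and $\EE[\|A-B\|_\infty] \leq 4b^4 d^{2-\gamma}\|W\|_\infty/(\nu^2 L)$ via the truncation; the product with the factor of $2$ gives the claimed $8 b^8 d^{4-\gamma}\|W\|_\infty^2/(\nu^4 L^2)$. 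Both give the same constant; the telescoping argument avoids writing out the double sum and is arguably cleaner to read, at the cost of needing a deterministic a priori bound on $\|\MM_l(W)\|_\infty$ (which you supply). Both approaches are fine.
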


\begin{proof}
Note that $\EE \left[ \RR \right] = L \EE [ \MM_l]$ as well as $\EE[ \RR_Z ] = L \EE \left[ \MM_l^Z \right]$.
For the first statement, we can therefore fix $1 \leq l \leq L$ arbitrary and consider $L \| \EE [ \MM_l - \MM_l^Z ] \|_\op$. Due to Jensen's inequality this expression is majorized by
$L \EE \left[ \left\| \MM_l - \MM_l^Z \right\|_\op \right]$. 
Inserting the definitions and applying Lemma \ref{lem:undesired_events} then yields the first estimate via
\begin{eqnarray*}
L \EE \left[ \| \MM_l - \MM_l^Z \|_\op \right]	
& \leq & \frac{1}{\nu^2 d} \EE \left[ \sum_{k=1}^d (1- 1_{U_{k,l}})\left\| \Pi_{F_{k,l}} \right\|_\op \right]
\leq \frac{b^4 d^2}{\nu^2 d} \sum_{k=1}^d \EE \left[ 1_{U_{k,l}^c} \right] \\
& = & \frac{b^4 d^2}{\nu^2 d }  \sum_{k=1}^d \Pr \left[ U_{k,l}^c \right]		
 \leq  \frac{b^4 d^2}{\nu^2} \max_{1 \leq k \leq d}  \Pr [ U_{k,l}^c ] 
 \leq  \frac{4 b^4}{\nu^2} d^{2-\gamma},
\end{eqnarray*}
where the second inequality is due to $\| \Pi_{F_{k,l}} \|_\op \leq b^4 d^2$ (which follows by direct calculation).
Similarly
\begin{eqnarray*}
& & \left\| \EE \left[ \left( \MM_l (W) \right)^2 - \left( \MM_l^Z (W) \right)^2 \right] \right\|_\infty 		\\
&=& \left\| \frac{1}{(\nu^2 d L)^2} \sum_{k,j=1}^d \EE \left[ (1- 1_{U_{k,l}}1_{U_{j,l}}  ) \tr (F_{k,l} W ) \tr (F_{j,l} W ) F_{k,l} F_{j,l} \right] \right\|_\infty 	\\
& \leq & \frac{1}{\nu^4 L^2 d^2} \sum_{k,j=1}^d \EE \left[  1_{U_{k,l}^c  \cup U_{j,l}^c }  | \tr (F_{k,l} W ) \tr (F_{j,l} W ) | \| F_{k,l} \|_\infty \| F_{j,l} \|_\infty \right] \\
& \leq & \frac{b^8 d^4}{\nu^4 L^2} \| W \|_\infty^2 \max_{1 \leq k,j \leq d}  \left( \Pr [ U_{k,l}^c ]  + \Pr [ U_{j,l}^c ]  \right) 	
 \leq  \frac{8b^8}{\nu^4 L^2} d^{4-\gamma} \| W \|_\infty^2
\end{eqnarray*}
Here we have used $| \tr (F_{k,l} W ) | \leq b^2 d \| W \|_\infty$ for any $W \in H^d$ and $ \| F_{k,l} \|_\infty \leq b^2 d$
(both estimates are direct consequences of the definition of $F_{k,l}$).
Finally
\begin{eqnarray*}
\EE \left[ \left\| \MM_l - \MM_l^Z \right\|_\op^2 \right]
& \leq & \frac{1}{(\nu^2 d L)^2} \EE \left[ \left( \sum_{k=1}^d (1- 1_{U_{k,l}}) \| \Pi_{F_{k,l}} \|_\op \right)^2 \right] \\
& \leq & \frac{b^8 d^4}{\nu^4 d^2 L^2} \sum_{k,j=1}^d \EE \left[ 1_{U_{k,l}^c} 1_{U_{j,l}^c} \right] 
\leq \frac{b^8 d^4}{\nu^4  L^2} \max_{1 \leq k \leq d} \Pr \left[ U_{k,l}^c \right] \\
&\leq& \frac{4 b^8 }{\nu^4 L^2} d^{4-\gamma}
\end{eqnarray*}
follows in a similar fashion.
\end{proof}

We will now establish two technical ingredients for the golfing scheme.

\begin{proposition} \label{prop:dual_aux1}
Assume $ d \geq 3$, fix $Z \in T$ arbitrary and let $\RR_Z$ be as in (\ref{eq:RRz}). 
Then 
\begin{equation}
\Pr \left[ \| \PP_T^\perp (\RR_Z (Z) - \tr (Z) \Id) \|_\infty \geq t \| Z \|_2 \right]
\leq d \exp \left( - \frac{t \nu^4 L}{C_2 b^8 \gamma \log d } \right) \label{eq:dual_aux1}
\end{equation}
for any $t \geq 1/4$ and $\gamma$ defined in (\ref{eq:gamma}). Here $C_2$ denotes an absolute constant. 
\end{proposition}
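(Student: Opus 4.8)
The plan is to apply the Uniform Operator Bernstein inequality (Theorem~\ref{thm:bernstein}) to the operator $\PP_T^\perp(\RR_Z(Z) - \tr(Z)\Id)$, viewed as a sum over the $L$ independent diffraction patterns. First I would write
\begin{equation*}
\PP_T^\perp\bigl(\RR_Z(Z) - \tr(Z)\Id\bigr) = \sum_{l=1}^L S_l,
\qquad
S_l := \PP_T^\perp\Bigl(\MM_l^Z(Z) - \tfrac{1}{L}\tr(Z)\Id\Bigr),
\end{equation*}
and would want these summands to have mean zero. They do not exactly, because truncation perturbs the expectation; so the actual argument splits $S_l = (S_l - \EE[S_l]) + \EE[S_l]$, bounds $\|\sum_l \EE[S_l]\|_\infty = \|\PP_T^\perp\EE[\RR_Z(Z) - \RR(Z)]\|_\infty$ using the first estimate of Lemma~\ref{lem:RRz} (which gives $\le \frac{4b^4}{\nu^2}d^{2-\gamma}\|Z\|_2$), and applies Bernstein to the centered part $\sum_l(S_l - \EE[S_l])$. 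With the choice of $\gamma$ in \eqref{eq:gamma}, the deterministic bias term is polynomially small in $d$ and is dwarfed by the $t\|Z\|_2$ threshold for $t\ge 1/4$, so it can be absorbed.

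For the Bernstein input I need two quantities. The almost-sure bound: on the event $U_{k,l}$ we have $|\tr(F_{k,l}Z)| \le 2^{3/2}b^2\gamma\log d\,\|Z\|_2$ by \eqref{eq:U}, and $\|F_{k,l}\|_\infty \le b^2 d$, so each term $\frac{1}{\nu^2 dL}\PP_T^\perp(F_{k,l}\tr(F_{k,l}Z))$ has operator norm $\lesssim \frac{b^4\gamma\log d}{\nu^2 L}\|Z\|_2$ (using $\|\PP_T^\perp\|\le 2$); summing the $d$ values of $k$ gives $\|S_l\|_\infty \lesssim \frac{b^4\gamma\log d}{\nu^2 L}\|Z\|_2 \cdot d / d$, i.e. $\overline{R} \lesssim \frac{b^4\gamma\log d}{\nu^2 L}\|Z\|_2$ after the centering doubles the constant. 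The variance: $\sigma^2 = \|\sum_l \EE[(S_l-\EE[S_l])^2]\|_\infty \le \|\sum_l \EE[S_l^2]\|_\infty = L\,\|\EE[(\PP_T^\perp\MM_l^Z(Z))^2]\|_\infty$ up to the $\tr(Z)\Id$ correction, which I would bound by the untruncated version $\le L\cdot\frac{C b^8}{\nu^4 L^2}\|Z\|_2^2$ via Lemma~\ref{lem:aux1} (the truncation only removes mass, and Lemma~\ref{lem:RRz}'s second estimate controls the difference if one wants to be precise), giving $\sigma^2 \lesssim \frac{b^8}{\nu^4 L}\|Z\|_2^2$.

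Feeding $\overline{R}$ and $\sigma^2$ into \eqref{eq:matrix_bernstein} with threshold $t\|Z\|_2$: since $t \ge 1/4$ is bounded below while $\sigma^2/\overline{R} \asymp \frac{b^8/\nu^4}{b^4\gamma\log d/\nu^2}\|Z\|_2 = \frac{b^4}{\nu^2\gamma\log d}\|Z\|_2$ — and because $\gamma = 8 + \log_2(b^2/\nu)$ makes $\gamma\log d$ grow — we are in the large-deviation regime $t\|Z\|_2 \ge \sigma^2/\overline{R}$ (at least once $d$ is large; the small-$d$ case is handled by adjusting the constant), so the bound is governed by the $d\exp(-3t\|Z\|_2/(8\overline{R}))$ branch, which is exactly $d\exp\bigl(-c\,t\nu^4 L/(b^8\gamma\log d)\bigr)$ — wait, the exponent should read $d\exp(-ct\nu^2 L/(b^4\gamma\log d))$; the stated $b^8$ and $\nu^4$ in \eqref{eq:dual_aux1} come from a slightly more pessimistic variance/range combination, and matching the precise form of $C_2$ is a bookkeeping matter I would defer to the calculation. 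The main obstacle is the careful handling of the truncation: one must verify that replacing $\MM_l$ by $\MM_l^Z$ genuinely shrinks the variance (so Lemma~\ref{lem:aux1} still applies to the truncated operator, which holds because indicator functions are $\le 1$ and the relevant matrices are positive semidefinite), and that the deterministic bias $\|\sum_l\EE[S_l]\|_\infty$ truly is negligible against $\frac14\|Z\|_2$ for the chosen $\gamma$ — this is where the specific value \eqref{eq:gamma} earns its keep. Everything else is a routine, if slightly tedious, application of Theorem~\ref{thm:bernstein}.
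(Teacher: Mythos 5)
Your overall architecture matches the paper's proof closely: split off the deterministic bias $\sum_l\EE[S_l]=\PP_T^\perp\EE[\RR_Z(Z)-\RR(Z)]$ and control it via the first estimate of Lemma~\ref{lem:RRz} (this is where the choice of $\gamma$ enters, exactly as you say), then apply the operator Bernstein inequality to the centered sum with an $\overline R$ coming from the truncation event $U_{k,l}$ and a variance coming from Lemma~\ref{lem:aux1} plus the truncated/untruncated difference from Lemma~\ref{lem:RRz}. Two of your shortcuts are shaky, though. First, the parenthetical that ``truncation only removes mass, and the relevant matrices are positive semidefinite'' does not establish $\EE[(\MM_l^Z(Z))^2]\le\EE[(\MM_l(Z))^2]$ in the PSD order: the coefficients $\tr(F_{k,l}Z)$ are sign-indefinite and the cross-terms $F_{k,l}F_{j,l}$ are not Hermitian, so matrix squaring is not monotone under dropping summands here. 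Your fallback — the triangle inequality against $\|\EE[(\MM_l^Z(Z))^2-(\MM_l(Z))^2]\|_\infty$ via the second estimate of Lemma~\ref{lem:RRz} — is what the paper actually does and should be the main argument, not a side remark. Second, the regime claim: with your sharper $\overline R\asymp b^4\gamma\log d/(\nu^2 L)$ and $\sigma^2\asymp b^8/(\nu^4 L)$, the ratio $\sigma^2/\overline R\asymp b^4/(\nu^2\gamma\log d)$ need not be below $3t/4$ (it grows like $(b^2/\nu)^2$ while $\gamma$ only grows like $\log(b^2/\nu)$), so ``large $d$ fixes the small-$d$ case'' is not the issue. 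The paper resolves this by deliberately \emph{over}-estimating $\overline R$ to $\asymp b^8\gamma\log d/(\nu^4 L)$ using $\nu\le b^2$, which forces $\sigma^2/\overline R=3/(16\gamma\log d)\le 3t/4$ unconditionally and also produces the $b^8/\nu^4$ that appears in the stated exponent. You correctly spot that $b^8/\nu^4$ versus $b^4/\nu^2$ is a deliberate loosening, but it is not pure bookkeeping — it is what makes the single-branch Bernstein estimate valid for all admissible $b,\nu,d,t$. Finally, the cryptic ``$\cdot\,d/d$'' in your $\overline R$ computation relies on $\frac1d\sum_k f_kf_k^*=\Id$ together with $F_{k,l}\ge 0$; this should be stated, as naive triangle-inequality summation over $k$ would otherwise cost a factor of $d$.
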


\begin{proof}
Assume w.l.o.g. that $\| Z \|_2 = 1 $.
 By Lemma \ref{lem:near_isotropic}, 
\begin{equation*}
\PP_T^\perp \EE [ \RR (Z)] = \PP_T^\perp ( Z + \tr (Z) \Id ) = 0 + \tr (Z) \PP_T^\perp \Id,
\end{equation*}
because $Z \in T$ by assumption. We can thus rewrite the desired expression as
\begin{eqnarray}
& & \| \PP_T^\perp \left( \RR_Z (Z) - \EE [ \RR (Z) ] \right) \|_\infty 		\nonumber\\
& \leq & \| \PP_T^\perp \left( \RR_Z (Z) - \EE [ \RR_Z (Z) ] \right) \|_\infty 
+ \| \PP_T^\perp \EE \left[ \RR_Z (Z) - \RR (Z) \right] \|_2 		\nonumber\\
& \leq & \| \RR_Z (Z) - \EE [ \RR_Z (Z) ] \|_\infty + \| \EE [ \RR_Z - \RR ] \|_\op \| Z \|_2 \nonumber	\\
& \leq & \| \RR_Z (Z) - \EE [ \RR_Z (Z) ] \|_\infty + t/4. \label{eq:bound_aux1}
\end{eqnarray}
In the third line, we have used that $ \| \PP_T^\perp W  \| \leq \|W \|$ for any $W \in H^d$ and any unitarily invariant norm $\| \cdot \|$ (pinching, cf.~\cite{bhatia_matrix_1997}  (Problem II.5.4)). The last inequality follows from
\begin{equation}
 \| \EE [ \RR_Z - \RR ] \|_\op \leq \frac{4 b^4}{\nu^2} d^{2-\gamma} 
\leq \frac{b^4}{ \nu^2} 2^{4-\gamma}
\leq \frac{1}{16} 
 \leq \frac{t}{4}
, \label{eq:dual_aux1_aux1}
\end{equation}
which in turn follows from Lemma~\ref{lem:RRz} and the assumptions on $d$, $t$ and $\gamma$. 
By \eqref{eq:bound_aux1}, it remains to bound the probability of the complement of the event
\begin{equation*}
E := \left\{ \| \RR_Z (Z) - \EE [ \RR_Z (Z) ] \|_\infty \leq 3t/4 \right\}
\end{equation*} 
To this end, we use the Operator Bernstein inequality (Theorem \ref{thm:bernstein}). 
We decompose
\begin{equation*}
\RR_Z (Z) - \EE [ \RR_Z (Z) ] =  \sum_{l=1}^L \left( M_l - \EE [ M_l ] \right)
\quad \textrm{with} \quad
M_l := \MM_l^Z (Z),
\end{equation*}
where $\MM_l^Z$ was defined in (\ref{eq:RRz}).
To find an a priori bound for the individual summands, we write, using that $F_{k,l} \geq 0$ holds for all $1 \leq k \leq d$,
\begin{eqnarray*}
\| M_l - \EE \left[ M_l \right]\|_\infty & \leq & \| M_l \|_\infty
+ \| \EE \left[ \MM_l (Z) - \MM_l^Z (Z) \right] \|_\infty + \| \EE \left[ \MM_l (Z) \right] \|_\infty \\
& \leq & \| M_l \|_\infty + \frac{1}{L} \| \EE \left[ \RR_l - \RR_l^Z \right] \|_\op \| Z \|_2
+ \frac{1}{L}\| Z + \tr (Z) \|_\infty \\
&\leq& \left\| \frac{1}{\nu^2 d L} \sum_{k=1}^d 1_{U_{k,l}} | \tr (F_{k,l} Z ) | F_{k,l} \right\|_\infty
+ \frac{1}{L} \left(  \frac{b^4}{\nu^2} d^{2-\gamma}+ 1 + \sqrt{2} \right) \| Z \|_2 \\
& \leq & \frac{b^4}{\nu^2 L} \left( 2^{3/2} \gamma \log d + d^{2-\gamma} + 3 \right) \| Z \|_2
\leq \frac{608 b^8 \gamma \log d}{3 \nu^4 L} =: \overline{R}. \label{eq:apriori}
\end{eqnarray*}
Here we have employed near-isotropy of $\RR$, the first estimate in Lemma \ref{lem:RRz}
and the fact that $Z \in T$ has rank at most two. 
The last but one inequality follows from $\frac{1}{d} \sum_{k=1}^d f_k f_k^*  = \Id $, $\| D_l^2 \|_\infty \leq b^2$, and $\nu \leq b^2$. 
The last estimate is far from tight, but will slightly simplify
the resulting operator Bernstein bound.
For the variance we start with the standard estimate
\begin{equation*}
\EE \left[ (M_l - \EE [ M_l] )^2 \right] 
= \EE \left[ M_l^2 \right] - \EE [ M_l ]^2 
\leq \EE \left[ M_l^2 \right]
\end{equation*}
and bound this expression via
\begin{eqnarray*}
& & \| \EE \left[ M_l^2 \right] \|_\infty
=  \left\| \EE \left[ \left(\MM_l^Z (Z)\right)^2 \right] \right\|_\infty 		\\
&\leq& \left\| \EE \left[ \left( \MM_l^Z (Z) \right)^2 - \left( \MM_l (Z) \right)^2 \right] \right\|_\infty 
+ \left\| \EE \left[ \left( \MM_l (Z)\right)^2 \right] \right\|_\infty 	\\
& \leq & \frac{8 b^8}{\nu^4 L^2} d^{4- \gamma} \| Z \|_\infty^2   + \frac{30 b^8}{\nu^4 L^2} \| Z \|_2^2 ,
\end{eqnarray*}
where we have used Lemmas \ref{lem:RRz} and  \ref{lem:aux1}. 
Using $\| Z \|_\infty\leq \| Z \|_2 = 1$ and noting that $\nu\leq b^2$ entails $\gamma = 8 + 2 \log_2 (b^2/\nu) \geq 8$  we conclude
\begin{equation*}
	\| \sum_{l=1}^L \EE [ M_l^2 ] \|_\infty
	\leq \sum_{l=1}^L \| \EE [ M_l^2 ] \|_\infty 
	\leq \frac{8 b^8}{\nu^4 L }d^{-4} + \frac{30 b^8}{\nu^4 L}
	\leq \frac{38 b^8}{\nu^4 L} =: \sigma^2.
\end{equation*}
Our choice for $\overline{R}$ now guarantees $\sigma^2 / \overline{R} = 3/(16 \gamma \log d)\leq 3t /4 $ for any $t \geq 1/4$
(here we have used $\gamma \geq 1$ and our assumption $d \geq 3$ which entails $ \log d \geq 1$).
Consequently
\begin{equation*}
\Pr \left[ E^c \right] 
= \Pr \left[ \left\| \sum_{l=1}^L \left( M_l - \EE [ M_l] \right) \right\|_\infty > 3t/4 \right]
\leq d \exp \left( - \frac{t \nu^4 L}{C_2 b^8 \gamma \log d } \right)
\end{equation*}
with $C_2$ an absolute constant. 
This completes the proof.
\end{proof}

\begin{proposition}  \label{prop:dual_aux2}
Assume $d \geq 2$ and fix $Z \in T$ arbitrary and let $\RR_Z$ be as in \eqref{eq:RRz}
with $\gamma$ defined in \eqref{eq:gamma}.
Then
\begin{equation}
\Pr \left[ \left\| \PP_T \left( \RR_Z (Z) - Z - \tr (Z) \Id \right) \right\|_2
\geq c \| Z \|_2 \right]
\leq \exp \left( - \frac{c^2 \nu^4 L}{C_3 b^8 \gamma \log d} + \frac{1}{4} \right) \label{eq:dual_aux2}
\end{equation}
holds for any $1/(2 \log d) \leq c \leq 1$.
Here, $C_3$ is again an absolute constant. 
\end{proposition}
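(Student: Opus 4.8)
The plan is to split the quantity into a small deterministic bias, negligible by the choice \eqref{eq:gamma} of the truncation rate, and a centred random sum over the $L$ diffraction patterns, to which I would apply the vector Bernstein inequality (Theorem~\ref{thm:vector_bernstein}); this is legitimate since $H^d$ is a finite-dimensional real inner-product space. Normalising $\|Z\|_2=1$ and using near-isotropicity (Lemma~\ref{lem:near_isotropic}), which gives $\PP_T\EE[\RR(Z)]=\PP_T(Z+\tr(Z)\Id)$, I would write
\[ \PP_T\big(\RR_Z(Z)-Z-\tr(Z)\Id\big)=\PP_T\big(\RR_Z(Z)-\EE[\RR_Z(Z)]\big)+\PP_T\EE\big[\RR_Z(Z)-\RR(Z)\big], \]
bound the Frobenius norm of the second, deterministic term by $\|\EE[\RR-\RR_Z]\|_{\op}$ via Lemma~\ref{lem:RRz}, and note that with $\gamma=8+2\log_2(b^2/\nu)$ and $d\ge2$ this is at most $4d^{-6}\le c/4$ — here the hypothesis $c\ge 1/(2\log d)$ is exactly what is used. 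It then suffices to show that the centred random term exceeds $3c/4$ with probability at most the asserted bound.

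For that I would set $M_l:=\PP_T\MM^Z_l(Z)-\EE[\PP_T\MM^Z_l(Z)]$, so that the random term is $\sum_{l=1}^L M_l$, a sum of i.i.d.\ mean-zero matrices. The almost-sure bound $\|M_l\|_2\le B$ comes from the positive-semidefinite sandwich $-\mu\,dD_l^2\le\sum_k 1_{U_{k,l}}\tr(F_{k,l}Z)F_{k,l}\le\mu\,dD_l^2$ (using $\sum_kF_{k,l}=dD_l^2$ and $F_{k,l}\ge0$), where the maximal truncated coefficient $\mu$ is at most $2^{3/2}b^2\gamma\log d$ by the definition of $U_{k,l}$; combining this with $\|D_l^2\|_\infty\le b^2$, with \eqref{eq:aux11}, and with $\|W\|_2\le\sqrt2\,\|W\|_\infty$ on the rank-$\le2$ space $T$ gives $B$ of order $b^4\gamma\log d/(\nu^2 L)$.

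The heart of the argument is the variance. Squaring $B$ would introduce a spurious factor $\gamma^2\log^2 d$ and cost a full $\log^2 d$ in the conclusion, so instead I would estimate $\EE\|M_l\|_2^2\le\tr\EE[(\PP_T\MM^Z_l(Z))^2]$ by comparison with the untruncated operator: writing $\MM^Z_l(Z)=\MM_l(Z)-\Delta_l$, expanding the square, handling the cross term by Cauchy--Schwarz, applying the sharp second-moment estimate \eqref{eq:aux2} of Lemma~\ref{lem:aux1} to $\tr\EE[(\PP_T\MM_l(Z))^2]$, and controlling the truncation residual $\tr\EE[(\PP_T\Delta_l)^2]$ through Lemma~\ref{lem:undesired_events} and \eqref{eq:aux11} (it is $O(b^8 d^{3-\gamma}/(\nu^4 L^2))$, hence negligible by the choice of $\gamma$). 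This yields a variance parameter of order $b^8/(\nu^4 L)$; enlarging it by the harmless factor $\gamma\log d\ge1$ (enlarging $\sigma^2$ only weakens the tail and widens the admissible range in Theorem~\ref{thm:vector_bernstein}) makes $\sigma^2$ simultaneously large enough that $\sigma^2/B\gtrsim b^4/\nu^2\ge1>3/4\ge 3c/4$, so the range restriction $t\le\sigma^2/B$ is met, and small enough that inserting $t=3c/4$ into the theorem gives precisely the exponent $-c^2\nu^4L/(C_3b^8\gamma\log d)+\tfrac14$.

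I expect the variance estimate to be the only genuine difficulty: one must extract exactly one power of $\gamma\log d$, which is possible only by interleaving the crude operator-norm control of the \emph{truncated} summands with the sharp Frobenius second-moment bound \eqref{eq:aux2}, while showing — through Lemma~\ref{lem:RRz} and the specific truncation rate \eqref{eq:gamma} — that replacing $\MM_l$ by $\MM^Z_l$ changes nothing of substance. The deterministic bias, the a.s.\ bound, and the bookkeeping of the absolute constant $C_3$ are all routine.
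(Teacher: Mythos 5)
Your proposal is correct and follows essentially the same route as the paper: split off the deterministic truncation bias and bound it by $c/4$ via Lemma~\ref{lem:RRz} and the choice of $\gamma$, then apply the vector Bernstein inequality (Theorem~\ref{thm:vector_bernstein}) with the almost-sure bound $B\sim b^4\gamma\log d/(\nu^2L)$ coming from the truncation event and $\sum_k F_{k,l}=dD_l^2$, with the variance $O(b^8/(\nu^4L))$ obtained from the second-moment estimate \eqref{eq:aux2} of Lemma~\ref{lem:aux1} plus the truncation residual from Lemma~\ref{lem:RRz}, and finally enlarging $\sigma^2$ by the harmless factor $\gamma\log d$ to meet the range restriction $t\le\sigma^2/B$. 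The minor implementation differences (Cauchy--Schwarz on the cross term versus the paper's direct four-term expansion, and your $\gamma=8+2\log_2(b^2/\nu)$, which in fact matches what the paper's own estimates require despite the wording of \eqref{eq:gamma}) do not change the argument.
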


\begin{proof}
Similar to the previous proof, we start by assuming $\| Z \|_2 = 1$ and using \emph{near-isotropy} of $\RR$ to bound the desired expression by
\begin{eqnarray*}
& & \| \PP_T \left( \RR_Z (Z) - \EE \left[ \RR (Z) \right] \right)\|_2 \\
& \leq & \| \PP_T \left( \RR_Z (Z) - \EE \left[ \RR_Z (Z) \right] \right) \|_2
+ \| \PP_T \EE \left[ \RR (Z) - \RR_Z (Z) \right] \|_2 \\
& \leq& \| \PP_T \left( \RR_Z (Z) - \EE \left[ \RR_Z (Z) \right] \right) \|_2
+ \| \PP_T \EE \left[ \RR  - \RR_Z  \right] \|_\op \| Z \|_2 \\
& \leq & \| \PP_T \left( \RR_Z (Z) - \EE \left[ \RR_Z (Z) \right] \right) \|_2
+ c/4.
\end{eqnarray*}
Here, we have used $\| \PP_T W \|_2 \leq \| W \|_2$ for any matrix $W$ (this follows e.g. from the entry-wise definition of the Frobenius norm) and a calculation similar to  (\ref{eq:dual_aux1_aux1}):
\begin{equation*}
\| \EE \left[ \RR_Z - \RR \right] \|_\op
\leq \frac{4 b^4}{\nu^2d} d^{3-\gamma}
\leq \frac{b^4}{\nu^2 \log d} 2^{5 - \gamma} \leq \frac{1}{8 \log d} \leq \frac{c}{4},
\end{equation*}
where we have used $d \geq 2$, $\gamma \geq 8$ and the assumption $c \geq 1/(2 \log d)$.
Paralleling our idea from the previous proof, we define the event
\begin{equation*}
E' := \left\{ \| \PP_T ( \RR_Z (Z) - \EE[ \RR_Z (Z) ]) \|_\infty \leq 3c/4 \right\}
\end{equation*}
which guarantees that the desired inequality is valid.
 However, in order to bound the probability of $(E')^c$, this time we are going to employ the vector Bernstein inequality---Theorem \ref{thm:vector_bernstein}. 
Decompose
\begin{equation*}
\PP_T \left( \RR_Z (Z) - \EE \left[ \RR_Z (Z) \right] \right) = \sum_{l=1}^L \left( \tilde{M}_l - \EE \left[ \tilde{M}_l \right] \right).
\end{equation*}
Note that the $\tilde{M}_l$'s are related to $M_l$ in the previous proof via
$
\tilde{M}_l = \PP_T M_l = \PP_T \MM_l^Z (Z).
$ and in particular, $\tilde{M}_l$ has at most rank two. Consequently
\begin{eqnarray*}
\| \tilde{M}_l - \EE \left[ \tilde{M}_l \right] \|_2
&\leq& \sqrt{2}\| \PP_T M_l \|_\infty + \| \PP_T \EE \left[ \MM_l^Z(Z) - \MM_l (Z) \right] \|_2 + \|\PP_T \EE \left[ \MM_l (Z) \right] \|_2\\
& \leq & 2^{3/2} \| M_l \|_\infty + \| \EE \left[ \MM_l - \MM_l^Z \right] \|_\op \| Z \|_2 
+ \frac{1}{L}\|\PP_T \left( Z + \tr (Z) \Id \right) \|_2\\
& \leq & \frac{8 b^2 \gamma \log d}{\nu^2 L} \| Z \|_2 \| D_l^2 \|_\infty + \frac{4b^4}{\nu^2L}d^{2-\gamma} \| Z \|_2
+ \frac{\| Z \|_2 + | \tr (Z) | }{L}  \\
& \leq & \frac{15 b^4 \gamma \log d}{\nu^2 L} \| Z\|_2 
 =:B,
\end{eqnarray*}
where we have used near-isotropy of $\MM_l$, 
the estimate of $\| M_l \|_\infty$ presented in \eqref{eq:apriori},
$\| \PP_T \Id \|_2 = \| X \|_2 =1 $
and $| \tr (Z) | \leq \| Z \|_1 \leq \sqrt{2} \| Z \|_2 = \sqrt{2}$, because $Z \in T$ has rank at most two.
For the variance, we estimate
\begin{eqnarray}
\EE \left[ \left\| \tilde{M}_l - \EE \left[ \tilde{M}_l \right] \right\|_2^2 \right]
&=& \EE \left[ \left\| \PP_T \left( \MM_l^Z (Z) - \EE \left[ \MM_l^Z (Z) \right] \right)\right\|_2^2 \right] \nonumber \\
& \leq & \EE \left[ \| \PP_T \MM_l (Z) \|_2^2 \right]
+ \EE \left[ \left\| \PP_T \left( \MM_l^Z (Z) - \MM_l (Z) \right) \right\|_2^2 \right] \nonumber \\
&+&  \left\| \PP_T \EE \left[ \MM_l^Z (Z) - \MM_l (Z) \right] \right\|_2^2 
+\left\| \PP_T \EE \left[ \MM_l (Z) \right] \right\|_2^2 \nonumber \\
& \leq & \EE \left[ \tr \left( \left( \PP_T \MM_l (Z) \right)^2 \right) \right]
+ \frac{1}{L^2} \left\| \PP_T \left( Z + \tr (Z) \Id \right) \right\|_2^2 \nonumber \\
&+& 2 \EE \left[ \left\| \MM_l (Z) - \MM_l^Z (Z) \right\|_\op^2 \right] \| Z \|_2^2 \nonumber \\
& \leq & \frac{60 b^8}{\nu^4 L^2} \| Z \|_2^2 + \frac{\| Z \|_2^2 + \tr (Z)^2 }{L^2}
+ \frac{8 b^8}{\nu^4 L^2} d^{4-\gamma} \| Z \|_2^2. \label{eq:vector_bernstein_aux1}
\end{eqnarray}
Applying $b^2 \geq \nu$, $\tr (Z)^2 \leq 2 \|Z \|_2^2 = 2$ and $d^{4-\gamma} \leq 1$ (because we choose $\gamma \geq 8$) 
allows us to upper-bound \eqref{eq:vector_bernstein_aux1}
by $71 b^8 / (\nu^4 L^2)$ and set
\begin{equation*}
\sum_{l=1}^L \EE \left[ \left\| \tilde{M}_l - \EE \left[ \tilde{M}_l \right] \right\|_2^2 \right]
\leq \frac{71 b^8}{\nu^4 L} \leq \frac{15  b^8 \gamma \log d}{\nu^4 L} =: \sigma.
\end{equation*}
Again, the last estimate is far from tight, but assures $\sigma^2 / B = b^4/ \nu^2 \geq 1$.
Applying the vector Bernstein inequality---Theorem \ref{thm:vector_bernstein}---for $t=3c/4$ yields the desired bound on the probability of $(E')^c$ occurring.

\end{proof}

We are now ready to construct a suitable approximate dual certificate in the sense of Definition \ref{def:dual_certificate}. 
The key idea here is an iterative procedure -- dubbed the \emph{golfing scheme} -- that was first established in \cite{gross_recovering_2011}
(see also \cite{candes_probabilistic_2011, kueng_ripless_2013, candes_masked_2013, gross_partial_2013}). 

\begin{proposition}		\label{prop:Y_construction}
Assume $d \geq 3$ and let  $\omega \geq 1$ be arbitrary. 
If the total number of $L$ of diffraction patterns
fulfills
\begin{equation}
L \geq C \omega \log^2 d,		\label{eq:dual_masks}
\end{equation}
then with probability larger than $1 - 5/6 \mathrm{e}^{-\omega}$, an approximate dual certificate 
$Y$ as in Definition \ref{def:dual_certificate} can be constructed using the golfing scheme. 
Here, $C$ is a constant that only depends on the probability distribution used to generate the random masks $D_l$. 
\end{proposition}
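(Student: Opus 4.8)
The plan is to build the certificate $Y$ by the \emph{golfing scheme} of \cite{gross_recovering_2011}, in its refined form, using Propositions~\ref{prop:dual_aux1} and~\ref{prop:dual_aux2} as the per-round engine. First I would split the $L$ diffraction patterns into $n$ disjoint batches $B_1,\dots,B_n$, with $n$ of order $\log d$ and batch sizes $m_i:=|B_i|$ to be fixed later. Reusing the truncated construction \eqref{eq:RRz} but restricted to the single batch $B_i$ and renormalised by $1/(\nu^2 d m_i)$, attach to $B_i$ a truncated partial measurement operator $\RR^{(i)}_Z$; by Lemma~\ref{lem:near_isotropic} the corresponding untruncated operator has expectation $\II+\Pi_{\Id}$, while Lemma~\ref{lem:RRz} controls the error caused by the truncation. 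Distinct batches use disjoint randomness, so $\RR^{(i)}_Z$ is independent of everything built from $B_1,\dots,B_{i-1}$.

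I would then run the iteration suggested by the single-shot ansatz \eqref{eq:ansatz}. Set $W_0:=X=xx^*$ and $Y_0:=0$, and for $i=1,\dots,n$ let
\begin{equation*}
  Y_i := Y_{i-1} + \RR^{(i)}_{W_{i-1}}(W_{i-1}) - \tr(W_{i-1})\,\Id ,
  \qquad
  W_i := X - \PP_T Y_i ,
\end{equation*}
so that $W_i = W_{i-1}-\PP_T\big(\RR^{(i)}_{W_{i-1}}(W_{i-1})-\tr(W_{i-1})\Id\big)$, and output $Y:=Y_n$. Since $X\in T$ and each correction lies in $T$, an immediate induction gives $W_i\in T$ for all $i$, so in round $i$ Propositions~\ref{prop:dual_aux1} and~\ref{prop:dual_aux2} apply \emph{conditionally} to the now-fixed matrix $Z=W_{i-1}$, which is measurable with respect to the earlier, independent batches. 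Proposition~\ref{prop:dual_aux2} then yields $\|W_i\|_2=\|\PP_T(\RR^{(i)}_{W_{i-1}}(W_{i-1})-W_{i-1}-\tr(W_{i-1})\Id)\|_2\le c_i\|W_{i-1}\|_2$, hence $\|W_i\|_2\le\prod_{j\le i}c_j$ since $\|W_0\|_2=\|x\|_{\ell_2}^2=1$; taking, say, $c_j=1/2$ and $n\ge\log_2(4b^2\sqrt{d}/\nu)$ forces $\|W_n\|_2=\|Y_T-X\|_2\le\nu/(4b^2\sqrt{d})$, the first requirement of Definition~\ref{def:dual_certificate}. For the complementary component, telescoping gives $\PP_T^\perp Y=\sum_{i=1}^n\PP_T^\perp\big(\RR^{(i)}_{W_{i-1}}(W_{i-1})-\tr(W_{i-1})\Id\big)$, and Proposition~\ref{prop:dual_aux1} bounds the $i$-th summand in operator norm by $t_i\|W_{i-1}\|_2\le t_i\prod_{j<i}c_j$; summing this essentially geometric series and choosing the truncation rates $t_i$ — kept constant for the first few rounds and allowed to grow afterwards, precisely because $\|W_{i-1}\|_2$ is already tiny there — makes $\|\PP_T^\perp Y\|_\infty\le 1/2$, the second requirement. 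Finally $Y$ sits in $\range\AA^*$ (up to a multiple of $\Id$, as already noted below \eqref{eq:ansatz}), since every term of $Y$ is a fixed linear combination of the $F_{k,l}$ and of $\Id$.

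What is left is probabilistic bookkeeping. Conditioning successively on $B_1,\dots,B_{i-1}$ and union-bounding over the $2n$ failure events — one instance of Proposition~\ref{prop:dual_aux1} and one of Proposition~\ref{prop:dual_aux2} per round — I would choose the batch sizes $m_i$ (hence the admissible $t_i$, $c_i$) so that each event has probability at most an appropriate share of $\tfrac{5}{6}\mathrm{e}^{-\omega}$ while $\sum_{i=1}^n m_i\le C\omega\log^2 d$. Here one uses that $\gamma$ in \eqref{eq:gamma} is dimension-independent, so the exponents in \eqref{eq:dual_aux1}--\eqref{eq:dual_aux2} have the shape $-\,\mathrm{const}\cdot\nu^4 m_i/(b^8\log d)$; with $n$ of order $\log d$ this is exactly what turns ``one $\log d$ from the number of rounds'' together with ``one $\log d$ from each per-round deviation bound'' into the advertised $\log^2 d$.

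I expect the real difficulty to be this last step: the simultaneous choice of the schedule $(m_i,t_i,c_i)$. One must stay inside the admissible ranges $t_i\ge 1/4$ and $1/(2\log d)\le c_i\le 1$ of the two propositions, keep the two constraints $\sum_i t_i\prod_{j<i}c_j\le 1/2$ and $\prod_{i}c_i\le\nu/(4b^2\sqrt{d})$ under control simultaneously, and verify that the union bound over all rounds closes with a constant $C$ depending only on $b$ and $\nu$ — in particular organising the two regimes in which the early rounds, where the residual is still of order one, are forced to use a constant truncation rate and hence carry the bulk of the measurement budget, while all later rounds can be made cheap. Once these constants and the split of the failure probability are arranged, the rest is just assembling the ingredients already provided by Propositions~\ref{prop:dual_aux1}--\ref{prop:dual_aux2}, Lemma~\ref{lem:RRz} and Definition~\ref{def:dual_certificate}.
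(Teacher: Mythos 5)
Your proposal correctly identifies the golfing iteration, the role of Propositions~\ref{prop:dual_aux1} and~\ref{prop:dual_aux2} as the per-round engine, the two-regime schedule (small $c$ for the first rounds, $c=1/2$ afterwards), and the conditioning structure that makes the batches independent. It also correctly reduces the problem to the choice of $(m_i,t_i,c_i)$. However, the probabilistic bookkeeping you propose — a plain union bound over the $2n$ per-round failure events, each forced to carry probability at most $\tfrac{5}{6\cdot 2n}\mathrm{e}^{-\omega}$ — cannot deliver the claimed mask count, and this is precisely the step that the paper handles differently.

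To see the obstruction, look at the tail of Proposition~\ref{prop:dual_aux2} with $c_i=1/2$: its exponent is $-\,\mathrm{const}\cdot\nu^4 m_i/(b^8\gamma\log d)$. For a round to fail with probability at most $\tfrac{5}{12n}\mathrm{e}^{-\omega}$ you must take
\begin{equation*}
m_i \;\gtrsim\; \frac{b^8\gamma\log d}{\nu^4}\,\bigl(\omega+\log n\bigr),
\end{equation*}
and with $n$ of order $\log d$ this forces $\sum_i m_i\gtrsim(\omega+\log\log d)\log^2 d$. For $\omega$ close to its minimal value $1$ and $d$ large, this exceeds $C\omega\log^2 d$ for any fixed $C$, so the two constraints you impose (``each event carries an equal share of $\tfrac{5}{6}\mathrm{e}^{-\omega}$'' and ``$\sum m_i\le C\omega\log^2 d$'') are mutually infeasible. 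The intuition that ``later rounds are cheap because the residual is tiny'' does not help here: the propositions bound deviations \emph{relative} to $\|Q_{i-1}\|_2$, so the per-round success probability is independent of how small the residual already is; the $c^2$ in the exponent of Proposition~\ref{prop:dual_aux2} remains the bottleneck.

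The paper avoids this by a repeat-until-success (oversampling) variant of golfing. Only the first two rounds — where $c$ must be as small as $1/\sqrt{2\log d}$ — get large batches of size $\sim\omega\gamma\log^2 d$ and are required to succeed. For the remaining rounds, the batch size is kept at $\sim\gamma\log d$, giving merely a \emph{constant} success probability (say $9/10$) per attempt; the algorithm draws fresh batches until it has accumulated $r\sim\log d$ successes, allowing a budget of $w=10\omega r$ attempts. A stochastic-dominance argument replaces the dependent success indicators by i.i.d.\ Bernoulli$(9/10)$ variables, and a one-sided Chernoff bound on the number of successes among $w$ trials produces the exponential-in-$\omega$ tail without ever needing each round to succeed with probability $1-O(\mathrm{e}^{-\omega}/\log d)$. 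The $\omega$-dependence is thus moved from the per-round batch size to the \emph{number} of allowed attempts, yielding $L\sim\omega\gamma\log^2 d+ (\omega\log d)\cdot(\gamma\log d)\sim\omega\gamma\log^2 d$ with no extraneous $\log\log d$. This is the missing idea in your proposal, and it is genuinely needed to close the gap between the union-bound estimate and the advertised $C\omega\log^2 d$.
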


To be concrete, the constant $C$ depends on the truncation rate $\gamma$ -- which we have fixed in (\ref{eq:gamma}) -- and  the a-priori bound $b$ and $\nu$ of the random variable $\ee$ used to generate the diffraction patterns $D_l$:
\begin{equation}
C = \tilde{C} \gamma \frac{b^8}{\nu^4} \log_2 \left( b^2 / \nu \right)
= \bar{C} \frac{b^8}{\nu^4} \log_2^2 \left( b^2 / \nu \right)  , \label{eq:C}
\end{equation} 
where $\tilde{C}$ and $\bar{C}$ are absolute constants.

\begin{proof}[Proof of Proposition \ref{prop:Y_construction}]

\RestyleAlgo{boxruled}
\begin{algorithm}\label{alg:dual}
\caption{Pseudo-code\protect\footnotemark\  that summarizes the randomized ``golfing scheme''  for constructing an approximate dual certificate in the sense of Definition \ref{def:dual_certificate}.}
\textbf{Input}:\newline
\begin{tabular}{lll}
	\qquad & $X \in H^d$ \quad &  $\#$ signal to be recovered \\
	& $w \in \mathbb{N}$ & $\#$ maximum number of iterations (after the first two steps)\\
	& $\left\{ L_i \right\}_{i=1}^{w+2} \subset   \mathbb{N}$\quad& $\#$ number of masks used in $i$th iteration \\
	& $r$  & $\#$ require $r$ ``successful'' iterations after the first two \\
	& &    $\#$ (i.e. iterations where we enter the inner \textbf{if}-block)

\end{tabular}

\ \\
 
\textbf{Initialize:}\newline
\begin{tabular}{lll}
	\qquad & $\mathbf{Y} = [\,]$ \quad & $\#$ a list of 
	matrices in $H^d$, initially empty \\
	& $\mathbf{Q} = [X]$ \quad & $\#$ a list of matrices in $T$,
	initialized to hold $X$ as its only element \\
	& i = 1 & $\#$ number of current iteration \\
	&  $\xi = [0,\dots,0]$ & $\#$ array of $w+1$ zeros; $\xi_i$ will be set to 1 if $i$th iteration succeeds

\end{tabular}

\ \\

\textbf{Body:}\newline
\For{ $1 \leq i \leq 2$}{
set $Q$ to be the last element of $\mathbf{Q}$ and $Y$ to be the last element of $\mathbf{Y}$ \newline
Sample $L_i$ masks independently according to \eqref{eq:D} and construct $\RR_Q$
according to Def. \ref{def:truncated_R}
\eIf{\eqref{eq:dual_aux1},\eqref{eq:dual_aux2} hold for $\RR_Q$ and $Q \in T$ with parameters $t = 1/8$, $c = 1/\sqrt{2 \log d}$}
{$\xi_i = 1 $ \newline
	$Y \leftarrow  \RR_Q Q - \tr (Q) \Id+ Y$, \quad append $Y$ to $\mathbf{Y}$
	\newline
	$Q \leftarrow X - \PP_T Y$, \quad append $Q$ to $\mathbf{Q}$
	$
	i \leftarrow i+1
	$}
{abort and report \emph{failure}}
}

\While{ $3 \leq i \leq w+2$ \bf{and} $\sum_{j=3}^{i} \xi_j \leq r $}{
set $Q$ to be the last element of $\mathbf{Q}$ and $Y$ to be the last element of $\mathbf{Y}$, \newline
sample $L_{i+2}$ masks independently according to (\ref{eq:D}); construct $\RR_{Q}$
according to 
Def.~\ref{def:truncated_R}.

\If{(\ref{eq:dual_aux1}), (\ref{eq:dual_aux2}) hold for $\RR_{Q}$ and $Q \in
T$ with parameters $t=\log d /4$, $c=1/2$}{
	$\xi_i = 1 $ \newline
	$Y \leftarrow  \RR_Q Q - \tr (Q) \Id+ Y$, \quad append $Y$ to $\mathbf{Y}$
	\newline
	$Q \leftarrow X - \PP_T Y$, \quad append $Q$ to $\mathbf{Q}$
	}
	$
	i \leftarrow i+1
	$
	}
\eIf{$\sum_{i=3}^{w+2} \xi_i = r$  }
	{report \emph{success} and output $\mathbf{Y},\mathbf{Q}$, $\xi$}
	{report \emph{failure}}
\end{algorithm}
\footnotetext{
	Similar to \cite{gross_partial_2013} we use use of pseudo-code for a compact presentation of this
	randomized procedure. However, the reader should keep in mind that
	the construction is purely part of a proof and should not be
	confused with the recovery algorithm (which is given in
	Eq.~(\ref{eq:convex_program})).
}

This construction is inspired by \cite{kueng_ripless_2013,
candes_probabilistic_2011} and \cite{adcock_generalized_2011}. 
As in
\cite{gross_recovering_2011}, our construction of $Y$ follows a
recursive procedure of $w$ iterations which can be summarized in the
pseudo-code described in Algorithm~\ref{alg:dual}.
It depends on a number of parameters -- $w, L_i, r$, c.f.\ Input
section of the algorithm -- the values of which
will be chosen below.
If this algorithm succeeds, it outputs three lists
\begin{equation*}
\mathbf{Y} = \left[ Y_1,\ldots, Y_{r+2} \right], \quad \mathbf{Q} = \left[ Q_0,\ldots,Q_{r+2} \right],
\quad \textrm{and} \quad \xi = [\xi_1,\dots,\xi_{w+2} ] .
\end{equation*}
They obey iterative relations of the following form (c.f. \cite[Lemma 14]{kueng_ripless_2013}):
\begin{eqnarray*}
Y &:=&
Y_{r+2} = \RR_{Q_{r+1}} ( Q_{r+1} ) - \tr (Q_{r+1} ) \Id + Y_{r+1}  	\\
&=& \cdots = \sum_{i=1}^{r+2}  \left( \RR_{Q_{i-1}} ( Q_{i-1}) - \tr \left( Q_{i-1}\right) \Id \right) 
\quad \textrm{and} 		\\
Q_i
&=& X - \PP_T Y_i
= \PP_T \left( Q_{i-1} + \tr (Q_{i-1} ) \Id - \RR_{Q_{i-1}} (Q_{i-1}) \right) 		\\
&=& \ldots = \prod_{j=1}^i \PP_T \left( \II + \Pi_{\Id} - \RR_{Q_{j-1}} \right) Q_0.
\end{eqnarray*}
We now set
\begin{equation*}
r =  \lceil \frac{1}{2} \log_2 d \rceil + \lceil \log_2 (b^2 / \nu ) \rceil + 1
\end{equation*}
This choice, together with the validity of properties (\ref{eq:dual_aux1}) and (\ref{eq:dual_aux2}) for $t = 1/8$, $c = 1/ \sqrt{2 \log d}$ in the first two steps and for $t = \log d / 4$, $c = 1/2$ 
in each remaining update ($Y_i \to Y_{i+1}$ and $Q_i \to Q_{i+1}$, respectively) together with $Q_0 = X$ then guarantee
\begin{eqnarray*}
\| Y_T - X \|_2
&=& \| Q_{r+2} \|_2
\leq \| Q \|_0 \frac{1}{2 \log d} \prod_{i=3}^{r+2} \frac{1}{2}
= \frac{1}{\log d} 2^{-(r+1)} \leq \frac{\nu}{4b^2 \sqrt{d}}, \\
\| Y_T^\perp \|_\infty
& \leq & \sum_{i=1}^{r+2} \left\| \PP_T \left( \RR_{Q_{i-1}} (Q_{i-1}) - \tr (Q_{i-1}) \Id \right) \right\|_\infty \\
& \leq & \frac{1}{8} \| Q_0 \|_2 + \frac{1}{8} \| Q_1 \| + \sum_{i=3}^{r+2} \frac{ \log d}{4} \| Q_{i-1} \|_2 \\
& \leq & \left( \frac{1}{8} + \frac{1}{8 \sqrt{2 \log d}} + \sum_{i=3}^{r+2} \frac{\log d}{4} 
\left( \frac{1}{\sqrt{2 \log d}} \right)^2 \prod_{j=1}^{i-2} \frac{1}{2} \right) \| Q_0 \|_2 \\
& \leq & \frac{1}{4} \sum_{i=0}^\infty 2^{-i} = \frac{1}{2}
\end{eqnarray*}
which are precisely the requirements (\ref{eq:dual_certificate}) on $Y$. 

What remains to be done now is to choose parameters $w$ and $\left\{
L_i \right\}_{i=1}^{w+2}$ such that the probability of the algorithm
failing is smaller than $\frac{5}{6}\mathrm{e}^{-\omega}$.
Recall that the $\xi_i$'s are Bernoulli random variables that indicate
whether the $i$-th iteration of the algorithm failed ($\xi_i = 0$) or
has been successful ($\xi_i = 1$).
The complete Algorithm \ref{alg:dual} fails exactly if one of the
first two iterations fails
\begin{equation}\label{eqn:1stfail}
	\xi_1 = 0\qquad \text{or} \qquad \xi_2 = 0
\end{equation}
or fewer than $r$ of the remaining ones succeed
\begin{equation}\label{eqn:2ndfail}
	\sum_{i=3}^{w+2} \xi_i < r.
\end{equation}


We start by estimating the probability of (\ref{eqn:1stfail})
occuring.
Setting
\begin{equation*}
	L_1 =  L_2 = C_5 \frac{b^8}{\nu^4} \omega \gamma \log^2 d
\end{equation*}
for a sufficiently large absolute constant $C_5$, 
and using 
the union bound over Propositions \ref{prop:dual_aux1}
and \ref{prop:dual_aux2} (for $Z = X $), one obtains
\begin{eqnarray}
& & \Pr \left[ 
	\xi_1 = 0
\right] \nonumber \\
&\leq&
\Pr \left[ \textrm{(\ref{eq:dual_aux1}) fails to hold in the first step} \right]
+ \Pr \left[ \textrm{ (\ref{eq:dual_aux2}) fails to hold in the first step} \right] \nonumber\\
& \leq &  \exp \left( - \frac{(1/\sqrt{2 \log d})^2 \nu^4 L_1}{C_3 b^8 \gamma \log d} + \frac{1}{4} \right)
+ d \exp \left( - \frac{4^{-1} \nu^4 L_1}{C_2 b^8 \gamma \log d} \right) 	
 \leq  \frac{1}{6} \mathrm{e}^{-\omega}.	\label{eq:union1}	
\end{eqnarray}
An analogous bound holds for the probability of $\xi_2=0$.

We turn to (\ref{eqn:2ndfail}).
Our aim is to bound $\Pr \left[ \sum_{i=3}^{w+2} \xi_i < r \right]$ by
a similar expression involving \emph{independent} Bernoulli variables
$\xi_i'$.  To achieve this, we observe
\begin{equation*}
\Pr \left[ \sum_{i=3}^{w+2} \xi_i < r \right] = \EE \left[ \Pr \left[ \xi_{w+2}< r - \sum_{i=3}^{w+1} \xi_i \vert \xi_{w+1},\ldots,\xi_3 \right] \right]. 
\end{equation*}
Conditioned on an arbitrary instance of $\xi_{w+1},\dots,\xi_3$, the variable $\xi_{w+2}$ follows a Bernoulli distribution with some parameter $p \left( \xi_w,\ldots,\xi_2 \right)$. 
Now note that if $\xi \sim \mathrm{B}(p)$ is a Bernoulli variable with parameter $p$, then for every fixed $t \in \mathbb{R}$, the probability
$\Pr_{\xi \sim \mathrm{B}(p)} \left[ \xi < t \right]$ is non-increasing as a function of $p$. 
This observation implies that the estimate
\begin{equation}
\Pr \left[ \sum_{i=3}^{w+2} \xi_i < r \right] \leq \Pr \left[ \xi_{w+2}' + \sum_{i=3}^{w+1} \xi_i < r\right] \label{eq:david2}
\end{equation}
is valid, provided that $\xi_{w+1}'$ is an independent $p'$-Bernoulli distributed random variable with
\begin{equation*}
p' \leq \min_{\xi_{w+1},\dots,\xi_3} p \left( \xi_{w+1},\dots,\xi_3 \right).
\end{equation*}
A combination of Propositions \ref{prop:dual_aux1} and \ref{prop:dual_aux2} provides a uniform lower bound on $p \left( \xi_{w+1},\dots,\xi_3 \right)$.
Indeed, setting $Z = Q_w$ and invoking them with
\begin{equation*}
L := C_4 \frac{b^8}{\nu^4} \gamma \log d
\end{equation*}
-- where $C_4$ is a sufficiently large constant -- assures a probability of success of at least $9/10$ for any $Q$. This estimate is in particular independent of $\xi_{w+1},\dots,\xi_3$.
Consequently, by choosing $p' = 9/10$ and $L_i = L$ for all $3 \leq i \leq w+2$, we can iterate the estimate (\ref{eq:david2}) and arrive at
\begin{equation}
\Pr \left[ \sum_{i=3}^{w+2} \xi_i < r \right]
\leq \Pr \left[ \xi_{w+2}' + \sum_{i=3}^{w+1} \xi_i < r \right]
\leq \cdots
\leq \Pr \left[ \sum_{i=3}^{w+2} \xi_i' < r \right],
\end{equation}
where the $\xi_i'$'s on the right hand side are independent Bernoulli variables with parameter $9/10$. A standard one-sided Chernoff bound
(e.g. e.g \cite[Section Concentration: Theorem 2.1]{habib_probabilistic_1998}) gives
\begin{equation*}
\Pr \left[ \sum_{i=3}^{w+2} \xi_i' \leq w (9/10 - t ) \right] \leq \mathrm{e}^{-2wt^2}.
\end{equation*}
Choosing $t = 9/10 - r/w$, we then obtain
\begin{eqnarray}
\Pr \left[ \sum_{i=3}^{w+2} \xi_i' < r \right]
& \leq & \Pr \left[ \sum_{i=3}^{w+2} \xi_i' \leq r \right]
= \Pr \left[ \sum_{i=3}^{w+2} \xi_i' \leq w \left(9/10 - t \right) \right] \nonumber \\
& \leq & \exp \left( - 2w \left( \frac{9}{10} - \frac{r}{w} \right)^2 \right).
\end{eqnarray}
Setting the number of iterations generously to 
\begin{equation*}
w = 10 \omega r = 10 \omega \left( \lceil \frac{1}{2} \log_2 d \rceil + \lceil \log_2 (b^2 / \nu ) \rceil + 1 \right)
\end{equation*}
guarantees
\begin{equation*}
2 w \left( \frac{9}{10} - \frac{r}{w} \right)^2 
\geq 20 \omega r \left( 8/10 \right)^2 \geq 12 \omega r \geq \omega + \log 2,
\end{equation*}
where we have used $\omega \geq 1$ in the first and last step. 
From this estimate we can conclude
\begin{equation}
\Pr \left[ \sum_{i=3}^{w+2} \xi_i < r \right] \leq \mathrm{e}^{-\omega - \log 2} = \frac{1}{2} \mathrm{e}^{-\omega} \label{eq:union2}
\end{equation}
which suffices for our purpose. 

The desired bound of $\frac{5}{6} \mathrm{e}^{-\omega}$ on the
probability of the algorithm failing now follows from taking the union
bound over (\ref{eq:union1}) and two times (\ref{eq:union2}). 

Finally we note that with our construction the total amount of masks obeys
\begin{eqnarray*}
L 
&=& \sum_{i=1}^{w+2} L_i
= 2 C_5 \frac{b^8}{\nu^4} \omega \gamma \log^2 d + 10 \omega \left( \lceil 0.5 \log_2 d \rceil + \lceil \log_2 (b^2 / \nu \rceil \right) C_4 \frac{b^8}{\nu^4} \gamma \log d \\
&\leq& \tilde{C} \gamma\frac{b^8}{\nu^4} \log_2 \left( b^2 / \nu \right)  \omega  \log^2 d
= C \omega \log^2 d
\end{eqnarray*}
for a sufficiently large absolute constant $\tilde{C}$ (recall that we have chosen $\gamma = 8 + \log_2 \left( b^2 / \nu \right)$ in (\ref{eq:gamma}))
and $C$ as in (\ref{eq:C}).
\end{proof}

We now have all the ingredients for the proof of our main result, Theorem \ref{thm:main_theorem}.

\begin{proof}[Proof of the Main Theorem]
With probability at least $1-5/6 \mathrm{e}^{-\omega}$, the construction of Proposition~\ref{prop:Y_construction} yields an approximate dual certificate provided that
the total number of masks $L$ obeys
\begin{equation*}
L \geq \bar{C} \frac{b^8}{\nu^4} \log_2^2 \left( n^2 / \nu \right) \omega \log^2 d,
\end{equation*}
where $\bar{C}$ is a sufficiently large constant. In addition, by Proposition \ref{prop:robust_injectivity}, one has (\ref{eq:robust_injectivity}) with probability at least $1-1/6 \mathrm{e}^{-\omega}$, potentially with an increased value of $\bar{C}$. Thus the result follows from Proposition \ref{prop:convex_geometry} and a union bound over the two probabilities of failure.
\end{proof}

%
%
%

\textbf{Acknowledgements:}

DG and RK are grateful to the organizers and participants of the
Workshop on Phaseless Reconstruction, held as part of the 2013
February Fourier Talks at the University of Maryland, where they were
introduced to the details of the problem. This extends, in particular,
to Thomas Strohmer.  RK is pleased to acknolwedge extremely helpful
advice he received from Johan Aberg  troughout the course of the
project. FK thanks the organizers and participants of the AIM
workshop ``Frame theory intersects geometry'', in particular Thomas
Strohmer, G\"otz Pfander,  and Nate Strawn, for stimulating
conversations on the topic of this paper.  The authors also
acknowledge inspiring discussions with Emmanuel Can\-d\`es, Yonina
Eldar, and David James. 
We would also like to thank the anonymous referees for extremely helpful comments and suggestions which allowed us to further improve the presentation of our results.

The work of DG and RK is supported by the Excellence Initiative of the
German Federal and State Governments (Grants ZUK 43 \& 81), by
scholarship funds from the State Graduate Funding Program of
Baden-W\"urttemberg, by the US Army Research Office under contracts
W911NF-14-1-0098 and W911NF-14-1-0133 (Quantum Characterization,
Verification, and Validation), and the DFG (GRO 4334 \& SPP 1798). FK
acknowledges support from the German Federal Ministry of Education and
Reseach (BMBF) through the cooperative research project ZeMat.

\bibliographystyle{IEEEtran}
\bibliography{phaseless_bib}

\section{Appendix}
\begin{lemma}
 \label{lem:lower}
 Consider as signal the first standard basis vector $e_1\in\CC^d$. Let $a_\ell$, $\ell=1, \dots, m=Ld$. Then for every $\delta>0$ there exists $c>0$ such that the following holds for the measurement vectors corresponding to $L<c \log_2 d$ masked Fourier measurements of $e_1$ as introduced in Section~\ref{sec:coded} with random masks $\epsilon_\ell$ drawn independently at random according to the distribution given in \eqref{eq:mask}. With probability at least $1-\delta$, there exists another signal that produces the exact same measurements. Thus no algorithm will be able to distinguish these signals based on their measurements.
\end{lemma}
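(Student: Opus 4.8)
The plan is to reduce the statement to an elementary balls-in-bins (coupon-collector) estimate, exploiting the fact that for the special signal $e_1$ a \emph{single} completely erased coordinate already destroys identifiability.

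First I would record the trivial but crucial structural observation: since $D_l e_1 = \epsilon_{l,1}\, e_1$, each masked signal $D_l e_1$ lives on the first coordinate only, and more generally, for \emph{any} $x \in \CC^d$ and any index $j$ with $\epsilon_{l,j} = 0$ for all $l = 1,\dots,L$, the diagonal matrices $D_l$ annihilate the $j$-th coordinate of $x$, so the full vector of measurements $(y_{k,l})_{k,l}$ does not depend on $x_j$. Call such an index a \emph{dead coordinate}. The key consequence is: if there exists a dead coordinate $j \in \{2,\dots,d\}$, then $x' := e_1 + e_j$ produces exactly the same measurements as $e_1$, while it is not a global-phase multiple of $e_1$ (its $j$-th entry is nonzero, whereas that of $e^{i\phi}e_1$ is zero). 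Hence it suffices to show that, under the claimed number of masks, a dead coordinate in $\{2,\dots,d\}$ exists with probability at least $1-\delta$.

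For this last point I would argue as follows. Under \eqref{eq:mask} one has $\Pr[\epsilon_{l,j}=0] = 1/2$; the $L$ masks are independent; and the events $\{j \text{ dead}\}$ for different $j$ depend on disjoint subfamilies of the $\epsilon_{l,i}$'s, hence are independent. Therefore $\Pr[\text{no dead coordinate in }\{2,\dots,d\}] = (1 - 2^{-L})^{d-1}$. When $L < c\log_2 d$ we have $2^{-L} > d^{-c}$, so this probability is at most $(1 - d^{-c})^{d-1} \le \exp\!\big(-(d-1)d^{-c}\big)$. Choosing the constant $c = c(\delta) > 0$ small enough makes this $\le \delta$ uniformly in the dimension: if $c\log_2 d \le 1$ then $L = 0$, there are no measurements, and the claim is vacuous; otherwise $d > 2^{1/c}$, whence $(d-1)d^{-c} \ge \tfrac12 d^{1-c} \ge \tfrac14 2^{1/c} \to \infty$ as $c \to 0$, so it is enough to pick $c$ with $\exp(-2^{1/c}/4) \le \delta$.

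I do not expect a genuine obstacle here: this is a short probabilistic counting argument, essentially the coupon-collector lower bounds of \cite{candes_power_2010, gross_recovering_2011} specialized to the coded-diffraction model. The only point requiring a little care is the order of quantifiers — $c$ must be fixed as a function of $\delta$ alone so that the estimate of the previous paragraph holds for \emph{every} admissible $d$ (including the borderline regime where $c\log_2 d$ is barely above $1$), which is why I would route through the deliberately crude bound $(d-1)d^{-c} \ge 2^{1/c}/4$ rather than optimize. (If one additionally insisted that the alternative signal have unit $\ell_2$-norm, the construction $e_1 + e_j$ would not suffice and a single dead coordinate would no longer be enough; but the statement as phrased only asks for ``another signal''.)
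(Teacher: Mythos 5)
Your proof is correct for the lemma as literally stated, and you flag its weak point yourself, but the route is genuinely different from the paper's and the difference matters. Structurally the two arguments are parallel — both reduce to a balls-in-bins count over the indices $j\in\{2,\dots,d\}$ and both arrive at the failure probability $(1-2^{-L})^{d-1}$ — yet they use different alternative signals. Where you look for a \emph{dead coordinate} ($\epsilon_{l,j}=0$ for all $l$, an event of probability $2^{-L}$) and take $x'=e_1+e_j$, the paper looks for a \emph{matching coordinate} (the absolute-value pattern $(|\epsilon_{1,j}|,\dots,|\epsilon_{L,j}|)$ coinciding with the one at position $1$, also probability $2^{-L}$) and takes $x'=e_j$. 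The matching-coordinate construction is strictly stronger: $e_j$ has the same $\ell_2$-norm as $e_1$, so the two signals remain indistinguishable even when the algorithm is handed the intensity $\|x\|_{\ell_2}^2$ — which is precisely the access model of Theorem~\ref{thm:main_theorem} and the feasibility program \eqref{eq:feasibility_program}. Your $e_1+e_j$ is ruled out the moment intensity is measured, so your lower bound does not actually apply in the regime where the paper's upper bound operates; since the matching-coordinate computation is no harder than the dead-coordinate one, it is the construction to prefer. Two smaller points in your favor: your direct computation $(1-2^{-L})^{d-1}$ is cleaner than the paper's appeal to the full coupon-collector bound, which collects all $2^L$ patterns when only one fixed pattern needs to be hit; and your explicit handling of the uniformity of $c$ over all $d$ (including the degenerate $L=0$ case) is more careful than the paper's brief remark.
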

\begin{proof}
 As $e_1$ as well as any other standard basis vector $e_\ell$ is $1$-sparse, their phaseless measurements corresponding to one mask will just consist of the entry-wise absolute values first (or $\ell$-th, respectively) column of the corresponding masked Fourier transform matrix. As all entries of the Fourier transform matrix are of unit modulus, the measurements of $e_\ell$ are hence completely determined by the vector $v_\ell$ consisting of the $\ell$-th entry of every mask. As a consequence, $e_1$ and $e_\ell$ produce the same measurements if the entries of $v_1$ and  $v_\ell$ have the same absolute value.
 There are $L$ masks, and each entry's absolute value can be either $0$ or $\sqrt{2}$. So there are $2^L$ possible choices for $|v_\ell|$. For each $\ell>1$, one of them is drawn uniformly at random. Hence by the coupon collector's problem, a $v_\ell$ with the same absolute values as $v_1$ appears again with high probabilty within the first $\Theta(L2^L)$ draws, where by increasing the constant, one can make the probability arbitrarily small.
 For $L<c \log_2(d)$, we obtain $L 2^L < c d^c \log_2(d)$, which for $c$ small enough is less than $d-1$. Thus there will exist another $v_\ell$ with $|v_\ell|=|v_1|$, which proves the lemma.
\end{proof}

\begin{proof}[Proof of Lemma \ref{lem:near_isotropic}]
We prove formula (\ref{eq:near_isotropic}) in a way that is slightly different from the proof provided in \cite{candes_masked_2013}. We show that the set of all possible $ D_l f_k$'s is in fact proportional to a 2-design and deduce \emph{near-isotropicity} of $\RR$ from this. We refer to \cite{gross_partial_2013} for further clarification of the concepts used here.
Concretely, for $1 \leq l \leq L$ we aim to show
\begin{equation}
\frac{1}{\nu^2  d}\sum_{k=1}^d \EE \left[ F_{k,l}^{\otimes 2} \right] = 2 P_{\Sym^2}, \label{eq:near_isotropic_aux1}
\end{equation}
where $P_{\Sym^2}$ denotes the projector onto the totally symmetric subspace of $\CC^d \otimes \CC^d$.
Near isotropicity of $\RR$ directly follows from
(\ref{eq:near_isotropic_aux1}) by applying \cite[Lemma
1]{appleby_group_2013} (with $\alpha = \beta =1$):
\begin{equation*}
\EE \left[ \RR \right] Z = \frac{1}{\nu^2  dL} \sum_{k=1}^d \sum_{l=1}^L \EE \left[ F_{k,l} \tr (F_{k,l} Z ) \right]
= \frac{1}{\nu^2  d} \sum_{k=1}^d \EE \left[ F_{k,1} \tr (F_{k,1} Z) \right] = (\II + \Pi_{\Id} ) Z .
\end{equation*}
So let us proceed to deriving equation (\ref{eq:near_isotropic_aux1}). 
We do this by exploring the action of the equation's left hand side  on a tensor product $ e_i \otimes e_j$ ($1 \leq i,j \leq d$) of two standard basis vectors in $\CC^d$.
Here it is important to distinguish two special cases, namely $i=j$ and $i \neq j$.
For the former we get by inserting standard basis representations
\begin{eqnarray*}
\frac{1}{\nu^2  d} \sum_{k=1}^d \EE \left[ F_k^{\otimes 2} \right] (e_i \otimes e_i)
&=& \frac{1}{\nu^2  d} \sum_{k=1}^d \EE \left[ \ee_i^2 \langle f_k, e_i \rangle ^2 D^{\otimes 2} (f_k \otimes f_k) \right]		\\
&=& \frac{1}{\nu^2 } \sum_{a,b=1}^d \EE \left[ \ee_i^2 \ee_a \ee_b \right] \left( \frac{1}{d} \sum_{k=1}^d \omega^{k(a+b-2i)} \right) (e_a \otimes e_b) 	\\
&=& \frac{1}{\nu^2 } \sum_{a,b=1}^d \delta_{(a \oplus b),(2i)} \EE \left[ \ee_i^2 \ee_a \ee_b \right] (e_a \otimes e_b), 
\end{eqnarray*}
where we have used (\ref{eq:kronecker}) and the fact that for odd
$d$, there is a multiplicative inverse of $2$ modulo $d$.
Now
$\EE[\ee_a]=\EE[\ee_b]=0$ 
implies that one obtains a non-vanishing summand only if $a=b$.
Therefore one in fact gets
\begin{equation*}
\frac{1}{\nu^2  d} \sum_{k=1}^d \EE \left[ F_k^{\otimes 2} \right] (e_i \otimes e_i)
= \frac{1}{\nu^2 } \sum_{a=1}^d \delta_{(2a),(2i)} \EE \left[ \ee_i^2 \ee_a^2 \right] (e_a \otimes e_b)
= \frac{1}{\nu^2 } \EE \left[ \ee_i^4 \right] (e_i \otimes e_i)
= 2 (e_i \otimes e_i),
\end{equation*}
where we have used the moment condition (\ref{eq:moment_constraint}) in the last step.
This however is equivalent to the action of $2 P_{\Sym^2}$ on symmetric basis states.

Let us now focus on the second case, namely $i \neq j$. A similar calculation then yields
\begin{eqnarray*}
\frac{1}{\nu^2  d} \sum_{k=1}^d \EE \left[ F_k^{\otimes 2} \right] (e_i \otimes e_j)
&=& \frac{1}{\nu^2 } \sum_{a,b=1}^d \EE \left[ \ee_i \ee_j \ee_a \ee_b \right] \delta_{(a+b),(i+j)} (e_a \otimes e_b).
\end{eqnarray*}
Again, $\EE [ \ee ] = 0$ demands that the $\ee$'s have to ``pair up''. Since $i \neq j$ by assumption, there are only two such possibilities, namely $(i=a,j=b)$ and $(i=b,j=a)$.
Both pairings obey the additional delta-constraint and we therefore get
\begin{equation*}
\frac{1}{\nu^2  d} \sum_{k=1}^d \EE \left[ F_k^{\otimes 2} \right] (e_i \otimes e_j)
= \frac{1}{\nu^2 } \EE \left[ \ee_i^2 \ee_j^2 \right] \left( e_i \otimes e_j + e_j \otimes e_i \right)
=  (e_i \otimes e_j) + (e_j \otimes e_i),
\end{equation*}
where we have once more used (\ref{eq:moment_constraint}) in the final step. This, however is again just the action of $2 P_{\Sym^2}$ on vectors $e_i \otimes e_j$ with $i \neq j$.
Since the extended standard basis $\left\{(e_i \otimes e_j) \right\}_{1 \leq i,j \leq d}$ forms a complete basis of $\CC^d \otimes \CC^d$, we can deduce equation (\ref{eq:near_isotropic_aux1}) from this.

\end{proof}

\begin{proof}[Proof of Proposition \ref{prop:convex_geometry}]
Let $X'$ be an arbitrary feasible point of (\ref{eq:convex_program}) and we decompose it as $X' = X + \Delta$, where $\Delta$ is a feasible displacement.
Feasibility then implies $\AA (X') = \AA (X)$ and consequently $\AA (\Delta) = 0$ must hold. The pinching inequality \cite{bhatia_matrix_1997}  (Problem II.5.4) 
now implies
\begin{equation*}
\| X' \|_1 = \| X + \Delta \|_1 \geq \| X \|_1 + \tr (\Delta_T) + \| \Delta_T^\perp \|_1
\end{equation*}
and $X$ is guaranteed to be the minimum of (\ref{eq:convex_program}) if
\begin{equation}
\tr (\Delta_T ) + \| \Delta_T^\perp \|_1 > 0 	\label{eq:convex_aux1}
\end{equation}
is true for any feasible displacement $\Delta$. Therefore it suffices to show that (\ref{eq:convex_aux1}) is guaranteed to hold under the assumptions of the proposition.
In order to do so, we combine feasibility of $\Delta$ with Proposition \ref{prop:robust_injectivity} and Lemma \ref{prop:robust_injectivity2} to obtain
\begin{equation}
\| \Delta_T \|_2 < \frac{2}{\sqrt{\nu^2 dL}} \| \AA (\Delta_T) \|_{\ell_2} 
= \frac{2}{\nu \sqrt{dL}} \| \AA (\Delta_T^\perp ) \|_{\ell_2} 
\leq \frac{2b^2 \sqrt{d}}{\nu} \| \Delta_T^\perp \|_2.	\label{eq:convex_aux2}
\end{equation}
Feasibility of $\Delta$ also implies $(Y, \Delta)=0$, because $Y \in \mathrm{range}(\AA^*)$ by definition. 
Combining this insight with (\ref{eq:convex_aux2}) and the defining property (\ref{eq:dual_certificate}) of $Y$ now yields
\begin{eqnarray*}
0
&=& ( Y, \Delta) = (Y_T - X, \Delta_T) + (X, \Delta_T ) + (Y^\perp_T, \Delta_T^\perp ) 		\\
&\leq & \| Y_T - X \|_2 \| \Delta_T \|_2 + \tr (\Delta_T) + \| Y_T^\perp \|_\infty \| \Delta_T^\perp \|_1 	\\
& < & \tr (\Delta_T) + \| Y_T - X \|_2 2 b^2 \sqrt{d} /\nu \| \Delta_T^\perp \|_2 + \| Y_T^\perp \|_\infty \| \Delta_T^\perp \|_1 	\\
& \leq & \tr (\Delta_T) + 1/2 \| \Delta_T^\perp \|_2 + 1/2 \| \Delta_T^\perp \|_1 		\\
& \leq & \tr (\Delta_T) + \| \Delta_T^\perp \|_1,
\end{eqnarray*}
which is just the optimality criterion (\ref{eq:convex_aux1}). 
\end{proof}

\end{document}